\g@addto@macro\bfseries{\boldmath}
\newcommand{\bra}[1]{\langle #1|}
\newcommand{\ket}[1]{|#1\rangle}
\newcommand{\ketbra}[2]{\ket{#1}\!\bra{#2}}
\newcommand{\e}{\mathrm{e}}
\newcommand{\I}{\mathrm{i}}
\newcommand{\Tr}{\mathrm{Tr}}
\theoremstyle{definition}
\newtheorem{theorem}{Theorem}
\newtheorem{lemma}{Lemma}
\newtheorem{corollary}[lemma]{Corollary}
\renewcommand{\qedsymbol}{$\blacksquare$}
\renewcommand{\qedsymbol}{\unskip\nobreak\quad\qedsymbol}
\renewcommand{\qedsymbol}{$\blacksquare$}
\DeclarePairedDelimiter{\ceil}{\lceil}{\rceil}
\begin{document}

\title{Reducing classical communication costs in multiplexed quantum repeaters\\using hardware-aware quasi-local policies}

\author{Stav Haldar}\email{stavhaldar@gmail.com}
\affiliation{Hearne Institute for Theoretical Physics, Department of Physics and Astronomy, Louisiana State University, Baton Rouge, Louisiana 70803, USA}

\author{Pratik J. Barge}
\affiliation{Hearne Institute for Theoretical Physics, Department of Physics and Astronomy, Louisiana State University, Baton Rouge, Louisiana 70803, USA}

\author{Xiang Cheng}
\affiliation{Fang Lu Mesoscopic Optics and Quantum Electronics Laboratory, Department of Electrical and Computer Engineering, University of California, Los Angeles, California 90095, USA}

\author{Kai-Chi Chang}
\affiliation{Fang Lu Mesoscopic Optics and Quantum Electronics Laboratory, Department of Electrical and Computer Engineering, University of California, Los Angeles, California 90095, USA}

\author{Brian~T.~Kirby}
\affiliation{DEVCOM Army Research Laboratory, Adelphi, MD 20783, USA}
\affiliation{Tulane University, New Orleans, LA 70118, USA}

\author{Sumeet Khatri}
\affiliation{Dahlem Center for Complex Quantum Systems, Freie Universit\"{a}t Berlin, 14195 Berlin, Germany}

\author{Chee Wei Wong}
\affiliation{Fang Lu Mesoscopic Optics and Quantum Electronics Laboratory, Department of Electrical and Computer Engineering, University of California, Los Angeles, California 90095, USA}

\author{Hwang Lee}
\affiliation{Hearne Institute for Theoretical Physics, Department of Physics and Astronomy, Louisiana State University, Baton Rouge, Louisiana 70803, USA}

\date{\today}

\begin{abstract}
    Future quantum networks will have nodes equipped with multiple quantum memories, allowing for multiplexing and entanglement distillation strategies for long-distance entanglement distribution. In this work, we focus on \textit{quasi-local} policies for multiplexed quantum repeater chains. In fully-local policies, nodes use the knowledge of only their own states, whereas more efficient global policies use knowledge of the entire network state. The classical communication costs of using this knowledge have not been explored in existing literature. We show that quasi-local policies not only obtain improved performance over local policies, but also reduce classical communication costs considerably.
    Our policies also outperform the widely studied nested purification and doubling policy in practical parameter regimes. We identify parameter regimes where distillation is useful and address the question:~``Should we distill before swapping, or vice versa?'' Finally, we propose an implementation scheme for a multiplexed repeater chain, experimentally demonstrate the key element, a high-dimensional biphoton frequency comb, and evaluate its anticipated performance using our multiplexing-based policies.
\end{abstract}

\maketitle

\section*{Introduction}
\label{sec:intro}

The quantum internet~\cite{WEH18,Dowling_book2,VanMeter_book} has the potential to revolutionize current computation, communication, and sensing technologies by enabling exchange of quantum data. Numerous significant applications have already been recognized, encompassing areas such as secure quantum communication~\cite{BB84,BBC+93,XXQ+20}, distributed quantum computation~\cite{CEHM99,barz2012demonstration}, and distributed quantum sensing~\cite{ge2018distributed,komar2014quantum, proctor2018multiparameter}, to name a few. However, enormous hardware improvements are needed before a practical quantum internet is realized~\cite{heshami2016quantum,awschalom2021interconnects}. Currently, a significant challenge lies in effectively distributing entanglement amongst network nodes using quantum repeaters, with the goal of attaining high fidelities and low waiting times, particularly over significant distances. Recent, state-of-the-art experiments have been performed for only a handful of nodes over short distances~\cite{PHB+21, hermans2022teleportation}. This challenge stems from the delicate nature of quantum systems, leading to issues like photon losses, imperfect measurements, and quantum memories with short coherence times. Since comprehensive, fault-tolerant quantum error correction has yet to be realized to overcome these obstacles, it becomes important to explore how much we can overcome these limitations through other means. This problem of designing effective repeater protocols for entanglement distribution using noisy, imperfect quantum hardware has been addressed in multiple analytical~\cite{BDC98,DBC99,memory_insensitive,BPv11,KKL15,rozpedek2018parameterregimesrepeater,DKD18,KMSD19,rozpedek2019neartermrepeaterNV,VK19,SSv19,LCE20,VGNT20,Kha21b,coopmans2021thesis,kamin2022exactrateswapasap,khatri2022networkMDP,CBE21,DT21,sadhu2023practical} and numerical~\cite{YLX+19,netsquid,sequence,wallnofer2022ReQuSim,quisp,STCMW20,BAKE20,reiss2022deep,IVSW22,haldar2023fastreliable} studies, among others; see also Refs.~\cite{SSR+11,CCT+20,CCvM20,ABC21,MPD+22,ICM+22,azuma2023repeatersRMP} for reviews.

In order to make the best use of today's noisy, imperfect quantum hardware, protocols/policies for entanglement distribution should be \textit{hardware-aware}. This means tailoring the elementary link generation and entanglement swapping policies to the hardware, instead of employing widely-studied but hardware-agnostic policies such as the doubling~\cite{BDC98,DBC99} and the ``swap-as-soon-as-possible''~\cite{coopmans2021thesis,kamin2022exactrateswapasap} policies~\footnote{Another consideration is tailoring the number and locations of repeater stations and entanglement sources to the available hardware, which are architectural questions that have have been deeply studied; see, e.g., Refs.~\cite{VLMN09,RFT+09,VanMeter_book,MLK+16,JKR+16,RYG+18}.}. Determining optimal hardware-aware policies for elementary link generation and entanglement swapping has received heightened attention only recently~\cite{WMDB19,Kha21b,Kha21,SvL21,reiss2022deep,IVSW22,haldar2023fastreliable,cacciapuoti2023entanglement}.

At the same time, for the first-generation quantum repeaters that we consider here, it is well known that the two-way end-to-end classical communication involved makes them rather slow as the end-to-end distance increases, when compared to second- and third-generation repeaters, which use quantum error-correction~\cite{MATN15,MLK+16}. Nevertheless, first-generation quantum repeaters are arguably the most amenable to implementation in the near future. Therefore, it is of interest to develop policies for elementary link generation and entanglement swapping that are not only hardware-aware, but also minimize the amount of classical communication between nodes in the network.

The amount of classical communication in a network is tied to the amount of knowledge nodes have of each other's states when making their decisions regarding elementary link generation and entanglement swapping. Indeed, the more knowledge nodes need to make use of, the more classical communication is required to attain that knowledge. The decisions that nodes make, and the amount of knowledge needed to make the decisions, is dictated by the policy being used. In particular, if a policy calls for the nodes to make use of full, global knowledge of the network---meaning that every node should know the state of every other node at all times---then classical communication over the entire span of the network is required in every time step of the protocol. On the other extreme, for fully local policies, in which nodes need to have knowledge only of their own states, only one round of end-to-end classical communication is required at some prescribed final time. In the middle lie \textit{quasi-local} policies, which we consider in this work. These policies use to their advantage the realization that nodes only need access to knowledge of the connected portion of the chain they are part of. Disconnected regions of the network can determine their actions independently. Thus, barring only a few time steps when long links are present in the network, spanning lengths on the order of the size of the network, full end-to-end classical communication is superfluous and wasteful. Previous works~\cite{CRDW19,kolar2022adaptive,inesta2023continuous} have considered quasi-local policies in the context of continuous entanglement distribution, while we consider on-demand entanglement distribution. To the best of our knowledge, a thorough analysis of end-to-end waiting times and fidelities, as a function of the amount of knowledge nodes have of each other's states, and taking classical communication costs into account, has not been conducted. Recent works~\cite{IVSW22,haldar2023fastreliable} have highlighted the importance of global knowledge in improving waiting times and fidelities, but without fully accounting for the cost of classical communication. We are therefore motivated to ask our first question:
\begin{enumerate}
    \item[(Q1)] How are waiting times and fidelities affected by the amount of classical communication being performed in a network? Are the benefits of additional network knowledge, beyond that of a fully local policy, negated by the increase in classical communication?
\end{enumerate}
In this work, we consider this question in the context of entanglement distribution in linear networks (also called ``repeater chains'') with multi-memory quantum repeaters and multiplexing~\cite{BDC98,DBC99,ladd2006hybridrepeater,memory_insensitive,RPL09,RFT+09,munro2010multiplexing,sheng2010deterministicpurification,sheng2010deterministicpurification_b,simon2010temporalmultiplexing,bratzik2013repeatersdistill,sheng2013hybridpurification,sinclair2014spectralmultiplexing,zhou2020purificationresidualentanglement,hu2021longdistancepurification,dhara2021subexponential,DT21,dhara2022multiplexed,huang2022deterministicpurification,ecker2022remotepolariziationentanglement,chakraborty2022spectralmultiplexrepeater,milligen2023timemultiplexrouting,zang2023repeaterpurification}; see Fig.~\ref{fig:schematic}. In the multiplexing approach, simultaneous elementary link generation attempts can be made, and when performing entanglement swapping~\cite{BBC+93,ZZH93}, it is possible to go beyond the trivial parallel approach and perform cross-channel entanglement swapping.

Many of the aforementioned works on multi-memory quantum repeaters with multiplexing have made use of the hardware-agnostic ``nested purification and doubling swapping'' protocol devised in the seminal works~\cite{BDC98,DBC99} on quantum repeaters. Keeping in mind our goal of developing hardware-aware policies, we therefore ask:
\begin{enumerate}
    \item[(Q2)] Is the nested purification and doubling swapping policy optimal, particularly in the practically-relevant parameter regimes of low coherence times and low elementary link success probabilities?
\end{enumerate}
In the multiplexing approach, there is also the possibility to perform entanglement purification/distillation~\cite{BBP96,bennett1996concentrating,BDSW96,yan2023advancespurification}. Distillation introduces additional classical communication overhead. We are therefore motivated to ponder whether the potential benefits of distillation on end-to-end fidelities are worth the additional classical communication overhead. Specifically, we ask:
\begin{enumerate}
    \item[(Q3)] In what parameter regimes is it beneficial to perform entanglement distillation? What is the best way to combine entanglement swapping with entanglement distillation---should we distill first before swapping, or vice versa?
\end{enumerate}

\begin{figure*}
    \centering
    \includegraphics[width = 0.75\textwidth]{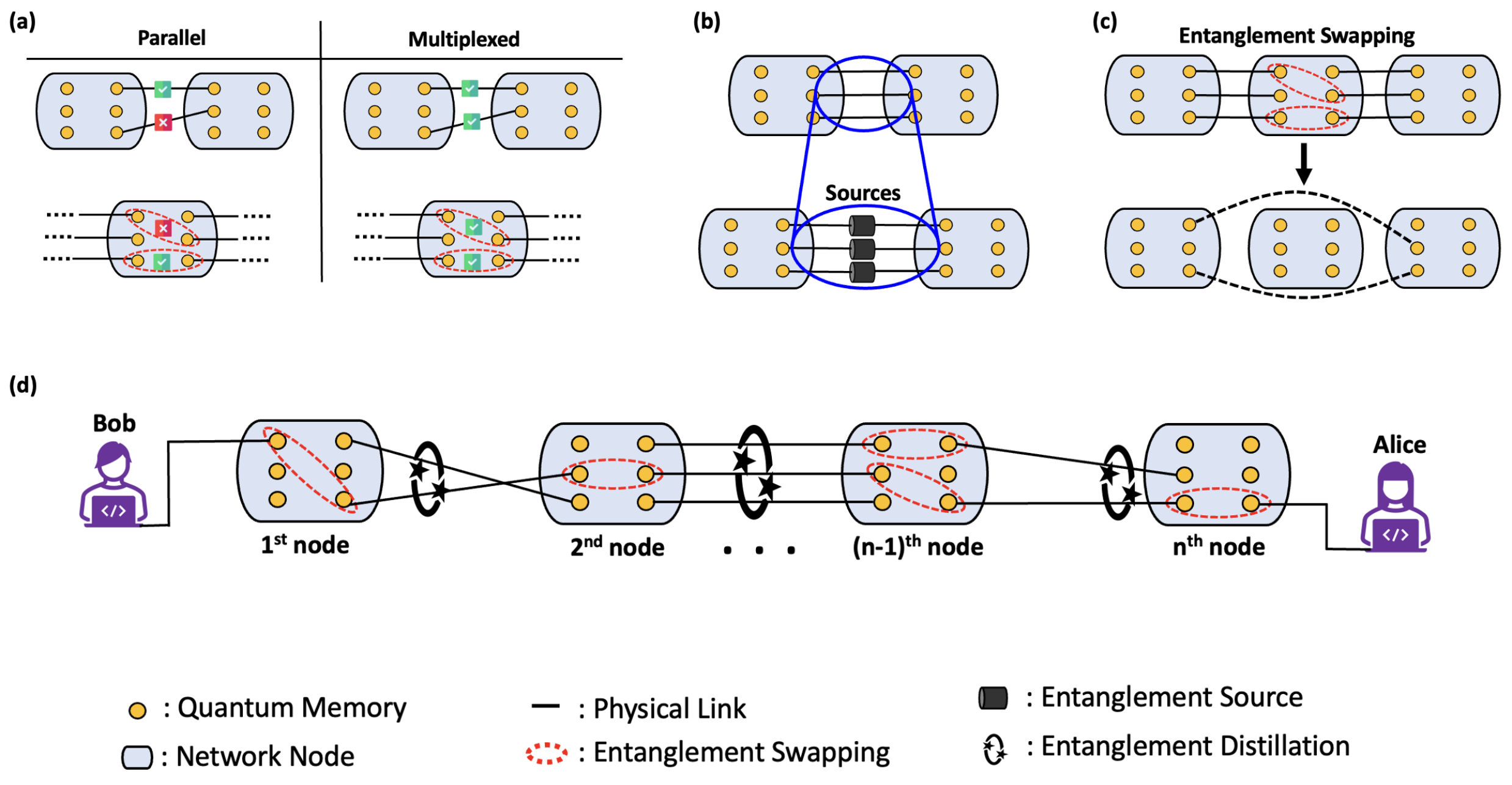}
    \caption{\textbf{Multiplexed entanglement distribution.} (a) In the multiplexing approach to entanglement distribution, it is possible to perform cross-channel elementary link generation and entanglement swapping. (b) Elementary links are created by multiple entanglement generation sources associated with every channel that connects different pairs of quantum memories. (c) Non-neighboring nodes are connected by virtual links generated by performing entanglement swapping operations between neighboring active quantum memories. (d) A linear chain of quantum repeaters with multiple quantum memories capable of establishing multiple physical links, entanglement distillation and entanglement swapping.
    }
    \label{fig:schematic}
\end{figure*}

In this paper, we address the three questions above by extending the framework from Ref.~\cite{haldar2023fastreliable} for modeling quantum repeater chains in terms of Markov decision processes (MDPs) to the case of multi-memory quantum repeater chains. We define quasi-local multiplexing-based policies, which are generalizations of the so-called ``swap-as-soon-as-possible'' (\textsc{swap-asap}) policy~\cite{coopmans2021thesis,kamin2022exactrateswapasap}, to the multiplexing setting. For near-term quantum networks with imperfections such as memories with limited coherence time, probabilistic entanglement swapping, etc., classical communication (CC) costs are considerable, but usually ignored in existing literature. We incorporate the CC costs associated with our policies, and show that these quasi-local policies outperform fully-local policies that do not require any CC. We also show that our policies can outperform the doubling policy in practically relevant parameter regimes. Next, we also add entanglement distillation to our model, and address several policy questions. Finally, we consider a proof-of-principle experimental realization of a multiplexed quantum network equipped using a high-dimensional bi-photon frequency comb (BFC) and assess its performance.

\section*{Results and discussion}
\subsection*{Overview of results}
Our main results and contributions comprise four parts.
\begin{enumerate}
    \item \textit{Quasi-local policies for multiplexed quantum repeaters.}
        
        We define our quasi-local multiplexing-based policies. These policies are generalizations of the so-called ``swap-as-soon-as-possible'' (\textsc{swap-asap}) policy~\cite{coopmans2021thesis,kamin2022exactrateswapasap} to the multiplexing setting. We refer to them as \emph{strongest neighbor} (SN) \textsc{swap-asap}, \emph{farthest neighbor} (FN) \textsc{swap-asap}, and \emph{random} \textsc{swap-asap}. These policies are distinguished by how they pair memories within nodes for the purposes of entanglement swapping. Intuitively, the SN \textsc{swap-asap} policy prioritizes the creation of high-fidelity end-to-end links, and FN \textsc{swap-asap} prioritizes lowering the waiting time for the creation of end-to-end links. These policies are also ``quasi-local'', requiring some knowledge of the network state but not full, global knowledge.
    
    \item \textit{The impact of classical communication costs.}
    
        In a realistic entanglement distribution setting with imperfect memories, considering classical communication (CC) costs is imperative in order to provide a realistic performance analysis. In \textit{Performance evaluation}, we incorporate the CC costs associated with our policies, and thereby address (Q1). In particular, we show that our multiplexing-based FN and SN \textsc{swap-asap} policies outperform fully-local policies that do not require any CC. We thus show that the benefit gained by acquiring information about the network's state is retained even when the communication costs of acquiring such information is accounted for.
        
        We also address (Q2) by showing that the FN and SN \textsc{swap-asap} policies can outperform the doubling policy in parameter regimes corresponding to high channel losses, small coherence times of quantum memories, and a small number of multiplexing channels and/or a large number of nodes.

    \item \textit{Policies with entanglement distillation.}
    
        In \textit{Distillation-based policies}, we also add entanglement distillation to our FN and SN \textsc{swap-asap} policies, and then address (Q3). We find that if the fresh elementary links have high fidelity, using entanglement distillation turns out to be detrimental. On the other hand, when fresh elementary links have low fidelity, and more generally in resource-constrained scenarios, entanglement distillation can be more useful. Nonetheless, even in such cases, we observe examples in which distillation can in fact be detrimental.
        
        In terms of the relative ordering of distillation and swapping, if links have low generation probability or low initial fidelity, then distilling first is more advantageous. When the elementary link generation probability is high, it is better to swap first and then distill.

    \item \textit{Design and analysis of real-world repeater chains.}
    
        In \textit{Proposed Experimental implementation}, we outline a proof-of-principle experimental realization of a quantum network equipped with multiplexing policies using a high-dimensional bi-photon frequency comb (BFC), and we map the relevant experimental parameters to their counterpart parameters within our modeling framework. Then we assess the performance of our multiplexing-based policies within these parameter regimes for two quantum memory platforms, diamond vacancy and rare-earth metal ions. In particular, we consider the number of repeater nodes needed to achieve a desired end-to-end waiting time and fidelity over a distance of 100~km. 
        
\end{enumerate}

\subsection*{Linear chain quantum network with multiple channels}
\label{sec:model}

Throughout this work, we consider entanglement distribution in a linear chain network of quantum repeaters. In this section, we present our theoretical model for entanglement distribution in such networks; see also Fig.~\ref{fig:schematic}.

\begin{itemize}
    \item  \textit{Nodes:} The linear chain is made up of $n$ nodes. Every node contains multiple quantum memories. Specifically, every node has $2n_{ch}$ quantum memories.
    Given the linear nature of our networks, $n_{ch}\in\{1,2,\dotsc\}$ corresponds to the number of channels connecting the nearest-neighbor nodes.  (In Fig.~\ref{fig:schematic}, $n_{ch}=3$.) Every memory has a finite coherence time of $m^{\star}\in\{0,1,\dotsc\}$ time steps, which specifies the maximum number of (discrete) times steps that qubits can be stored in the memory.

    \item \textit{Elementary and virtual links:} Entanglement sources establish physical/elementary links between nearest-neighbor quantum memories with probability $p_{\ell} \in [0,1]$. A probabilistic Bell state measurement creates a virtual link between distant quantum memories with success probability $p_{sw} \in [0,1]$. These values are fundamentally limited by the technology used to create the system, e.g., linear optics makes $p_{sw}\leq 0.5$, and $p_{\ell}$ is determined by properties of the source, loss characteristics of the sources, etc. These considerations are relevant for our proposed experimental implementation, and we also provide details on how to determine $p_{\ell}$ and $m^{\star}$ from physical properties of the system.

    We keep track of the fidelity of both elementary and virtual links by assigning every link an age. The age $m$ of a link starts from $m=m_0$, when it is first created, with $m_0$ being a measure of its initial fidelity with respect to a perfect Bell state. The age then increases in discrete steps, such that $m \in\{0,1,2,3...\}$. Once the age of the link reaches $m^{\star}$, the link is discarded.  
    Throughout this work, we consider a particular Pauli noise model for decoherence of the memories (see ``Methods''), such that the fidelity $f(m)$ of a link that has age $m$ is given by
    \begin{equation}\label{eq-fidelity_function_main}
        f(m) = \frac{1}{4}(1 + 3\e^{-\frac{m}{m^\star}}).
    \end{equation}

    The age of a link formed by a successful entanglement swapping operation is determined by adding the ages of the two corresponding links: if $m_1$ and $m_2$ are the ages of the two links, then if the entanglement swapping is successful the age of the new link is $m'=m_1+m_2$. This ``addition rule'' for the age of virtual links is a direct consequence of the Pauli noise model that we use, and a proof can be found in Ref.~\cite[Appendix~C]{haldar2023fastreliable}.

    \item \textit{Entanglement distillation:} Multiple low-fidelity links between any two nodes can be distilled with probability $p_{ds} \in [0,1]$ to form a higher-fidelity link. Throughout this work, we consider the BBPSSW protocol~\cite{BBP96}, 
    which allows for distillation of two entangled links to one. We provide a summary of the protocol in ``Methods''. Briefly, if $f_1$ and $f_2$ are the fidelities of the two links being distilled, then the success probability $P_{\text{distill}}(f_1,f_2)$ of the protocol and the resulting fidelity $F_{\text{distill}}(f_1,f_2)$ are given by
    \begin{align}
        P_{\text{distill}}(f_1,f_2)&=\frac{8}{9}f_1f_2-\frac{2}{9}(f_1+f_2)+\frac{5}{9},\label{eq-distill_p}\\
        F_{\text{distill}}(f_1,f_2)&=\frac{1-(f_1+f_2)+10f_1f_2}{5-2(f_1+f_2)+8f_1f_2}.
    \end{align}
    We show in ``Methods'' that the age after distillation with the BBPSSW protocol is given by the following formula:
    \begin{equation}
            m'=\left\lceil m^\star\log\left(\frac{15-6(f_1+f_2)+24f_1f_2}{32f_1f_2 - 2(f_1+f_2)-1}\right)\right\rceil,\label{eq-age_distilled}
    \end{equation}
    where $f_1\equiv f(m_1)$, $f_2\equiv f(m_2)$, $m_1$ and $m_2$ are the ages of the two links being distilled, the function $m\mapsto f(m)$ is defined in Eq.~\eqref{eq-fidelity_function_main}, and $\lceil x \rceil$ is the smallest integer greater than or equal to $x$.

\end{itemize}

Entanglement distribution protocols progress in a series of discrete time steps, based on the Markov decision process model developed and used in Ref.~\cite{haldar2023fastreliable}, which we refer the reader to for further details on the model. In every time step, the following events occur.
\begin{enumerate}
    \item Check the number of active links to the right of every node (except for the right-most node). If there are inactive links, request the corresponding elementary link.
    
    \item Check the number of active links to the left and right of every node, except for the end nodes. If more than one link on either side is active, rank them based on the policy, as described in detail in \textit{Quasi-local multiplexing policies} below. If the doubling policy is being used, pair the links for entanglement swapping only if they are the same length. Perform the entanglement swapping operations in the ranked order. Entanglement swapping is attempted only if the sum of the ages of the two links is less than $m^\star$. 
    
    \item If entanglement distillation is part of the policy, attempt to distill all links between the same two nodes until either one or no link survives (also see Supplementary Note 2). If the policy calls for distilling first, then swapping, interchange steps 2 and~3.
    
    \item Increase the age of all active links by one time unit, accounting for decoherence. If classical communication (CC) overheads are to be accounted for, then add the CC time to the age of every active link; see below for details. Discard any link which has age greater than $m^\star$.
\end{enumerate}

The time steps that we consider can be converted into physical times, and also the ages can be converted to fidelity values, by incorporating the network hardware and design parameters. Examples of such a translation are given in \textit{Proposed experimental implementation} and further details are provided in ``Methods''.

\subsubsection*{Classical communication}
\label{subsec:CC_rules}
Since entanglement distribution at the elementary link level is not deterministic, it must be heralded via classical signals. This heralding sets a natural time scale for entanglement distribution through a network. The heralding time is equal to the classical communication time between two adjacent nodes. For the purposes of our simulation, this heralding time becomes the duration of one time step. Since entanglement swapping is also often probabilistic, and entanglement distillation requires two-way classical communication \cite{bennett1996concentrating} between nodes, two distinct classes of entanglement distribution approaches can be envisaged.
\begin{itemize}
\item A \emph{local} approach, where all elementary link generation and entanglement swapping attempts are made agnostic to the success or failure of prior swaps. This would then mean that all elementary links (once generated successfully) are retained up to the cutoff time. When an entanglement swap has failed, resulting in both participating links to become inactive, the surrounding nodes must act as if the link is still active until it has reached its expected cutoff. All classical communication about successful swaps is then done at the end of the protocol. See ``Methods'' for details about how the end of the protocol is established. Such an approach is fully local and therefore all policy decisions are taken by individual nodes, without any collaboration or exchange of information. The benefit is that no classical communication overheads exist except for heralding of elementary links, and therefore every time step has duration equal to the heralding time only. The cost is instead paid by having to wait for a link to reach its cutoff time, even though it might have become inactive long before that. Distillation can be only included for freshly generated elementary links in a fully local scheme, since as links get older their lengths cannot be predicted locally (since results of swaps are not known).

\item A \emph{global} approach, on the other hand, allows end-to-end classical communication amongst all nodes in every time step, such that the duration of every time step is equal to $(n-1) \times \Delta t$, where $\Delta t$ is the CC time across an elementary link. The waiting time is subsequently largely dominated by the CC overheads, putting a huge burden on the coherence time requirements of the memories. The benefit, of course, is that all nodes can now communicate with each other and take decisions collaboratively and in an adaptive fashion.
\end{itemize}

In previous works such as Refs.~\cite{IVSW22,haldar2023fastreliable}, global policies were shown to optimize the average waiting time and fidelity; however, classical communication costs were not included. Furthermore, in Ref.~\cite{haldar2023fastreliable}, a ``quasi-local'' approach to entanglement distribution policies was proposed, in which there is knowledge of the network state up to some length scales only, and not globally. In this case, an advantage over fully local policies could still be obtained. We now investigate in this work whether the advantages in waiting time and fidelity gained by global and quasi-local knowledge and collaboration can be retained when CC overheads are taken into account.

Below is a summary of how we have taken classical communication into account in our model~\footnote{Let us make a brief remark. In a full simulation, such as those which can be done via quantum network simulators such as NetSquid~\cite{netsquid}, QuISP~\cite{quisp}, SeQUeNCe~\cite{sequence}, etc., time stamps are assigned to each quantum operation and tasks are queued, such that classical communication costs are intrinsically kept track of. Our simulations focus on optimization of policies, similar in spirit to \emph{rule sets} used by many of the above mentioned simulators; see also Refs.~\cite{MDV19,Mat19}.
}.
\begin{itemize}
    \item We approximate the CC time by adding $t_{cc}$ time steps to every network evolution step, where $t_{cc}$ is equal to the length of the longest link (number of nodes) involved in any entanglement swap in that step times the CC time between two adjacent nodes of the chain. This allows enough time for the classical communication for all entanglement swaps attempted in that MDP step, and at the end of such a CC-accounted MDP step, all nodes are aware of which nodes they are connected to and the ages of those links (which can also be classically communicated). In most cases this estimate acts as an upper bound to the CC cost, since in practice nodes could perform the next required action as soon as they receive the CC tagged for or relevant to them. See ``Methods'' for details.
    
    \item CC costs associated with entanglement distillation are added based on the length of the longest links that are distilled in a time step. For local policies, distillation is only allowed for freshly prepared elementary links (if they are not perfect), and thus the CC cost is always equal to one time step, if distillation is performed, between any two adjacent nodes after heralded entanglement generation.
\end{itemize}

\begin{figure*}
    \centering
    \includegraphics[width = 0.9\textwidth]{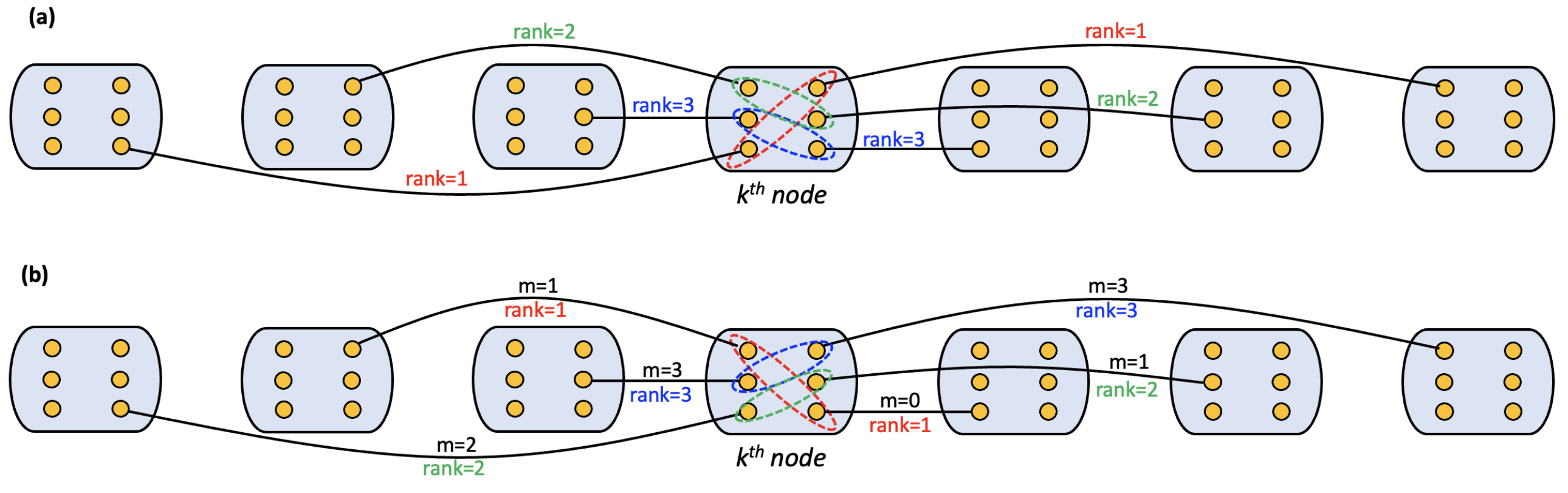}
    \caption{\textbf{Our proposed entanglement swapping policies for multiplexed repeater chains.} (a) Illustrative example of the FN \textsc{swap-asap} policy. Links on both sides of the $k^{\text{th}}$ node are ranked separately on the basis of their lengths. The longest link is given a rank of 1, while shortest is ranked 3. For entanglement swapping, we pair links of rank 1, which would result in the longest possible link. Next links of rank 2 are connected, and so on. (b) In the case of the SN \textsc{swap-asap} policy, link rankings are based on their ages.}
    \label{fig:FN}
\end{figure*}

\subsection*{Quasi-local multiplexing policies} \label{sec:policies}

We now introduce our two multiplexing policies for entanglement distribution in a linear chain quantum network. In linear quantum networks without multiplexing (i.e., $n_{ch}=1$ in Fig.~\ref{fig:schematic}), one of the simplest and best studied swapping protocols is swap-as-soon-as-possible (\textsc{swap-asap}). As the name suggests, the \textsc{swap-asap} policy dictates that entanglement swapping should be attempted as soon as two memories in a given node become active, i.e., as soon as an entanglement swapping policy becomes possible. The \textsc{swap-asap} policy has been shown to be better than fixed nesting or doubling policies for a large set of parameter regimes in linear networks consisting of a single channel between nodes~\cite{kamin2022exactrateswapasap,SSv19,SvL21}. 

The direct application of \textsc{swap-asap} in multiplexed linear chains is unspecified due to the increased number of degrees of freedom. Indeed, when multiple links are available between nodes of the network, there are many possibilities for how to perform entanglement swapping. There is also the possibility to perform entanglement distillation between nodes that share several links. We therefore define our policies based on the following considerations.

\begin{enumerate}
    \item If entanglement swapping of multiple pairs of links is possible, how should we group the pairs? We could, for example, perform entanglement swapping on link pairs that result in virtual links between \emph{farthest} nodes, or the ones that result in \emph{strongest}---highest fidelity---virtual links. Based on this consideration, we define the following two generalizations of \textsc{swap-asap} for multiplexed linear chains.
        \begin{enumerate}
            \item The \textit{Farthest Neighbor (FN) \textsc{swap-asap} policy:}~This policy prioritizes the creation of virtual links between faraway nodes. Accordingly, we rank the links based on their lengths (farthest links are ranked first), and then entanglement swapping is performed by pairing the links with the same rank. This policy is illustrated with an example in Fig.~\ref{fig:FN}(a). Intuitively, this policy prioritizes minimizing the average waiting time.
            
            \item The \textit{Strongest Neighbor (SN) \textsc{swap-asap} policy:}~This policy prioritizes the creation of virtual links with high fidelity. Accordingly, we rank the links based on their ages (links with the lowest age are ranked first), and then entanglement swapping is performed by pairing links with same rank; see Fig.~\ref{fig:FN}(b). Intuitively,  this policy aims to create the strongest or highest-fidelity links between the end nodes of the network.
        \end{enumerate}
        For comparison, we also define the \textit{random \textsc{swap-asap}} policy. In the random \textsc{swap-asap} policy, links are randomly paired to perform entanglement swapping operations, without any consideration of their length and/or age. We also define the \textit{parallel \textsc{swap-asap}} to be the non-multiplexed version of the usual \textsc{swap-asap} policy.

    \item How should we distill multiple links to one? For this, we introduce the \textit{distill-as-soon-as-possible} (\textsc{distill-asap}) policy. This is a sequential policy, which can be thought of as a hybrid of the banded and pumping policies~\cite{DBC99} (see Supplementary Note 2 for details), and very similar in spirit to the greedy policy defined in Ref.~\cite{VLMN09}. We sort the links in increasing order of their ages, and then pair them in increasing order. After one round of pairwise distillation, successfully distilled links are paired again and distillation is attempted again, until only one distilled link or no active link remains. This approach therefore uses the benefits of the banded policy, by pairing links of very similar ages, and at the same time, by realizing the restrictions introduced by decoherence and finite number of channels, by pumping fresh links into the queue of available links as soon as possible.
    
    \item What should be the preference when it comes to the order of entanglement distillation and swapping---should we distill first and then swap, or the other way around? We call these two distinct options the \textsc{distill-swap} and \textsc{swap-distill} policies. Both of these may be used in conjunction with the SN and FN entanglement swapping policies defined above, leading to four distinct policy combinations.

\end{enumerate}

The FN and SN \textsc{swap-asap} policies that we have introduced have classical communication (CC) overheads. Indeed, whenever an entanglement swapping decision is made, all the memories at the node are assumed to have knowledge of which nodes/memories they are connected to. This information is needed to decide the order of entanglement swaps based on age (SN) or length (FN) of the links. Thus, every node has some ``quasi-local'' knowledge of the network state but not the ``global'' network state. 

We demonstrate via Monte Carlo simulations of the underlying Markov decision process that our quasi-local SN and FN \textsc{swap-asap} multiplexing policies can outperform (in terms of waiting time and end-to-end fidelity) the non-multiplexed, parallel \textsc{swap-asap} policy (as we would expect), and also the random \textsc{swap-asap} policy. We present these results in Supplementary Note~1 (also see Supplementary Note~4 for some alternative quasi-local multiplexing policies). In the next section, we present results that demonstrate, more importantly, that our policies can outperform fully local multiplexing policies, as well as the well-known and widely-used nested-purification-and-doubling protocol.

Before proceeding, we remark that very recently in Ref.~\cite{KS_23_multiplex}, the idea of using fidelity-based ranks to match links for entanglement swapping (similar in spirit to our SN \textsc{swap-asap} policy) was proposed, but the focus was on second-generation repeaters with error-correction capabilities. Our focus here is on first-generation repeaters. See Refs.~\cite{MATN15,MLK+16} for the definitions of first- and second-generation quantum repeaters.

\subsection*{Performance evaluation}\label{sec:CC}

We now evaluate the performance of our policies described above with respect to the average waiting time and average age of the end-to-end link. All of our results were obtained using Monte Carlo simulations, and details about the simulations can be found in ``Methods''.

\begin{figure}
    \centering
    \includegraphics[width=0.65\columnwidth]{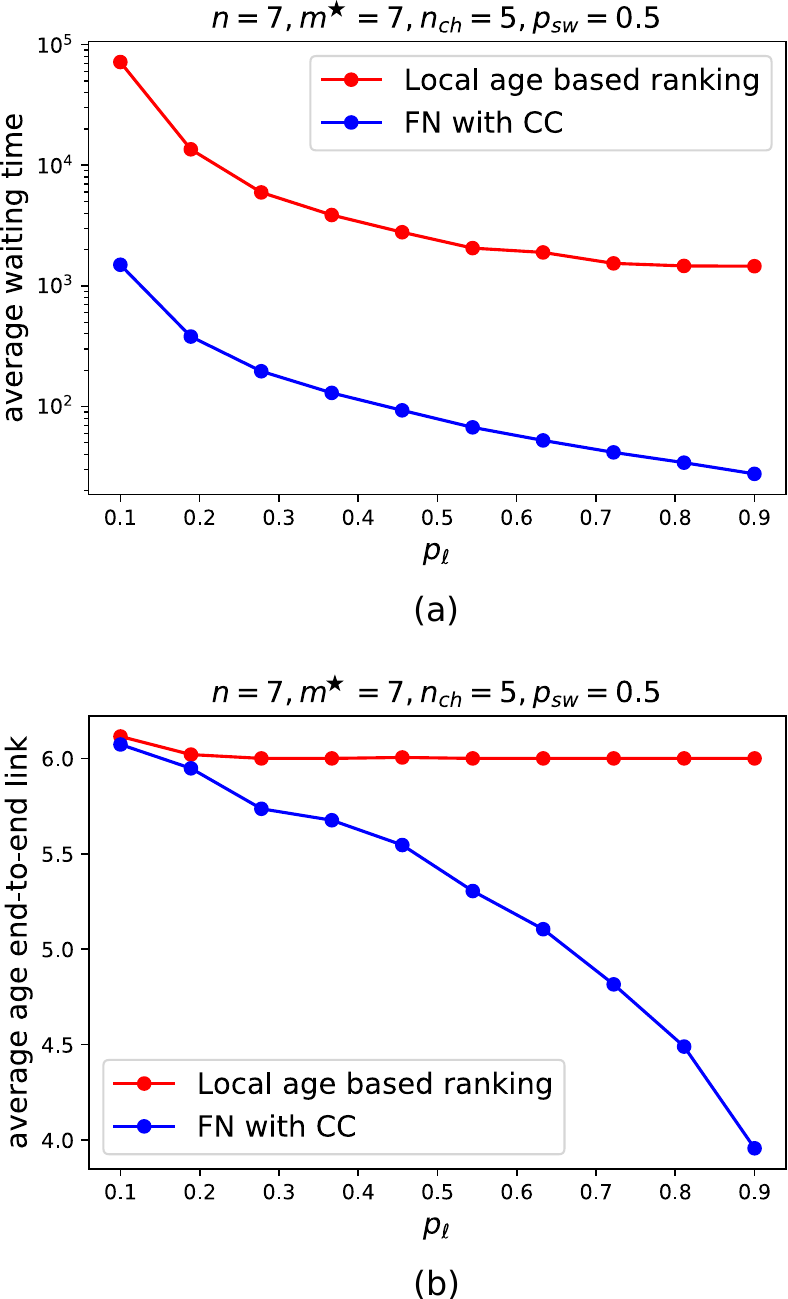}
    \caption{\textbf{Quasi-local policies outperform fully-local policies.} Average waiting time (a) and average age of an end-to-end link (b) as a function of the elementary link success probability when classical communication times are included. We compare the quasi-local FN \textsc{swap-asap} policy with the fully-local age-based \textsc{swap-asap} policy. The FN \textsc{swap-asap} policy provides smaller values for both figures of merit. Similar trends can also be seen for the SN \textsc{swap-asap} policy.}
\label{fig:CC_FOM}
\end{figure}

In order to benchmark our quasi-local policies fairly against fully-local policies, some modifications need to be made to the way the network evolves under the fully-local policies. These modifications are necessary, because when entanglement swaps are probabilistic, if links need to be restarted after a failed Bell state measurement, but before their actual memory cutoff, some CC will be needed. Therefore, if a policy has to be fully-local and free of any CC overheads, nodes must be agnostic to entanglement swapping failures. All nodes, once they have a heralded entangled link, must retain them until their anticipated memory cutoff age $m^\star$. Once this restriction is made, the only available information to nodes that can be used to decide the rank of states is the perceived or local ages of its memories. In other words, every link in the network now has two ages: one is its real age, as determined by probabilistic swaps in the evolution ``in real-time''; the second is its \emph{perceived} age, which is the age that the nodes that hold the link assume without any knowledge of the outcomes of entanglement swaps. When the perceived ages are used to make decisions in the network, there is no CC overhead during the evolution. We call such a policy \emph{local age-based} (ranking) multiplexed \textsc{swap-asap} policy. For such a policy, the CC cost is given by only one round of end-to-end communication time, in the last time step.

From Fig.~\ref{fig:CC_FOM}, it is clear that substantial advantage for both the waiting time (around two orders of magnitude reduction) and fidelity is obtained by using our policies compared to the local age-based \textsc{swap-asap} policy, even when CC times are taken into consideration.

\begin{figure}
    \centering
    \includegraphics[width=0.65\columnwidth]{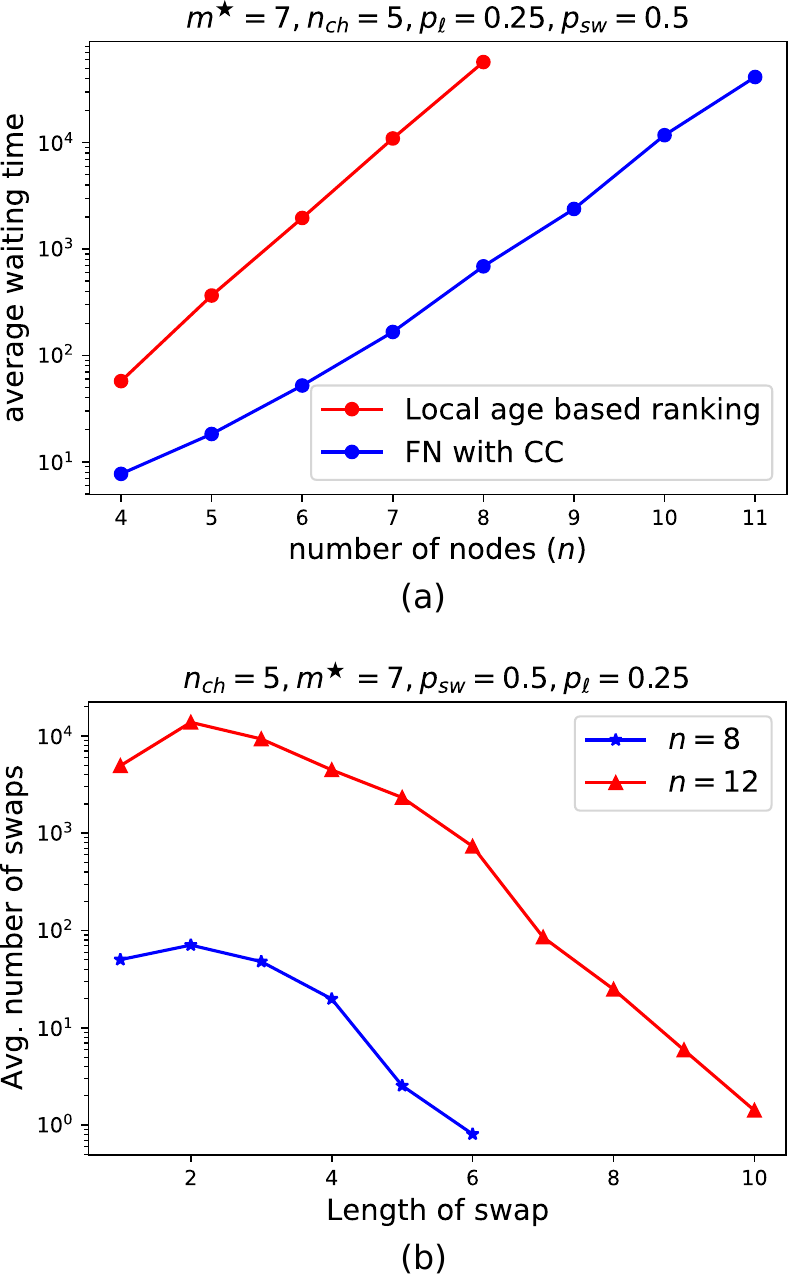}
    \caption{\textbf{Advantage of quasi-local policies is scalable and grows with increasing number of nodes.} (a) Average waiting time as a function of the number of nodes when classical communication times are included. We compare the quasi-local FN \textsc{swap-asap} policy with the fully-local age-based \textsc{swap-asap} policy, and we find that the FN \textsc{swap-asap} policy provides smaller waiting times. More surprisingly, the advantage grows with an increasing number of nodes. This provides evidence of the scalability of the advantage gained by using knowledge of the network state. (b) Number distribution of lengths of entanglement swaps in the FN \textsc{swap-asap} policy. We observe an exponential suppression of entanglement swaps of the order of the length of the chain, which is an important reason for the retained advantage of this policy, even with classical communication included.}
\label{fig:CC_nodes_FOM}
\end{figure}

A more surprising observation is the fact that even as the number of nodes increases, the advantage of FN \textsc{swap-asap} over the local age-based ranking policy remains strong---in fact, the advantage increases with increasing number of nodes. In Fig.~\ref{fig:CC_nodes_FOM}(a), we show the average waiting times for the local age-based ranking and the FN \textsc{swap-asap} policies as a function of the number of nodes. Intuitively, we might have guessed that as the number of nodes increases, so too would the CC cost, because longer and longer swaps would be required to distribute entanglement between the end points of the chain, making quasi-local policies worse off in terms of average waiting times. It is therefore surprising that not only a large improvement using the FN and SN \textsc{swap-asap} policies is obtained for long chains, but further that the improvement increases with the growing scale (number of nodes) of the network.

Nonetheless, we can explain this non-intuitive result as follows. First, we remind the reader that although fully-local policies have the advantage of no CC overheads, they suffer in terms of average waiting time due to the agnostic entanglement swapping. Indeed, waiting to restart links that are already dead is detrimental to the average waiting time. Second, because we use quasi-local policies, the CC costs in a given time step are upper-bounded by the length of the longest entanglement swapping operation, rather than the length of the entire repeater chain. (Recall that we define the length of an entanglement swapping operation as the length, in terms of number of nodes, of the longer of two links involved in the swap.) In a typical time-evolution of a network, from completely disconnected to an end-to-end connected link, for the FN \textsc{swap-asap} policy the number of entanglement swapping operations of length equal to the number of nodes is roughly $O(1)$, i.e., roughly constant with respect to the number of nodes. This is seen in Fig.~\ref{fig:CC_nodes_FOM}(b). The average number distribution of the length of entanglement swaps shows an exponentially suppressed tail. 
Therefore, there are very few time steps in which the CC cost is given by the entire length of the repeater chain. This justifies and supports the use of quasi-local policies over global policies for large networks, because in the latter, the CC cost is equal for each time step and given by the entire length of the repeater chain, while for quasi-local policies the CC cost can be considerably less.

\subsubsection*{Comparison with doubling}
\label{sec:comparison_CC}

Let us now compare our policies with those in existing literature. In particular, we compare our policies with one widely studied in previous works, namely, the \emph{doubling} (or nesting) policy; see, e.g., Refs.~\cite{BDC98,DBC99,RPL09,RFT+09,bratzik2013repeatersdistill,VLMN09}. In this policy, which only applies to repeater chains with $2^N$ elementary links, the lengths of links are always doubled. As an example, consider the following situation. Suppose two adjacent elementary links are formed, an entanglement swap is performed successfully, and a link of length two is obtained. Now, in contrast to the \textsc{swap-asap} policy, this length-two link cannot be extended by swapping it with links of arbitrary lengths---links of length two must be swapped with links of length two only---which means that we must wait for another link of length two adjacent to the original link to be created before an entanglement swap can be attempted. In the meantime, the first produced virtual link keeps aging. We can thus see that this doubling policy is more restricted compared to \textsc{swap-asap}. Accordingly, it was shown in Ref.~\cite{SvL21} that for low elementary link success probabilities and moderate swapping success probabilities (such as 50\%, as in the case of linear optics), the \textsc{swap-asap} policy outperforms the doubling policy, although infinite coherence times for memories was assumed in that work.

\begin{figure}
    \centering
    \includegraphics[width=0.65\columnwidth]{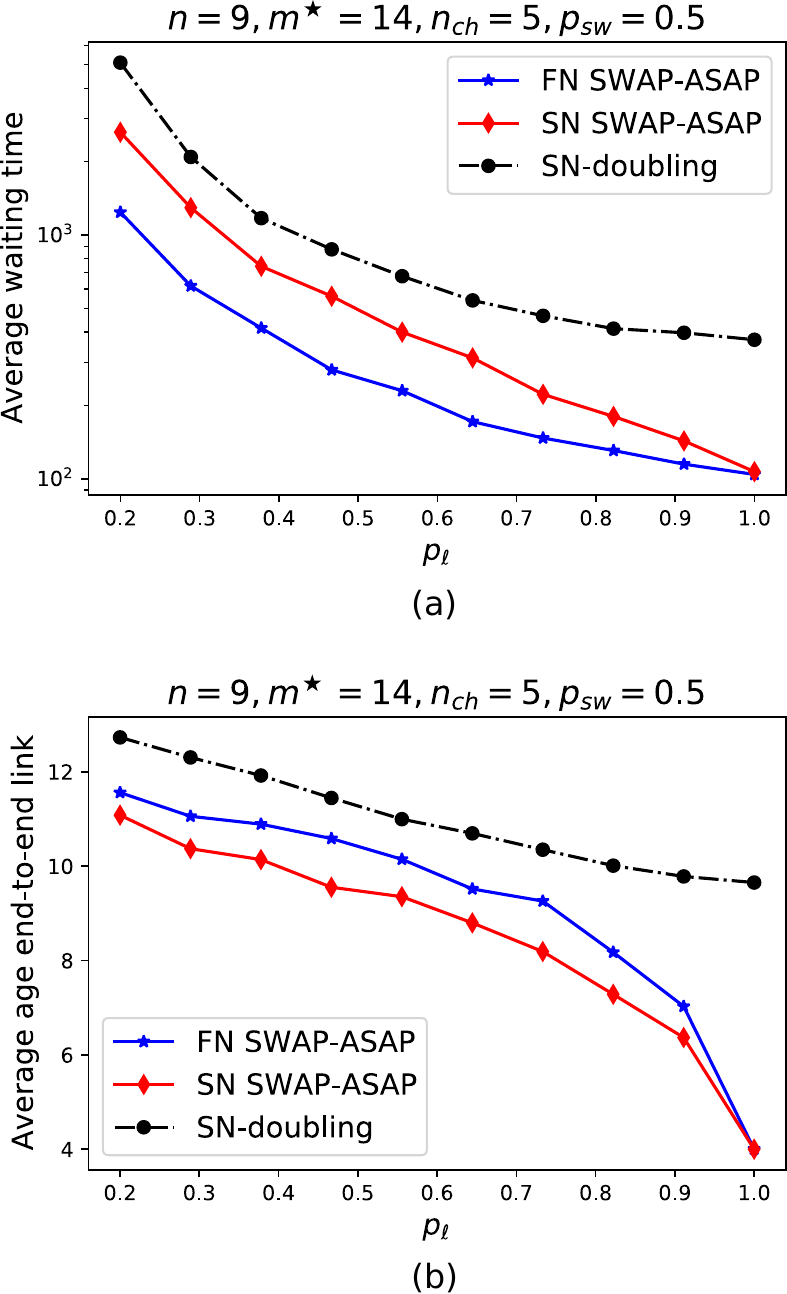}
    \caption{\textbf{Strongest Neighbor and Farthest Neighbor \textsc{swap-asap} policies outperform the doubling policy.} Comparison of (a) average waiting times and (b) average age of the end-to-end link for the FN and SN \textsc{swap-asap} policies and the SN \textsc{doubling} policy. FN \textsc{swap-asap} outperforms both SN \textsc{swap-asap} and SN \textsc{doubling} in terms of average waiting time, and SN \textsc{swap-asap} performs the best in terms of average age of the end-to-end link, for all elementary link success probabilities ($p_{\ell}$).}
    \label{fig:quasi-local_compare}
\end{figure}

Now, within the multiplexing scenario, the doubling policy can be implemented by ranking links based on age when more than one option of entanglement swapping is available. Thus, we call this policy \emph{Strongest-Neighbor doubling}, or SN \textsc{doubling}. This doubling policy is also quasi-local, in the same sense as SN and FN \textsc{swap-asap}, because a node only needs to know the length of its links, i.e., it needs to know which nodes it is connected to.

Our aim now is to see if, within the multiplexing setting, and with finite coherence times for the quantum memories, our FN and SN \textsc{swap-asap} policies can outperform the SN \textsc{doubling} policy. Our results are shown in Fig.~\ref{fig:quasi-local_compare}. We see that FN \textsc{swap-asap} outperforms both SN \textsc{swap-asap} and SN \textsc{doubling} in terms of average waiting time. At the same time, SN \textsc{swap-asap} performs better than SN \textsc{doubling}, thus further strengthening our intuition that \textsc{swap-asap} policies outperform \textsc{doubling} policies in practical resource-constrained scenarios. We also see that SN \textsc{swap-asap} performs the best in terms of average age of the end-to-end link, for all elementary link success probabilities $p_{\ell}$. This is expected, because SN \textsc{swap-asap} prioritizes the creation of younger links.

\begin{figure*}
    \centering
    \includegraphics[width=0.85\textwidth]{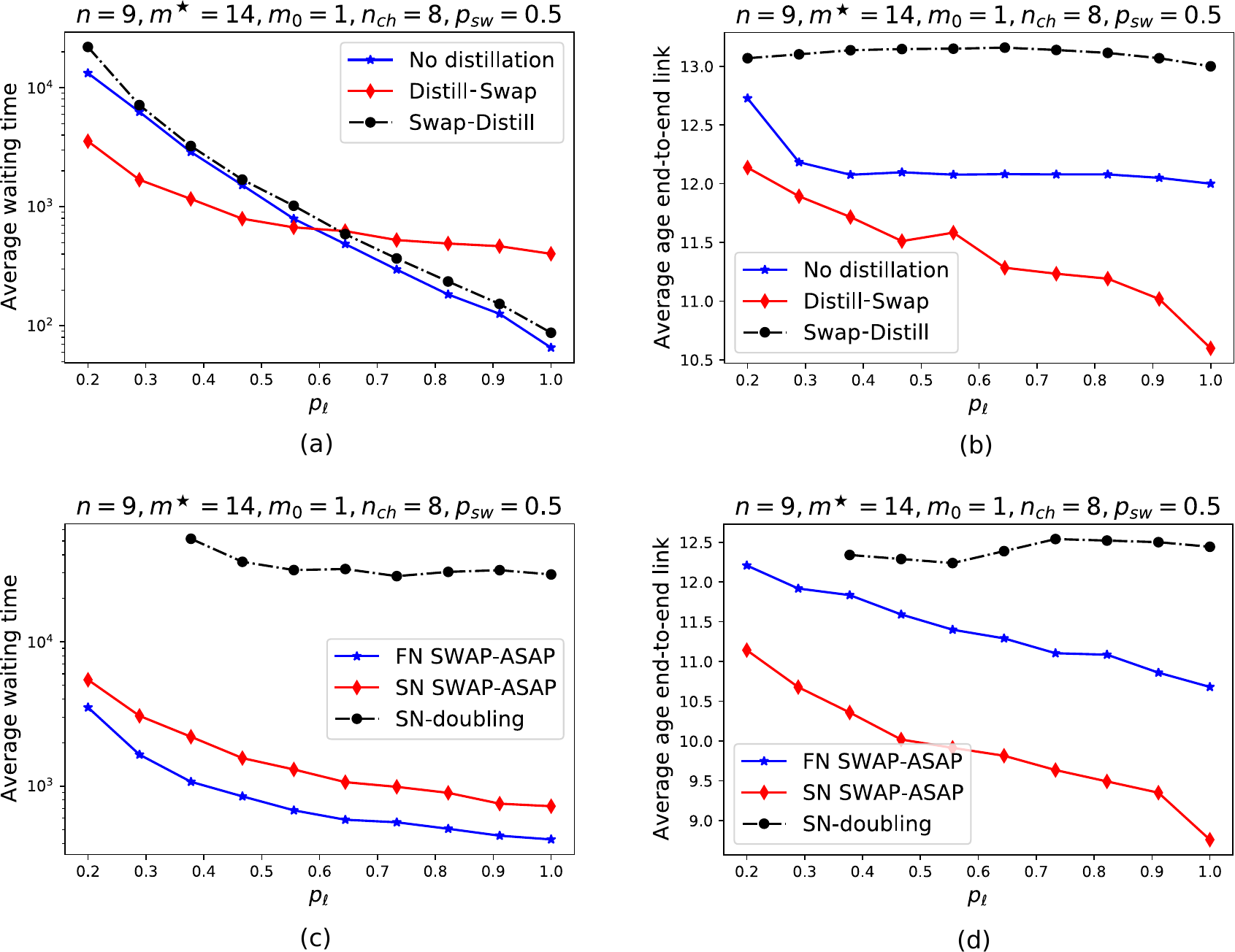}
    \caption{\textbf{Performance of various quasi-local policies with distillation.} Comparison of average waiting times (a) and average age of the end-to-end link (b) for the FN \textsc{swap-asap} policy with and without distillation. For the average waiting time, \textsc{distill-swap} outperforms the policy without distillation for low elementary link success probabilities, but it is not useful to distill when the success probability is high. At the same time, we see that it is never useful to \textsc{swap-distill}, both in terms of the average waiting time and the age of the end-to-end link. The CC costs of distillation are much higher when long virtual links are distilled, which provides intuition for the inefficacy of the \textsc{swap-distill} policy. Average waiting times (c) and average age of end-to-end link (d) for the \textsc{distill-swap} versions of FN \textsc{swap-asap}, SN \textsc{swap-asap} and the SN \textsc{doubling} policies when classical communication costs are included. FN \textsc{swap-asap} is the best policy for reducing the average waiting time and SN \textsc{swap-asap} for average age of the youngest end-to-end link.
    }
\label{fig:distill_CC}
\end{figure*}

\subsubsection*{Distillation-based policies}
Now we present results for distillation-based policies when CC overheads are included. As mentioned earlier in \textit{Classical communication}, the CC cost of distillation for any quasi-local policy is given by the length of longest links that are used for distillation. For local policies, distillation is only allowed at the elementary link level, and only when the links are freshly prepared. We continue using the \textsc{distill-asap} policy described in \textit{Quasi-local multiplexing policies}. The CC overheads make the already probabilistic BBPSSW protocol even more costly. Furthermore, the CC costs of distilling longer (virtual) links is higher than distilling elementary links, and thus the question of whether to distill or not and whether to swap first or to distill first depends crucially on the CC overheads. We show the average waiting times for FN \textsc{swap-asap} policy without distillation and both distillation-based policies (\textsc{swap-distill} and \textsc{distill-swap}) in Fig.~\ref{fig:distill_CC}(a,b). 

Finally, in Fig.~\ref{fig:distill_CC}(c,d), we again compare the different distillation-based versions of quasi-local policies amongst themselves, viz., FN \textsc{swap-asap}, SN \textsc{swap-asap} and the SN \textsc{doubling} policies. We see that, with distillation (\textsc{distill-swap} version) and the relevant CC costs added, the trends remain similar to Fig.~\ref{fig:quasi-local_compare}, i.e., FN \textsc{swap-asap} is still the best policy for reducing the average waiting time and SN \textsc{swap-asap} for average age of the youngest end-to-end link.

The results of the numerical simulations presented in this section show that including entanglement distillation in multiplexing policies introduces many new considerations, such as the fact that, in some parameter regimes, it is better to distill first and in others the other way around. In the next section, we take a deeper dive into these factors surrounding the use of entanglement distillation.

\subsection*{Unraveling the role of entanglement distillation policies for multiplexing}
\label{sec:DS_vs_SD_simulations}

The BBPSSW entanglement distillation protocol considered in this work non-deterministically converts two links with low fidelity into one link with a higher fidelity. This immediately raises the following question: ``Is distillation useful for all network parameter regimes?'' Due to its non-deterministic nature, in certain parameter regimes, the slight increase in fidelity obtained by distillation might not be worth the potential loss of links, in case it fails. Furthermore, even without the consideration of CC costs, there exists an underlying asymmetry between the way entanglement swapping and distillation protocols work. In the former, the success probability is independent of the ages of the links, and thus it does not discriminate between longer (generally older) and shorter links, whereas the latter's success crucially depends on the ages of the input links. This asymmetry leads to the the following question: ``Should we distill and then swap or the other way around?'' This consideration leads to two distinct policies, namely \textsc{distill-swap} and \textsc{swap-distill}, with different behaviors in different parameter regimes, as we show in Fig.~\ref{fig:distill_CC}. The aim of this section is to pursue the mechanisms behind these different behaviors and unravel them layer by layer. We make one simplifying assumption in this section, which is that classical communication costs associated with entanglement swapping and distillation are ignored. This is motivated by two reasons: firstly, for short chains, CC costs are usually ignored in existing literature on entanglement distribution policies~\cite{IVSW22, RL22, KS_switch1, KS_switchGKP,kamin2022exactrateswapasap}, thus this assumption allows us to fairly compare the performance of our policies with those in the existing literature. Secondly, ignoring CC simplifies our model and makes some analyses more transparent. At the same time, the conclusions we come to in this section hold qualitatively even when CC is accounted for.

We begin by noting that entanglement distillation is useful and worthwhile only when it increases the fidelity of the distilled link as compared to the fidelities of the original two links. In particular, the fidelity after distillation should be greater than the highest-fidelity link of the two being distilled, i.e., $F_{\text{distill}}(f_1,f_2)>\max\{f_1,f_2\}$. In terms of ages, this translates to the following condition for entanglement distillation with the BBPSSW protocol to be useful: $m'<\min\{m_1,m_2\}$, where $m'$ is the age after distillation, as given by Eq.~\eqref{eq-age_distilled}.
If $m'\geq\min\{m_1,m_2\}$, then we might as well use the youngest of the two links rather than attempt entanglement distillation and risk its failure. In particular, if one of the ages is 0, corresponding to a perfect Bell pair, then distillation should not be performed.

Another important consideration is that both links being distilled should be entangled. If one of the links is not entangled, then it is better to simply discard the unentangled link and keep the entangled one. We elaborate on this point in ``Methods'' when discussing the BBPSSW distillation protocol. With this in mind, let us note that under the Pauli noise model that we consider, the fidelity according to Eq.~\eqref{eq-fidelity_function_main} is bounded from below by $0.25$ (in the limit $m\to\infty$), and it is equal to approximately $0.5259$ when $m=m^{\star}$. The entanglement threshold for the states that we consider is $0.5$, meaning that the link is entangled if and only if its fidelity strictly exceeds $0.5$, and thus the age at which the link is no longer entangled is $m=\ceil{m^{\star}\log(3)}\approx \ceil{1.09m^{\star}}$ (also see ``Methods''). This means that, depending on the value of $m^{\star}$, it can be useful to distill links that are older than $m^{\star}$ time steps. However, the probability of successful distillation falls with increasing ages of the links, and hence using a very old link in conjunction with a very young link increases the likelihood of wasting the high-quality young link. Thus, while distillation might be useful in this scenario, it is not necessarily worthwhile. Also, for the moderate values of $m^\star$ considered in our simulations, we choose to discard links at $m^\star$ time steps, instead of the strict limit of $\ceil{m^{\star}\log(3)}$, even when distillation is considered.

We now present simulation results for a seven-node chain with seven channels. We also provide analytical results for some simpler cases in Supplementary Note 3. A somewhat large value of the cutoff is chosen ($m^\star = 24$) to show some important scaling behaviour with increasing age $m_0$ of the fresh (newly created) links. Fig.~\ref{fig:sim_distill_m0} shows the average waiting time and average age of end-to-end links as a function of the ages of the elementary links $m_0$ when they are fresh, i.e., newly created. For both \textsc{swap-distill} and \textsc{distill-swap}, we have fixed the entanglement swapping policy to FN \textsc{swap-asap}.

\begin{figure}
    \centering
    \includegraphics[width=0.65\columnwidth]{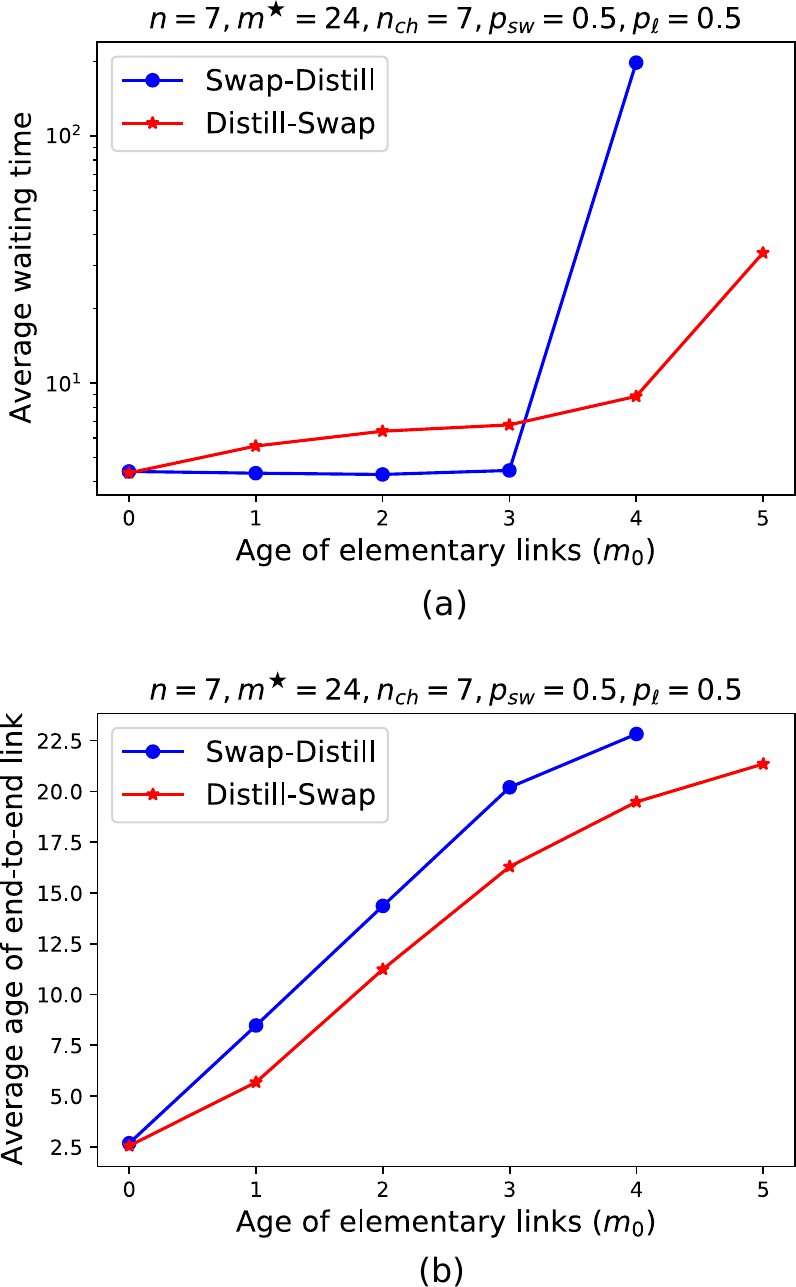}
    \caption{\textbf{Should we swap-then-distill or distill-then-swap?} Average waiting time (a) and average age of end-to-end links (b) as a function of the age $m_0$ of newly-created elementary links. The entanglement swapping policy is fixed to FN \textsc{swap-asap}. It is clear that \textsc{swap-distill} is the policy of choice in terms of reducing the waiting time when fresh elementary links have high fidelity (low age); otherwise, \textsc{distill-swap} performs better.}
\label{fig:sim_distill_m0}
\end{figure}

It is clear that \textsc{swap-distill} is the policy of choice in terms of reducing the waiting time when fresh elementary links have high fidelity (low age); otherwise, \textsc{distill-swap} performs better. At the same time, \textsc{distill-swap} always performs better in terms of the age of the end-to-end link. We also verify that the waiting time rapidly increases beyond a threshold value of $m_0$ for both \textsc{swap-distill} and \textsc{distill-swap}, the threshold being smaller for \textsc{swap-distill}. Beyond the $m_0$ values reported in Fig.~\ref{fig:sim_distill_m0}, end-to-end entanglement generation time increases so much in the simulations that we are unable to find any reasonable estimate of an average value. Thus, for all practical purposes, the waiting time is essentially infinite. This is also indicated by the average ages approaching $m^\star$.

Some estimates of the $m_0$-thresholds up to which the two distillation policies are effective can be made using some rather simple arguments.
If the fresh elementary links have age $m_0$,
the age of the distilled link saturates to a minimum value $m^d_{\min}(m_0)$ which is a function of $m_0$ (see, e.g., the pumping policy in Supplementary Note 2). In order to get an end-to-end link via entanglement swapping of such distilled links, the following condition must be satisfied: 
\begin{equation}
    (n-1) \cdot m^d_{\min}(m_0) \leq m^\star.
\end{equation} 
For the configuration in Fig.~\ref{fig:sim_distill_m0}, the above condition translates to $m^d_{\min} \lessapprox  4$, which in terms of $m_0$ is $m_0 \leq 5$, and this is the value around which we see that the average waiting time diverges. 

In the case of \textsc{swap-distill}, a typical trajectory to get an end-to-end link would be to first perform $(n-2)$ swaps to get end-to-end links of age $(n-1)m_0$, which are then distilled to links of age $m^d_{\min}((n-1) \cdot m_0)$. The threshold condition thus takes the following form: 
\begin{equation}
    m^d_{\min}((n-1) \cdot m_0)) \leq m^\star,
\end{equation}
but since distillation is ineffective when both links involved are above the cutoff age, this condition is satisfied whenever
\begin{equation}
     (n-1) \cdot m_0 \leq m^\star .
\end{equation}
This also explains why the threshold is lower for \textsc{swap-distill}, since $m^d_{\min}(m_0) \leq m_0$. Again, for the configuration in Fig.~\ref{fig:sim_distill_m0}, this translates to $m_0 \lessapprox 4$, which agrees well with the simulation results. Furthermore, the threshold in the case of the \textsc{swap-distill} policy is identical to a policy without distillation. 

\begin{figure*}
    \centering
    \includegraphics[width=0.85\textwidth]{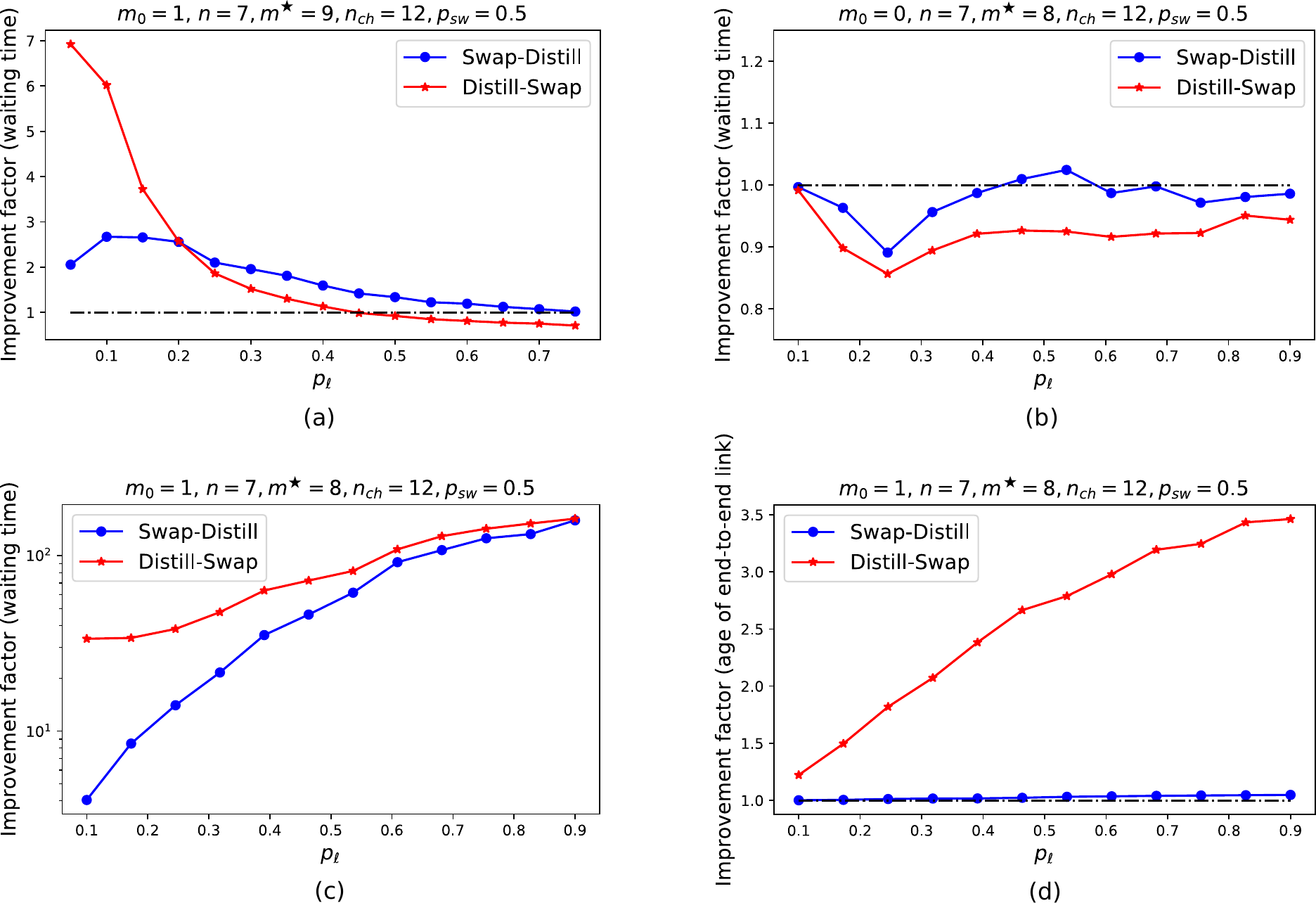}
    \caption{\textbf{Is distillation useful?} Average waiting time improvement factor for distillation with the \textsc{swap-distill} and \textsc{distill-swap} policies (with FN \textsc{swap-asap} entanglement swapping). (a) When $p_{\ell}$ is low, \textsc{distill-swap} performs better, and when $p_\ell$ is high \textsc{swap-distill} performs better. These are general trends and in concurrence with (c), where for low $p_\ell$ \textsc{distill-swap} is better but as $p_\ell$ increases \textsc{swap-distill} catches up. Furthermore, for large $p_\ell$, distillation is either not very useful, and can even be detrimental. Distillation is not very useful, in fact mostly detrimental, when (b) fresh elementary links are perfect ($m_0 = 0$, $100\%$ fidelity), but when fresh elementary links are imperfect (c) ($m_0=1$, approx.~$90\%$ fidelity), a dramatic reduction in waiting time can be obtained by performing distillation. (d) Improvement factor for average age of youngest end-to-end link using distillation over non-distillation based (FN) policy. \textsc{swap-distill} provides negligible advantage but \textsc{distill-swap} provides substantial advantage.}
    \label{fig:swdl_vs_dlsw_sims}
\end{figure*}

We further elaborate on our answer to the question of whether to distill first or to swap first, and whether or not distillation is useful in Fig.~\ref{fig:swdl_vs_dlsw_sims}. To do this, we plot the waiting time \textit{improvement factor} for distillation, which we define to be the ratio of the average waiting time without distillation to the average waiting time with distillation. Apart from the quality of fresh links (the parameter $m_0$), as shown in Fig.~\ref{fig:sim_distill_m0}, the choice between the two distillation orderings also depends on the other network parameters. As an illustration, we show that for the parameters $(n, n_{ch}, m^\star, p_\ell)$ chosen in Fig.~\ref{fig:swdl_vs_dlsw_sims}~(a), when $p_{\ell}$ is low, \textsc{distill-swap} performs better, and when $p_\ell$ is high \textsc{swap-distill} performs better. This observation can be understood in light of the analytical results in Supplementary Note~3 and Fig.~\ref{fig:sim_distill_m0}. Indeed, we observe that for low $p_{\ell}$, links are typically older when they are ready to be swapped, and we have seen in the previous discussions in this section that \textsc{distill-swap} performs better in this case, because long entanglement swaps can only be possible once the ages of the participating links have been reasonably reduced. Also, distilling after entanglement swapping is highly improbable to be successful because the ages are going to only increase due to the swapping. On the other hand, when links have a high $p_\ell$, \textsc{swap-distill} performs better, because we observe that such links are typically younger when they are ready to be swapped, and they are much more conducive to successful swapping. Furthermore, in this case, the non-deterministic swaps become the rate determining step, and hence having more opportunities to perform entanglement swapping is more important. Distilling links first reduces the number of such opportunities. Furthermore, we see that for large $p_\ell$, distillation is either not very useful or even detrimental. Note that these trends were also qualitatively seen in Fig.~\ref{fig:distill_CC}, when CC costs were accounted for. 

The discussion above leads us to further explore the question pertaining to the usefulness of distillation-based policies. 
Distillation is a probabilistic process, and even when it succeeds it leads to a reduction in the number of active links. Hence, it is not obvious that a policy without distillation, such SN or FN  \textsc{swap-asap} alone, cannot perform better than its distillation-based counterpart. Let us look at two cases: when fresh elementary links are perfect, i.e., $m_0 = 0$, and when they are imperfect, say $m_0 = 1$ for concreteness. In Fig.~\ref{fig:swdl_vs_dlsw_sims}(b--d) we look at a seven-node chain with $n_{ch}=12$, $p_{sw}=0.5$, and $m^\star=8$ ($m_0 = 1$ corresponds to roughly $90\%$ fidelity in this case). We choose FN \textsc{swap-asap} as our swapping policy. We show the average waiting time improvement factor for distillation-based policies, defined as the ratio of average waiting time without distillation and with distillation as a function of the elementary link success probability $p_\ell$. An improvement factor of greater than 1 indicates an advantage in using distillation and vice versa. We see in Fig.~\ref{fig:swdl_vs_dlsw_sims}(b) that distillation is mostly detrimental when freshly produced elementary links are perfect (100\% fidelity). Both \textsc{swap-distill} and \textsc{distill-swap} increase the average waiting time, for all values of the elementary link probability. Further, the disadvantage is higher in the case of \textsc{distill-swap}. This can understood by the following observation. The primary role of distillation in the case when elementary links are perfect is to reduce the age of long or older links obtained as a consequence of waiting for other links and/or via entanglement swaps. The longest links tend to be the oldest ones in the network, due to the $m_1+m_2$ age-update rule for entanglement swapping. If we distill first, then initially, since most links are perfect, distillation is never invoked and after entanglement swaps, when distillation is needed, we have to wait an extra time step before distillation is attempted. This extra time step reduces the success probability of distillation of already old links, following Eq.~\eqref{eq-distill_p}. This effect is even more magnified when CC costs are added, since swaps will now lead to even older links. This is also the reason why \textsc{swap-distill} has some positive improvement in some parameter regimes. Although not shown in Fig.~\ref{fig:swdl_vs_dlsw_sims}(b), there exists a small parameter regime of low $m^\star, p_\ell$ and $n_{ch}$ in which \textsc{swap-distill} leads to a small positive improvement (3-5\%). The choice of parameters in the figure is motivated by the intent to show the most prominent and overarching trends. On the other hand, when elementary links are imperfect, distillation is essential in increasing the scale, fidelity and throughput of the network, indicated by the significant reduction in waiting time seen in Fig.~\ref{fig:swdl_vs_dlsw_sims}(c). When the elementary link probability is low, \textsc{distill-swap} performs much better, and when it is is high, \textsc{swap-distill} catches up in terms of improvement. This is in agreement with the results shown in Fig.~\ref{fig:sim_distill_m0} and \ref{fig:swdl_vs_dlsw_sims}(a). In terms of the end-to-end link fidelity \textsc{swap-distill} provides negligible advantage but \textsc{distill-swap} provides substantial advantage as shown in Fig.~\ref{fig:swdl_vs_dlsw_sims}(d). It is noteworthy that the trends of improvement factor variation with increasing elementary link probability change between Fig.~\ref{fig:swdl_vs_dlsw_sims}(a) and Fig.~\ref{fig:swdl_vs_dlsw_sims}(c) substantially with a small change in $m^\star$ from $8$ to $9$. This happens because $m^\star = 9$ lies at the threshold near which distillation becomes beneficial, hence the high sensitivity to $m^\star$. As an aid to intuition, recall that if no distillation is used, then to establish an end-to-end link, when newly-created links have age $m_0=1$ in a chain of seven nodes, the smallest cutoff time that ensures an end-to-end entangled link is $m^\star=6$, and note that $m^\star = 8,9$ are very close to this limit.

\subsubsection*{Main messages for distillation}

In this section, we have addressed the following questions. 
\begin{itemize}
    \item \emph{To distill or not to distill?}
        
            \textit{Yes} (generally), when fresh elementary links are not perfect. In this case, distillation leads to a significant improvement in both the average waiting time and the average age of the end-to-end link. Nonetheless, the decision to distill links is not straightforward, since in extremely resource constrained scenarios, such as very low $p_\ell$ or $p_{sw}$ it might again become disadvantageous to distill (see \textit{Figures of merit and repeater chain design for existing memory platforms}), even when fresh links are imperfect.
            
            \textit{No}, when fresh elementary links are perfect. In this case, distillation offers either negligible advantage or is in fact detrimental for most parameter regimes. 
        
            More generally, the more resource constrained the scenario, i.e., in terms of network's hardware parameters, the more useful distillation is. The observations made in this section show that detailed simulations are indispensable for analysis of the role of distillation in repeater chains.

    \item \emph{To distill first or to swap first?} 

        If links have low $p_\ell$ or low initial fidelity, distilling first is more advantageous, in the opposite case it is better to swap first.
    
        More generally, in resource constrained scenarios, i.e., in terms of network's hardware parameters, it is more useful to distill first and then swap. Further, classical communication overheads only lead to quantitative changes in these trends.

\end{itemize}

\subsection*{Proposed experimental implementation} \label{sec:exp}

\begin{figure*}
    \centering
    \includegraphics[width = 0.8\textwidth]{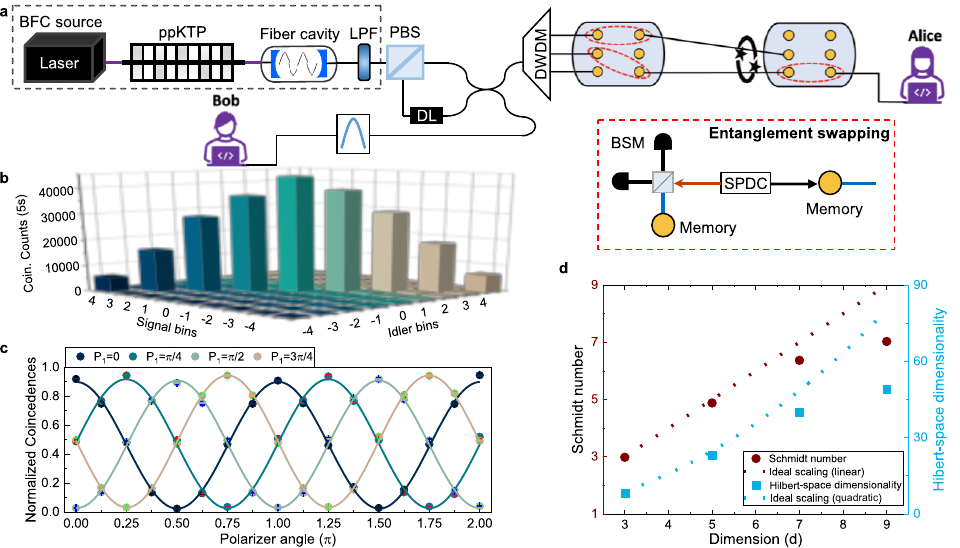}
    \caption{\textbf{Proposed experimental implementation of a multiplexed repeater chain.} (a) Proposed implementation of a multiplexed repeater chain using a high-dimensional biphoton frequency comb (BFC). (b) Measured frequency correlation matrix of the BFC for the multiplexed elementary link that we implemented. (c) Polarization entanglement measurements of the implemented BFC. (d) Schmidt number and Hilbert-space dimensionality scaling for different dimensions in the implemented~BFC.}
    \label{fig:exp_imp}
\end{figure*}

Based on the multiplexed model introduced in Fig.~\ref{fig:schematic}, in this section we propose a proof-of-principle experimental proposal of a quantum network with multiplexing using a high-dimensional biphoton frequency comb (BFC) in Fig.~\ref{fig:exp_imp}. 

As part of this proposal, we present results of an implementation of a multiplexed elementary link using a high-dimensional BFC. First, the BFC state is prepared by passing the spontaneous parametric down-conversion (SPDC) photons generated from a type-II periodically-poled KTiOPO4 (ppKTP) waveguide through a fiber Fabry–P\'{e}rot cavity (FFPC)~\cite{Chang2021, Chang2024, Xie2015}. Due to the type-II phase-matching condition, the signal and idler photons can be efficiently separated by a polarization beam-splitter (PBS), enabling deterministic BFC generation without post-selection. The signal and idler photons are then temporally overlapped at a fiber beamsplitter with orthogonal polarizations, by using a tunable delay line (DL) and polarization controllers~\cite{Chang2021}. This configuration generates polarization entangled BFC state
\begin{multline}
    \ket{\psi} = \sum_{m=-N}^{N} \ket{\omega_p/2 + m\Delta\Omega}_1 \ket{\omega_p/2 - m\Delta\Omega}_2 \\ \otimes (\ket{H}_1\ket{V}_2 + \ket{V}_1\ket{H}_2),
\end{multline}
where $2N+1$ is the number of cavity lines passed by an overall bandwidth-limiting filter (the number of frequency modes), $\Delta \Omega$ is the free spectral range of the FFPC, and $\omega_p$ is the pump frequency~\cite{Xie2015}. This scalable frequency-polarization hyperentangled BFC state~\cite{Chang2021} facilitates the multiplexing network policy by transmitting polarization qubits via different frequency channels to quantum network nodes, where entanglement swapping is performed in polarization basis. Specifically, within the context of our theoretical model, $n_{ch} = 2N + 1$. As shown in Fig.~\ref{fig:exp_imp}(a), one output channel of the beamsplitter is directed to Bob, and the other is connected to a commercial dense wavelength-division multiplexing (DWDM) module for de-multiplexing and sent to quantum network nodes, consisting of multiple quantum memories. Each frequency channel is assigned to couple to one quantum memory at the quantum network node. The entanglement swapping inside a node is assisted by an auxiliary SPDC source, where the signal photons are sent to Bell-state measurement and idler photons are sent to another memory to establish a fiber link to the next node.

Fig.~\ref{fig:exp_imp}(b) shows the measured frequency correlation matrix of a BFC, using a FFPC of 50 GHz free spectral range (FSR) and finesse of 10. We measured 9 correlated frequency-bin pairs between signal and idler photons with a type-II SPDC source of $\approx$ 250 GHz phase-matching bandwidth. Fig.~\ref{fig:exp_imp}(c) shows the measured high-quality polarization fringes for the BFC state before de-multiplexing, which shows a raw fringe visibility up to 94.1 $\pm$ 0.5 $\%$ and violates the Clauser-Horne-Shimony-Holt (CHSH) inequality by 19 standard deviations with score $S = 2.666 \pm 0.034$. This polarization state yields a state fidelity of $93.3 \pm 0.1\%$ with respect to a Bell state $\frac{1}{\sqrt{2}}(\ket{H}_1\ket{V}_2 + \ket{V}_1\ket{H}_2)$. We note that, ideally, the polarization entanglement fidelity remains the same after multiplexing. However, due to the falloff of the frequency correlation matrix, there will be deviation for the frequency-bin pairs away from the central bin~\cite{Chang2021,cheng2023highdimensional}. Such high-fidelity polarization entanglement facilitates the entanglement swapping between quantum memories. 

We perform the Schmidt mode decomposition of the measured joint spectral intensity~\cite{Chang2021}, and summarize the extracted Schmidt mode number ($K$), and the corresponding Hilbert space dimension ($K^2$). In particular, $K$ represents the effective orthogonal modes in the system. When reconstructing the density matrix, we lower bound the Schmidt mode number, which scales linearly with the dimension $d$ of the biphoton frequency comb. The Hilbert-space dimensionality can then be estimated by $K^2$. In Fig.~\ref{fig:exp_imp}(d), we plot the Schmidt mode number and the Hilbert-space dimension as a function of the dimension $d$, and we compare it with the ideal, theoretical scaling. We observe that the Schmidt mode number is $K=2.98$ with $d = 3$, and $K = 7.03$ when we increase the dimension to 9. Such BFC state provides up to seven effective frequency modes for multiplexing. The deviation from ideal scaling in Fig.~\ref{fig:exp_imp}(d) mainly comes from the falloff from the sinc function of the SPDC spectrum, which can be circumvented by using a flat-top broadband SPDC source. The Hilbert-space dimension can be further increased by using a SPDC source with broader spectrum bandwidth and FFPC with smaller FSR. We also note that by passing only the signal photons through a FFPC, the idler photons will still exhibit a comb-like behavior~\cite{Chang2023, cheng2023highdimensional}. Such a scheme can provide higher photon flux due to less filtering loss, and with potentially higher secure key rates for quantum key distribution~\cite{Sarihan2019, Cheng2021, Sarihan2023, Chang2023_iop}. Although we have already experimentally implemented the multiplexed source required for the proposed experimental implementation, integration of the scheme with quantum memories, implementing entanglement swapping, and distillation of links has not been done yet. At the same time, there has been great progress in each of these key steps for the construction of a long-distance quantum network.

\subsubsection*{Calculation of model parameters}\label{sec:appendix1}

In this section, we show how to translate physical parameters into the values of $p_{\ell}$, $m^{\star}$, and $\Delta t$ within our theoretical framework.

\paragraph*{Elementary link success probability $p_{\ell}$.}~In general, we have that 
\begin{equation}
    p_{\ell} = \eta_{r} \eta_{\ell} \eta^2,
\end{equation}
where $\eta_{\ell}=\exp(-\frac{\ell}{12\text{ km}})$ is the photon-loss contribution, with $\ell$ being the length of the elementary link. For $\ell=40\text{ km}$, we have $\eta_{\ell}\approx 0.036$. The factor of $\frac{1}{12\text{ km}}$ in the exponent corresponds to the 3.63~dB loss in the 10~km channel reported in the experiment in Ref.~\cite{Chang2021_2}. In addition, $\eta=0.69$ is the Debye--Waller factor, the ideal efficiency of the memory (fraction of photons captured in the memory out of all the incident photons)~\cite{DWF}, for a rare-Earth metal based quantum memory \cite{memory_numbers}, and $\eta_r = 0.79$ (approximately 1~dB) is the loss from other residual factors, which could include detector efficiencies, optical component loss, etc.~\cite{Chang2021}. Altogether, we have that $p_{\ell} = \eta_{r} \eta_{\ell} \eta^2 \approx 0.0134$. In the case of emissive memories like those based on Diamond vacancies, an effective Debye--Waller factor of 0.5 is assumed, since a linear-optical Bell state measurement is needed at each node to teleport the state of the incoming SPDC photon into the memory.

\paragraph*{Simulation time step $\Delta t$.}~This is given by the elementary link distance $\ell$ and the source rate per channel as 

\begin{equation}
    \Delta t = \max\Big\{\frac{1}{R}, \frac{n\ell}{c} \Big\}\,\text{seconds},
\end{equation}
where $n$ is the refractive index of the communication medium, $\ell$ is length of the elementary link, and $c$ is the speed of light in vacuum and $R$ is the rate of ebit pairs generated by the dimmest channel of the multiplexed SPDC source (see Fig.~\ref{fig:exp_imp})~(b).

\paragraph*{Memory cutoff $m^{\star}$.}~Using the memory coherence time of $T_2 \approx 1\, \rm{ms}$, as for a rare-Earth metal based quantum memory \cite{memory_numbers}, we can find the memory cutoff $m^{\star}$ as
\begin{equation}
    m^\star = \frac{T_2}{\Delta t}.
\end{equation}
For $\ell = 40~\text{km}$, we have $m^\star = 5$.
 \paragraph*{Fresh elementary link age $m_0$.}~This is given by the fidelity $f_s$ of the polarization entangled state produced by the SPDC source (which we assume to be the same for all channels, for simplicity) and the fidelity $f_0$ of the entangled state when it is freshly absorbed (emitted) by the quantum memory. The fidelity of the elementary link is thus given by $f_e \coloneqq f_sf_0^2$. This can be converted to the initial age of the elementary links using Eq.~\eqref{eq-fidelity_to_age}. Note that this also depends on $m^\star$.

For example, if the elementary link distance $\ell$ is varied from $5$~km to $25$~km, $p_\ell$ varies from $0.25$ to $0.05$, and $\Delta t$ goes from $0.025$~ms to $0.125$~ms. The latter in turn implies that $m^\star$ varies from $40$ to $8$, assuming a $T_2$ of approximately $1$~ms.

\subsubsection*{Figures of merit and repeater chain design for existing memory platforms}\label{sec:memories} 

Let us now address the following question. Given a particular number $n_{ch}$ of channels for multiplexing, and given other physical parameters corresponding to channel losses, quantum memory coherence times, etc., what is the optimal number of repeater nodes for a linear quantum repeater chain spanning a certain distance, and what would be the rate and fidelity of end-to-end entanglement distribution for this optimal setting? Intuitively, we do expect such an optimal number of nodes to exist, because as the number of nodes increases, even though the individual links become smaller and thus $p_\ell$ and $m^\star$ both increase, the non-deterministic nature of entanglement swapping and/or the ``age addition'' rule, will at some point begin to adversely affect the waiting time as the number of nodes increases. Furthermore, in a (source-)rate limited setting, $m^\star$ also saturates to a maximum value with decreasing elementary link length.

\begin{figure*}
    \centering
    \includegraphics[width=0.85\textwidth]{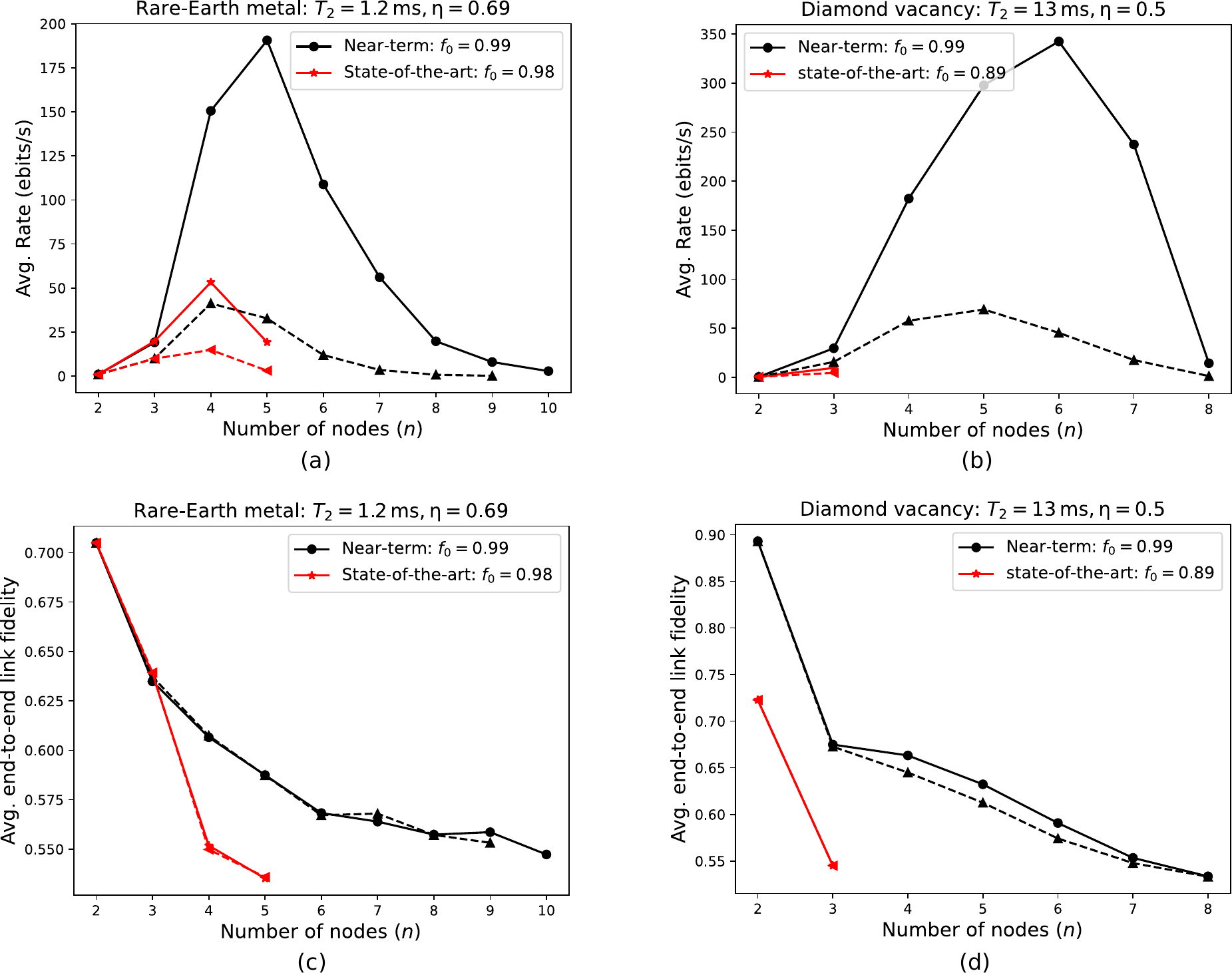}   
    \caption{\textbf{Performance of the proposed experimental implementation using our multiplexing policies.} Average rate of end-to-end entanglement distribution (a,b), including classical communication overheads, and average fidelity of the end-to-end link (c,d) for entanglement distribution along a 100~km repeater chain with five channels with rare-Earth ($\text{Pr}^{3+}$ ion) (a,c) and diamond vacancy (b,d) based memories using the FN \textsc{swap-asap} policy. The dotted (solid) lines represent $p_{sw} = 0.5 (p_{sw} = 1)$. Both state-of-the-art and near-term values for the fidelity $f_0$ of fresh entangled states generated (absorbed) by memory platforms are shown. Using three to five repeaters between the end nodes seems to be optimal for both memory platforms in terms of waiting times. 
    }
\label{fig:RE_Dia_FOM}
\end{figure*}

To answer the question posed above, let us set the end-to-end distance to be 100~km. We also use the FN \textsc{swap-asap} policy for illustrative purposes, and we also include the classical communication overheads, but we leave out distillation-based policies. The physical parameters are chosen to reflect the proposed experimental implementation above. They are as follows:
\begin{itemize}
    \item Number of multiplexing channels $n_{ch}$: We choose $n_{ch} = 5$, since although the SPDC spectrum in Fig.~\ref{fig:exp_imp}(b) can be divided into nine channels, both the ebit rates of polarization entangled states and their fidelity with respect to the maximally entangled Bell states fall as one moves away from the center of the spectrum. 
    \item Source rate $R$: We choose $R = 5000$ ebits/s. Since the original SPDC source is divided into different frequency channels, the number of ebits per channel is at least lower by a factor of $n_{ch}$. Furthermore, as shown in Fig.~\ref{fig:exp_imp}(b) the sinc function of the SPDC spectrum makes the ebit rates fall as one moves away from the central channel. Thus, to make a conservative estimate, we choose $R$ to be rate corresponding to the dimmest channel. This also allows us to choose the same rate for all channels (consistent with our simulation assumptions that all channels are identical).
    \item Fidelity of state at source $f_s$: We choose $f_s = 0.93$, as obtained in the experiment mentioned above. Recall that the fresh elementary link fidelity is given by $f_e = f_s f_0^2$.
\end{itemize}
Since quantum memories and entanglement swapping have not yet been implemented in the experiment, we choose two paradigmatic values for the swapping success probability, which are $p_{sw} = 0.5$ (linear optics) and $p_{sw} = 1.0$ (perfect gates on solid-state memories). For the quantum memories, we choose two quantum memory platforms for our analysis, a rare-Earth metal based memory ($\rm{Pr}^{3+}$ ions) and a diamond vacancy based memory. The state-of-the-art physical parameters for both the Rare-Earth and Diamond-based quantum memories are taken from Ref.~\cite[Table~1]{memory_numbers}.
Given the above SPDC source and memory parameters, we can calculate all the relevant simulation parameters following the prescription given above.
We plot the average rate and fidelity of end-to-end entanglement distribution both for near-term and state-of-the-art parameter values in Fig.~\ref{fig:RE_Dia_FOM}. We tabulate the values for the fidelity and entanglement distribution rate for the two memory platforms in the optimal setting, along with the hardware parameters, in Table~\ref{tab:memories}. The optimal number of nodes for the near-term (state-of-the-art) settings in terms of the waiting time seems to be around five to seven (two to four) for both memory platforms, which have $T_2$ times on the order of a few milliseconds. In the rate optimal setting (number of nodes), we anticipate entanglement distribution rates for the near-term (state-of-the-art) parameters ranging from a few Hertz to tens of Hertz, with an average end-to-end link fidelity of 60-70\% (53-55\%). We also see that as the swapping success probability is increased, especially close to the optimal choice of number of repeaters, there is a substantial increase in the rates, although the end-to-end fidelity does not substantially improve. This is also an artifact of choosing the farthest neighbor policy, which prioritizes forming long links. Also, the results of both platforms indicate that the end-to-end link fidelity decreases with an increasing number of nodes. 

Finally, we look at the performance of SN \textsc{doubling} and distillation-based multiplexing policies for the linear repeater chain considered above. Significant differences between \textsc{swap-asap} and \textsc{doubling} policies appear for $n>5$ (four links or more). For example, for a nine-node chain using a rare-Earth metal-based memory and 50\% swapping success probability, the SN \textsc{doubling} policy leads to around 60\% higher waiting time compared to FN \textsc{swap-asap}. Similar improvement using the \textsc{swap-asap} policy can also be seen for the Diamond vacancy-based memory.

Since the elementary links have (nearly) unit fidelity for the rare-Earth metal-based memory, distillation is detrimental as shown by the results in Fig.~\ref{fig:swdl_vs_dlsw_sims}. For the diamond vacancy-based memory with state-of-the-art initial state fidelity of 89\%, we also find that for the network parameters under consideration, distillation is in fact detrimental. This observation can be attributed to the higher CC overheads compared to the small increase in fidelity that distillation offers in this case. For example, the maximum rates available (for optimal number of repeaters) with deterministic entanglement swapping and using FN-based \textsc{distill-swap} policy is around 1~Hz, which is at least five times smaller than the rates available without distillation.

\begin{table}
\renewcommand{\arraystretch}{1.2}
\centering
\begin{tabular}{|c|c|c|c|c|c|c|c|c|}
\hline
\multicolumn{9}{|c|}{State-of-the-art parameters} \\
\hline
  Memory & $T_2$ & $\eta$ & $f_e$ & $p_\ell$ & $m^\star$ & $m_0$ & $R$ & $f$ \\
\hline\hline
Rare-Earth  & 1.2 ms & 0.69 & 0.89 & 0.023 & 7 & 1 & 53 Hz & 0.55 \\
\hline
Diamond & 13 ms & 0.5 & 0.74 & 0.003 & 52 & 22 & 10 Hz & 0.54 \\
\hline
\multicolumn{9}{|c|}{Near-term parameters} \\
\hline\hline
Rare-Earth  & 1.2 ms & 0.69 & 0.91 & 0.046 & 7 & 0 & 190 Hz & 0.59 \\
\hline
Diamond & 13 ms & 0.5 & 0.91 & 0.04 & 78 & 9 & 342 Hz & 0.59 \\
\hline
\end{tabular}

\caption{\textbf{Optimized figures of merit for a repeater chain using frequency multiplexing and quantum memories.} Hardware parameters for two quantum memory platforms (rare-earth ion and diamond vacancy) \cite[Table~1]{memory_numbers}\cite{SiV_mem,RE_mem}, and their respective optimal entanglement distribution rates ($R$) and end-to-end fidelities ($f$), for a 100~km repeater chain, based on the data in Fig.~\ref{fig:RE_Dia_FOM}. $T_2$ is the coherence time of the memory, $\eta$ is the absorption efficiency, i.e., Debye--Waller factor (DWF). in the case of Rare-Earth memory, for emissive memories like the Diamond vacancy-based memory an effective DWF of 0.5 is assumed. $f_e$ is the fidelity of a fresh elementary link (which includes the fidelity of the SPDC source state $f_s$ and the initial entanglement generated or absorbed by the quantum memories $f_0$). For the near-term case we consider $f_0 = 0.99$. Since, solid-state memories are considered, we assume swapping to be deterministic. Also, shown are the parameters $p_\ell$, $m^\star$, and $m_0$ of our theoretical model.}
\label{tab:memories}
\end{table}

\section*{Conclusions}
\label{sec:conclusions}

Near-term, resource-constrained quantum networks require the use of \textit{hardware-aware} and \textit{network state-aware} policies, in order to achieve high performance in terms of end-to-end waiting times and fidelities. These policies should be quasi-local, in order to reduce the impact of classical communication (CC) costs on performance. A careful assessment of the trade-off between CC costs and performance is crucial for the identification of optimal entanglement distribution policies for different hardware parameter regimes, particularly for first-generation quantum repeaters.

In this paper, we have taken steps in these directions. We have presented practical, quasi-local multiplexing-based policies for long-distance entanglement distribution using quantum repeaters with multiple memories. We call these policies farthest neighbor (FN) \textsc{swap-asap} and strongest neighbor (SN) \textsc{swap-asap}, adapting swap-as-soon-as-possible (\textsc{swap-asap}) policies to multiplexing-based linear networks. These policies go beyond fully local policies, such that the nodes have knowledge of states of the other nodes in the chain, but not necessarily full, global knowledge.

We have shown that not only do our quasi-local multiplexed policies retain their advantage over fully-local policies when CC costs are included, but this advantage can also increase with an increasing number of nodes. This is a surprising and counter-intuitive result, and shows that policies that use some knowledge of the network state (but not necessarily full, global knowledge) can enhance network performance, even when the CC costs associated with such knowledge are accounted for. The advantage attained by our quasi-local policies is rooted in the fact that they only require CC between connected regions of the chain; therefore, for most time steps, end-to-end CC is not required. This is an important conclusion from the point of view determining useful policies beyond the fully-local ones, especially for large quantum networks.
We also benchmarked our policies against the widely studied doubling policy. We have shown via simulations that the FN and SN \textsc{swap-asap} policies can yield a considerable advantage over the doubling policy, both in terms of reducing the average end-to-end waiting time and increasing the fidelity of the end-to-end link. These advantages occur in the most relevant parameter regimes for near-term quantum networks, which correspond to the most resource-constrained settings.

We then considered policies with entanglement distillation, and we proposed a new policy that we call \textsc{distill-asap}. This policy combines the benefits of existing distillation policies, like the banded, greedy, and pumping approaches, and thus is able to outperform the doubling policy nested with distillation. We also provide answers to two important policy questions related to distillation: When is distillation useful, and when it is useful, should we distill then swap, or the other way around? In this direction, the next question of immediate interest for future work is: How much to distill? Indeed, in this work, we only considered entanglement distillation policies that take $N$ links and distill them to one, but one could instead distill $N$ links to some $K$ links greater than one, and such policies have been the study of recent work~\cite{N2K_goodenough}.

Finally, we performed an experimental demonstration of multiplexing using a high-dimensional biphoton frequency comb (BFC), which would form the backbone of our proposed experimental implementation of a linear-chain quantum network with multiplexing capabilities. We then assessed the anticipated performance of such a network over 100~km for two concrete memory platforms, namely rare-earth ion and diamond vacancy based quantum memories, when using our multiplexing-based policies. 

Moving forward, it would be interesting to assess the optimality of the policies presented in this work. Optimal policies for multiplexed quantum repeater chains can be obtained using reinforcement learning (RL), using the methods developed in Refs.~\cite{IVSW22,reiss2022deep,haldar2023fastreliable}. By adding the appropriate classical communication costs to such policies, we could assess not only the optimality of the FN and SN \textsc{swap-asap} policies presented in this work, but also we could determine whether quasi-local policies in general can outperform fully-global ones, when CC costs are accounted for.

Throughout this work, we have also considered average values of the waiting time and fidelity of the end-to-end link. The behavior of the higher moments of these quantities, and in particular the distributions of these quantities, is an important consideration that has an impact on how well the average estimates the real behavior. We anticipate that this will involve new techniques, and is an interesting direction for future work.

\section*{Methods}
\subsection*{Noise model and decoherence} \label{sec:appendix3}
    
    We consider the following Pauli channel noise model for qubit decoherence~\cite{sarvepalli2009asymmetriccodes,ghosh2012surfacecodedecoherence}:
    \begin{equation}
        \mathcal{N}_{m_1^{\star},m_2^{\star}}(\rho)=p_I\rho+p_X X\rho X+p_Y Y\rho Y+p_Z Z\rho Z,
    \end{equation}
    where $X,Y,Z$ are the single-qubit Pauli operators, defined as
    \begin{equation}
        X\coloneqq\begin{pmatrix} 0 & 1 \\ 1 & 0 \end{pmatrix},\quad Y\coloneqq\begin{pmatrix} 0 & -\I \\ \I & 0 \end{pmatrix},\quad Z\coloneqq\begin{pmatrix} 1 & 0 \\ 0 & -1 \end{pmatrix},
    \end{equation}
    and the probabilities $p_I,p_X,p_Y,p_Z$ are defined as
    \begin{align}
        p_I&=\frac{1+\e^{-\frac{1}{m_2^{\star}}}}{2}-\frac{1-\e^{-\frac{1}{m_1^{\star}}}}{4},\\
        p_X&=\frac{1-\e^{-\frac{1}{m_1^{\star}}}}{4},\\
        p_Y&=\frac{1-\e^{-\frac{1}{m_1^{\star}}}}{4},\\
        p_Z&=\frac{1-\e^{-\frac{1}{m_2^{\star}}}}{2}-\frac{1-\e^{-\frac{1}{m_1^{\star}}}}{4},
    \end{align}
    where $m_1^{\star},m_2^{\star}\in\{1,2,\dotsc\}$. This channel is the Pauli-twirled version of the concatenated amplitude damping and dephasing channels.

    Let us also define the two-qubit Bell states as follows:
    \begin{align}
        \ket{\Phi^{\pm}}&\coloneqq\frac{1}{\sqrt{2}}(\ket{0,0}\pm\ket{1,1}),\quad \Phi^{\pm}\coloneqq\ketbra{\Phi^{\pm}}{\Phi^{\pm}},\\
        \ket{\Psi^{\pm}}&\coloneqq\frac{1}{\sqrt{2}}(\ket{0,1}\pm\ket{1,0}),\quad \Psi^{\pm}\coloneqq\ketbra{\Psi^{\pm}}{\Psi^{\pm}}.
    \end{align}

    \paragraph*{Decoherence of an entangled qubit pair.} Now, let us suppose that the initial state of an entangled qubit pair is the perfect maximally-entangled Bell state $\Phi^+$. Then, after $m\in\{1,2,3,\dotsc\}$ time steps, it is straightforward to show that the decohered entangled state is equal to
    \begin{align}
        &(\mathcal{N}_{m_1^{\star},m_2^{\star}}\otimes\mathcal{N}_{m_1^{\star},m_2^{\star}})^{\circ m}(\Phi^+)\nonumber\\
        &\quad=\frac{1}{4}\left((1+\e^{-2m/m_1^{\star}}+2\e^{-2m/m_2^{\star}})\Phi^+\right.\nonumber\\
        &\qquad \left.+(1+\e^{-2m/m_1^{\star}}-2\e^{-2m/m_2^{\star}})\Phi^-\right.\nonumber\\
        &\qquad \left.+(1-\e^{-2m/m_1^{\star}})\Psi^+\right.\nonumber\\
        &\qquad \left.+(1-\e^{-2m/m_1^{\star}})\Psi^-\right). \label{eq-decohered_entangled_state}
    \end{align}
    For a proof, we refer to Ref.~\cite[Appendix~E]{haldar2023fastreliable}. This implies that the fidelity of the state after $m$ time steps is $\frac{1}{4}(1+\e^{-2m/m_1^{\star}}+2\e^{-2m/m_2^{\star}})$.

    In order to connect with the results presented in ``Results'', for a given value of $m^{\star}$ as presented there, let us take
\begin{equation}\label{eq-m1_star_m2_star_assumptions}
    m_1^{\star}=2m^{\star},\quad m_2^{\star}=2m^{\star}.
\end{equation}
The state in \eqref{eq-decohered_entangled_state} is then
\begin{align}
    \sigma(m)&\coloneqq(\mathcal{N}_{2m^{\star},2m^{\star}}\otimes\mathcal{N}_{2m^{\star},2m^{\star}})^{\circ m}(\Phi^+)\\
    &=\frac{1}{4}(1+3\e^{-\frac{m}{m^{\star}}})\Phi^++\frac{1}{4}(1-\e^{-\frac{m}{m^{\star}}})(\Phi^-+\Psi^++\Psi^-)\\
    &=f(m)\Phi^++\frac{1-f(m)}{3}(\Phi^-+\Psi^++\Psi^-),\label{eq-decohered_entangled_state_isotropic}
\end{align}
where the fidelity $f(m)$ after $m$ time steps is equal to
\begin{equation}\label{eq-fidelity_function}
    f(m)\coloneqq\frac{1}{4}(1+3\e^{-\frac{m}{m^{\star}}}).
\end{equation}
Note that we have chosen the value of $m_2^{\star}$ in Eq.~\eqref{eq-m1_star_m2_star_assumptions} such that, at the cutoff time $m^{\star}$, the fidelity of the entangled qubit pair is $f(m^{\star})=\frac{1}{4}(1+3/e)\approx 0.5259$. We emphasize that our results can be applied to other choices of $m_1^{\star}$ and $m_2^{\star}$---in particular, choices that could take dephasing as the dominant source of noise~\cite{ourari2023telecomphotonsErion}, in order to make connections to prior works~\cite{RPL09,rozpedek2018parameterregimesrepeater,rozpedek2019neartermrepeaterNV,reiss2022deep,kamin2022exactrateswapasap}.

We can invert the fidelity function in Eq.~\eqref{eq-fidelity_function}, such that for a given fidelity $F$, the corresponding age of the qubits is given by
\begin{equation}\label{eq-fidelity_to_age}
    f^{-1}(F)=\left\lceil-m^{\star}\log((4F-1)/3)\right\rceil,
\end{equation}
for all $F\in(f(m^{\star}),1)$, where we take the ceiling $\ceil{\cdot}$ because we want an integer for the age.

We also remark that the decohered entangled state in Eq.~\eqref{eq-decohered_entangled_state} is a Bell-diagonal state of the form
\begin{equation}\label{eq-gen_iso_state}
    \rho_{(a,b,c)}\coloneqq(a+b)\Phi^++(a-b)\Phi^-+c\Psi^++c\Psi^-,
\end{equation}
with $a=\frac{1}{4}(1+\e^{-2m/m_1^{\star}})$, $b=\frac{1}{2}\e^{-2m/m_2^{\star}}$, and $c=\frac{1}{4}(1-\e^{-2m/m_1^{\star}})$. As such, its entanglement can be characterized entirely by the quantity $a+b$, namely, the fidelity of the state with respect to $\Phi^+$. In particular, the state is entangled if and only if this fidelity is strictly greater than $\frac{1}{2}$; see, e.g.,~\cite[Chapter~2]{Kha21}. Therefore, under the assumptions in Eq.~\eqref{eq-m1_star_m2_star_assumptions}, such that the fidelity is given by Eq.~\eqref{eq-fidelity_function}, we find that the state in Eq.~\eqref{eq-decohered_entangled_state} is entangled for all time steps $m\in\{0,1,2,\dotsc,m^{\star}\}$. Furthermore, inserting $F=\frac{1}{2}$ in Eq.~\eqref{eq-fidelity_to_age}, we find that the age at which entanglement is lost is $\ceil{m^{\star}\log(3)}$.

\subsection*{Entanglement distillation with the BBPSSW protocol}

    In this work, the entanglement distillation protocol we consider is the BBPSSW protocol introduced in Ref.~\cite{BBP96}. This protocol, involving the distillation of two entangled pairs to one, consists of a CNOT gate applied locally to each pair of qubits, followed by measuring the local target qubits of the CNOTs in the Pauli-$Z$ basis. If $f_1$ and $f_2$ are the fidelities of the two entangled qubit pairs, then the success probability of the distillation procedure is
    \begin{equation}\label{eq-distill_succ_appendix}
        P_{\text{distill}}(f_1,f_2)=\frac{8}{9}f_1f_2-\frac{2}{9}(f_1+f_2)+\frac{5}{9},
    \end{equation}
    and the fidelity of the resulting state is
    \begin{align}
        F_{\text{distill}}(f_1,f_2)&=\frac{1}{P_{\text{distill}}(f_1,f_2)}\left(\frac{10}{9}f_1f_2-\frac{1}{9}(f_1+f_2)+\frac{1}{9}\right)\\
        &=\frac{1-(f_1+f_2)+10f_1f_2}{5-2(f_1+f_2)+8f_1f_2}.
        \label{eq-distill_f}
    \end{align}
    Note that these formulas apply only when the input states to the distillation protocol are so-called \textit{isotropic states}, given by $f\Phi^++(\frac{1-f}{3})(\Phi^-+\Psi^++\Psi^-)$, with $f\in[0,1]$ being the fidelity. Also, distillation should be considered useful only when it can improve upon the fidelity of the best of the two links. In other words, in order for distillation to be useful, we should have $F_{\text{distill}}(f_1,f_2)>\max\{f_1,f_2\}$. In Fig.~\ref{fig:distill_useful_region}, we plot the $(f_1,f_2)$ region defined by this inequality. General restrictions on useful entanglement distillation protocols can be found in Ref.~\cite{zang2024nogousefulpurification}.
    
    From Ref.~\cite{BBP96}, we have the following relations, indicating how the Bell states $\Phi^{\pm}$ and $\Psi^{\pm}$ are transformed by the BBPSSW protocol.
    \begin{center}
    \renewcommand{\arraystretch}{1.2}
    \begin{tabular}{c||c|c|c|c}
          & $\Phi^+$ & $\Phi^-$ & $\Psi^+$ & $\Psi^-$  \\ \hline
          $\Phi^+$ & $\Phi^+$ & $\Phi^-$ & $0$ & $0$ \\ \hline
          $\Phi^-$ & $\Phi^-$ & $\Phi^+$ & $0$ & $0$ \\ \hline
          $\Psi^+$ & $0$ & $0$ & $\Psi^+$ & $\Psi^-$ \\ \hline
          $\Psi^-$ & $0$ & $0$ & $\Psi^-$ & $\Psi^+$ 
    \end{tabular}
    \end{center}
    Here, the rows correspond to the first link, and the columns correspond to the second link. From this, it follows that the unnormalized state corresponding to success of the BBPSSW protocol, when both input states are of the form \eqref{eq-decohered_entangled_state_isotropic}, is
    \begin{widetext}
    \begin{align}
        \sigma(m_1)\otimes\sigma(m_2)&\mapsto \left(f(m_1)f(m_2)+\frac{1-f(m_1)}{3}\frac{1-f(m_2)}{3}\right)\Phi^++\left(f(m_1)\frac{1-f(m_2)}{3}+\frac{1-f(m_2)}{3}f(m_1)\right)\Phi^-\nonumber\\
        &\qquad\qquad\qquad\qquad+2\frac{1-f(m_1)}{3}\frac{1-f(m_2)}{3}(\Psi^++\Psi^-)\\
        &\mapsto\left(\frac{10}{9}f(m_1)f(m_2)-\frac{1}{9}(f(m_1)+f(m_2))+\frac{1}{9}\right)\Phi^+\nonumber\\
        &\qquad\qquad\qquad\qquad+\left(-\frac{10}{9}f(m_1)f(m_2)+\frac{1}{9}(f(m_1)+f(m_2))+\frac{8}{9}\right)\frac{1}{3}(\Phi^-+\Psi^++\Psi^-),
    \end{align}
    \end{widetext}
    where in the last line we applied the isotropic twirling map $X\mapsto\Tr[\Phi^+X]\Phi^++\frac{1-\Tr[\Phi^+X]}{3}(\Phi^-+\Psi^++\Psi^-)$~\cite[Example~7.25]{Wat18_book}, which is required in order to use the BBPSSW protocol recursively. Therefore, after normalization, the state is
    \begin{multline}
        F_{\text{distill}}(f(m_1),f(m_2))\Phi^+\\+\frac{1-F_{\text{distill}}(f(m_1),f(m_2))}{3}(\Phi^-+\Psi^++\Psi^-).
    \end{multline}
    We observe that this is a state of the form $\sigma(m')$ in \eqref{eq-decohered_entangled_state_isotropic}, for some age $m'$. We can determine the age $m'$ corresponding to the state after distillation via $m'=f^{-1}(F_{\text{distill}}(f(m_1),f(m_2)))$, with the expression for $f^{-1}$ given in Eq.~\eqref{eq-fidelity_to_age}. In particular, the updated age $m'$ is given by:
    \begin{equation}
        m'=\left\lceil m^\star\log\left(\frac{15-6(f(m_1)+f(m_2))+24f(m_1)f(m_2)}{32f(m_1)f(m_2) - 2(f(m_1)+f(m_2))-1}\right)\right\rceil .
    \end{equation}

    \begin{figure}
        \centering
        \includegraphics[scale=0.6]{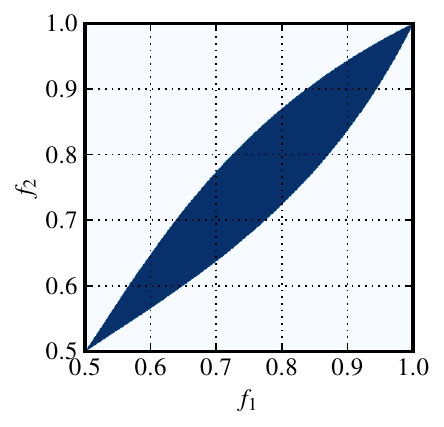}
        \caption{\textbf{Usefulness of distillation with the BBPSSW protocol.} The dark blue shaded region indicates the values of the initial fidelities, $f_1$ and $f_2$, for which the fidelity after distillation with the BBPSSW protocol~\cite{BBP96} exceeds the fidelity of the best of the two links, i.e., when $F_{\text{distill}}(f_1,f_2)\geq\max\{f_1,f_2\}$.}
        \label{fig:distill_useful_region}
    \end{figure}

\subsection*{Classical communication costs for local, quasi-local and global policies}
\label{sec:CC_appendix}

In this section, we aim to discuss some of the details related to our model for calculating the classical communication costs for different policies. As was mentioned in ``Results'', based on the amount of information available to nodes when making policy decisions, policies can be classified as local, quasi-local, and global. We remind the reader that in our model, for all policies, no CC is allowed within one MDP step. Consequently, if at the beginning of an MDP step it is determined that multiple interdependent swaps or distillation attempts are to be made, the failure of one leads to the failure of all dependent swaps or distillation attempts.

We begin by discussing local policies. Distillation, if performed, is always performed at the elementary link level when the links are freshly prepared, and thus no extra CC cost has to be added for distillation. For entanglement swapping, we considered that all the CC cost is paid only at the end of the protocol, since nodes determine their policy decisions or actions based on the locally perceived ages of their links. One concrete implementation through which this can be achieved is the following. Consider a central control or information processing node. All the Bell measurement outcomes (success or failure, and the outcomes obtained in case of success) are continuously transmitted by each node to this central unit. The central node waits until it receives a CC signal indicating a successful swap that leads to end-to-end entanglement generation. When such a signal is received, it relays this information to the end nodes, indicating the end of the protocol, along with the instructions about the list of local operations (Pauli rotations) they must perform to get the final end-to-end virtual link or links. Thus, no node waits for CC from any other node while taking decisions, but the end nodes know when to perform local operations to extract the end-to-end state. End nodes need not know beforehand when to stop, because perceived ages are always lower than the real ages of links, and hence there is no risk of them discarding their links before the CC arrives from the central unit indicating the end of one round of the protocol. Of course, the assumption of an additional central processing node is just an aid to understanding; this role could very well be played by, e.g., the end nodes, or the central node of the chain itself. 

For quasi-local policies, we considered that the CC cost for every MDP step is equal to the length (number of nodes) of the longest link involved in a swap multiplied by the elementary link CC time (heralding time). We stress here that this is a simple but useful estimate for the CC cost, and the actual CC cost will depend on the exact time evolution of the network and the concrete protocol used to share and process the classical information emerging from different nodes of the network at different times. Each node sees a constantly evolving picture of the entire network, as new classical communication signals reach it, and thus can decide to make its decisions based on any of these perceived snapshots. A concrete CC policy determination can only be done aided by a quantum network simulator such as those developed in Refs.~\cite{quisp, netsquid}, etc. The aim of our work is not to replace such a simulator, but to understand the merits and demerits of different classes of policies based on their use of network-state knowledge. Quasi-local policies utilize the fact that each node, using CC, can determine the nodes that it is connected to and the ages of its links. Ranking of links for swaps is done based on this knowledge. Thus, the most important question we must answer is the following: How do links know when to stop anticipating more CC signals and make a decision to perform subsequent swaps?

\begin{figure*}
    \centering
    \includegraphics[width=0.85\textwidth]{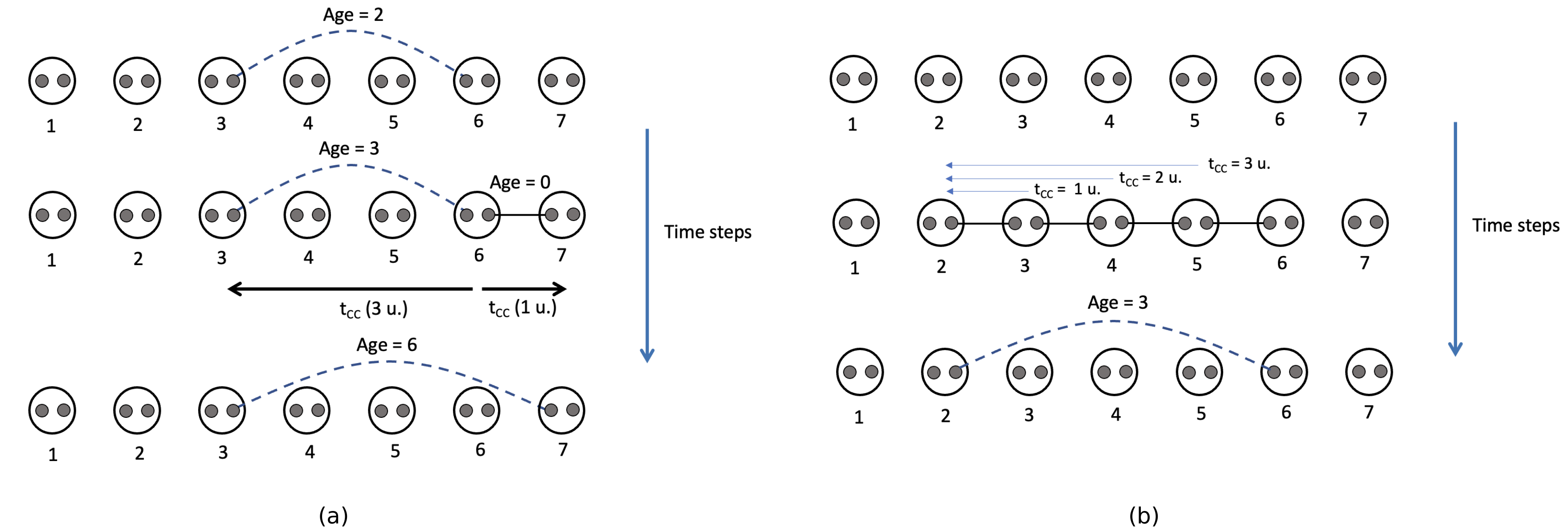}
    \caption{\textbf{Classical communication costs of entanglement swapping.} (a) A generic network evolution trajectory. The CC cost is estimated in our model as the length of the longest link involved in a swap in a given time-step. The swap at node 6 leads to a CC cost of 3 units, where 1 unit is equal to the heralding time for the elementary link. (b) Evolution of a section of the chain with no connections initially. The end nodes stop waiting for CC if there is a gap in the stream of CC packets they receive.}
    \label{fig:cc_example}
\end{figure*}

Consider the example of a generic network state as shown in Fig.~\ref{fig:cc_example}(a). We ignore distillation in this discussion, the CC cost for which is much easier to estimate and has been explained in ``Results''. Let us focus on the actions of node 6 and isolate one of the channels between the nodes of the chain. Node 6 knows at the beginning of the MDP step that it is connected to node 3. Now it will try to generate an elementary link with node 7. As soon as such a link is generated, it will perform a swap. Some of the time evolution trajectories that can emerge are as follows.
\begin{itemize}
    \item Case 1: If the swap at node 6 fails, it will know that either the swap it attempted failed or a swap at node 3 must have been attempted and failed in the meantime. In any case, it is free to restart its memory. As per our model, a swap at node 3 or node 6 will lead to a CC cost of three time steps, because that was the longest link involved in those swaps. Therefore, we overestimate the CC cost in this case. That is, in our policies, we unnecessarily keep node 6 waiting for two extra time steps.
    
    \item Case 2: The swap at 6 succeeds, in which case node 6 will know that it must wait for at least three time steps, since this will give it enough time to transmit information to nodes 3 and 7 that they now share a link. In this case, our estimate of CC cost is correct. Of course, node 6 could have started to attempt generating new elementary links and performed subsequent swaps in the meantime, but in our policies we make the choice to not do so. Node 6 restarts only after the required CC time has elapsed. 
    
    \item Case 3: In the time that the link between nodes 6 and 7 is produced, if the link between nodes 2 and 3 had also been produced, this information would be available at node 3. Now swaps will be attempted both at node 3 and 6. In this case, the longest link involved in a swap will be between nodes 3 and 7 (or equivalently between 2 and 6), and hence the CC cost is four time steps. This will give enough time for node 3 to communicate to nodes 2 and 7 that they are now connected.
\end{itemize}
Of course, the question now arises: why should nodes 2 and 7 wait for four time steps and not try to perform swaps as soon as possible? The answer is again similar in spirit to the reasoning above. When the elementary link between 2 and 3 is heralded, 3 can send a simultaneous CC signal to 2 indicating that it already has a connection with node 6 or node 7, as the situation may be at the instant the CC signal is generated. This will inform node 2 that if it attempts a swap and it is successful, this indicates that the swap at node 3 must have also been successful. Thus, it must wait for four time steps to find out whether it connects to node 6 or node 7. Our estimate of CC cost is therefore also correct in this case.

More generally, the cases of network evolution where many interdependent or connected swaps are attempted in the same time step are extremely complex, with many possible time evolution trajectories and different CC costs associated with each. But following the principle that CC can be shared transitively between nodes as soon as a new member joins a connected portion of the chain, many of these complexities can be resolved and a concrete swapping policy can be established. Taking such an approach, and as illustrated via the example above, the CC cost estimated by taking the length of the longest link involved in a swap is an upper bound on the real CC cost.  

To elaborate on such a scheme further, we give another example in Fig.~\ref{fig:cc_example}(b). We consider a portion of the network in which there are no links to begin with and show how a connected segment of the chain is established, with each node aware of its connections.

At the beginning of the protocol, each node tries to generate an elementary link. Suppose all of these attempts succeed. Now, each node will attempt swaps simultaneously, because the information about their respective elementary link generation success is available in one time step (heralding time). Now, if one of the swaps fails all of them must fail, and this information will be available at each node, allowing them to restart their memories in the next time step. This will also mark the end of one round of the protocol for this portion of the chain and no extra CC will be required. The other possibility is that all the swaps have succeeded, in which case the internal nodes will be free to restart without the need for any CC. They will just send their measurement outcomes to their next node, that is the node they were connected to at the beginning of the outcome, this CC signal will now be relayed along the chain with more information being added at each node until all this information reaches both the end nodes (see Fig.~\ref{fig:cc_example}(b)). One of the end nodes can keep utilizing this classical information to perform the local operations necessary to create an end-to-end entangled link. (Which one? We can have a set convention that the node on the left, i.e., one receiving CC signals from the right of it, will do Pauli rotations.) As more and more CC has reached them, they get informed about the success and failure of swaps that occurred farther and farther away, and they get to know that they now share a link with some far-off node that has a certain age. The end nodes will stop waiting for more CC once there is a gap of one time step in the CC packets reaching them. This will indicate that they have received all CC from the connected region of the chain.

\subsection*{Error estimates for Monte-Carlo simulations}
\label{sec:Monte_Carlo_details}
All simulation results in this work were produced by performing Monte Carlo simulations. Therefore, all average waiting time and fidelity (end-to-end link age) values have some statistical errors. For most hardware parameter regimes considered in the work, averages of the figures of merit were obtained by performing 20 batches of 1000 runs each. One run is considered as starting from a fully disconnected chain and evolving the network until at least one end-to-end link is generated. The average value reported is the mean of means of all the 20 batches. The standard deviation of this mean (of means) falls with increasing number of batches. The choice of number of batches was made to keep the standard deviation less than 5\% of the reported mean values. Therefore, statistical error (due to finite statistics generated by the Monte Carlo simulations) is less than 5\% for all values reported in our work. 
Here we would also like to point out that higher moments of the waiting times and fidelities are also an important figure of merit, e.g., if the variance of the waiting time is very large, having a small average does not guarantee that most runs will take a small time. The full probability distributions can also be obtained using our simulations, but this was not the focus of our work. Existing works have looked at these distributions in detail, for example Ref.~\cite{SSv19}.

\section*{Data availability}
The data generated through the simulations performed in this work and the experimental data that support the findings of this study are available on request from the corresponding authors.

\section*{Code availability}
The custom code developed to perform the simulations in this work that support the findings of this study are available on request from the corresponding authors.

\section*{Author contributions}
SH, SK, HL, and PB conceived the main idea of the work. SK, PB, and SH developed the theory and performed the computations.
SK and SH performed the analytical calculations. XC and KC performed the experiments supervised by CW. BK encouraged SH to investigate the role of classical communication.
SK, BK, CW, and HL supervised the findings of this work. All authors discussed the results and contributed to the writing of the manuscript. 

\section*{Competing interests}
The authors declare no competing interests.

\begin{acknowledgments}
The authors thank Thomas Searles and Sanjaya Lohani for helpful discussions. This work was supported by the Army Research Office Multidisciplinary University Research Initiative (ARO MURI) through the grant number W911NF2120214. SH also acknowledges support from the RCS program of Louisiana Boards of Regents through the grant LEQSF(2023-25)-RD-A-04. PB and HL also acknowledge the support the US-Israel Binational Science Foundation. SK acknowledges financial support from the German BMBF (Hybrid).
\end{acknowledgments}

\bibliography{refs_main_new}

\onecolumngrid

\section*{Supplementary Note 1: Properties of our multiplexing policies}
\label{sec:fom}

In this section, we evaluate our multiplexing policies with respect to the end-to-end waiting time and age of the youngest link. We exclude classical communication costs here.

\begin{figure}
    \centering
    \includegraphics[width = 0.45\textwidth]{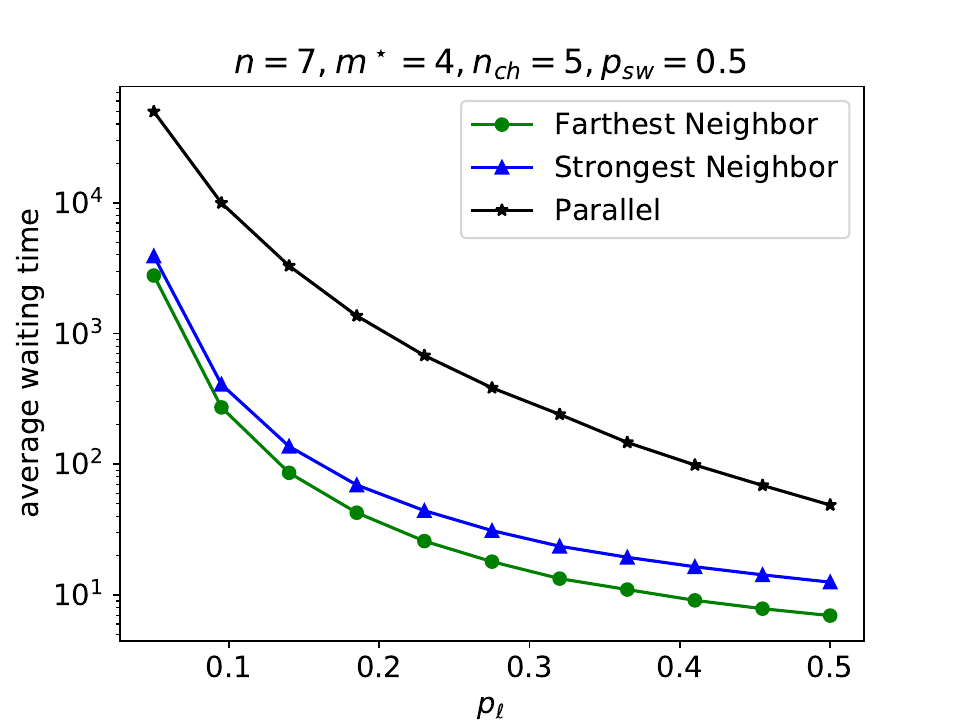}\quad
    \includegraphics[width = 0.45\textwidth]{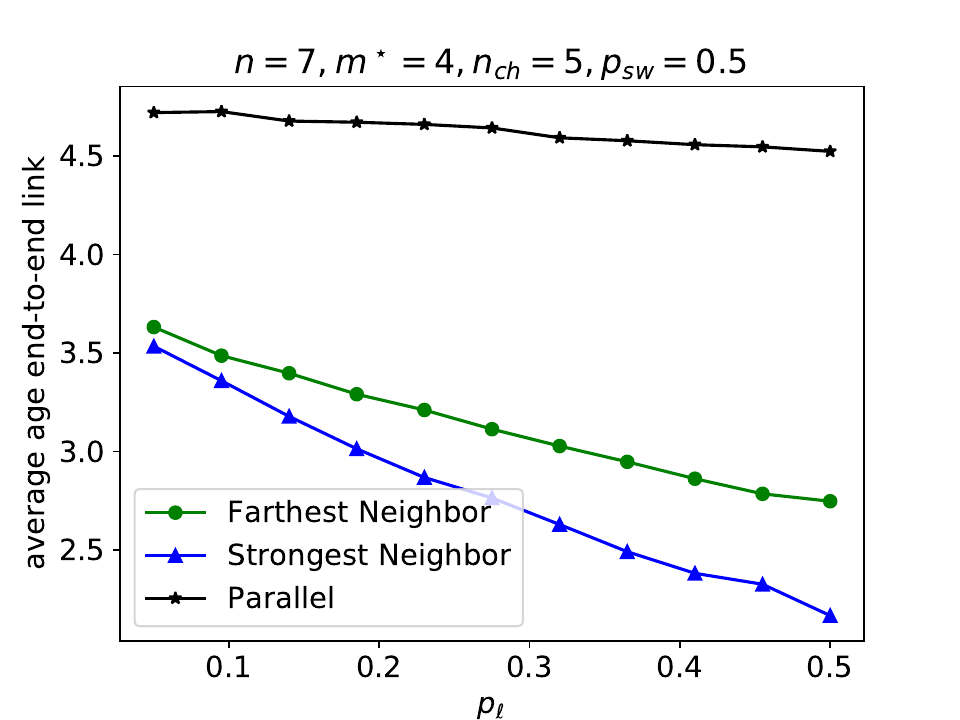}
    \caption{\textbf{Performance of our multiplexing policies against parallel \textsc{swap-asap}.} Average waiting time and average age of the youngest end-to-end link for a repeater chain with $n=7$ nodes, according to the multiplexed (SN and FN \textsc{swap-asap}) and non-multiplexed (parallel \textsc{swap-asap}) policies. 
    The multiplexed versions can enormously reduce average waiting time compared to parallel \textsc{swap-asap}, and at the same time increase end-to-end link fidelities. SN is the better choice for fidelity enhancement, while FN reduces the waiting time more significantly.}
    \label{fig:SN_FN_parallel}
\end{figure}

Intuitively, based on their definitions, the FN \textsc{swap-asap} policy aims to create the longest possible links, with the aim of creating an end-to-end link in the smallest possible number of time steps. Thus, the FN \textsc{swap-asap} policy qualitatively prioritizes minimizing the average waiting time.
On the other hand, the SN \textsc{swap-asap} policy aims to create the strongest or highest-fidelity links between the end nodes of the network. In order to confirm this intuitive picture, and to ascertain the benefit of multiplexing with these two policies, we benchmark the FN and SN policies against $n_{ch}$ copies of the usual \textsc{swap-asap} run in \textit{parallel}. In Fig.~\ref{fig:SN_FN_parallel}, we show that the multiplexed versions can indeed, as expected, enormously reduce the average waiting time compared to the parallel version of \textsc{swap-asap}, and at the same time provide higher end-to-end fidelities.

In Fig.~\ref{fig:SN_FN_parallel}, and for all of the plots shown in this paper, the parameter values that we choose are based on what is computationally feasible, but they are also based on what we expect with actual hardware, as we outline in the main text. The choice of network parameters such as $p_\ell, m^\star$ etc. in specific examples is also motivated by the intent to show the most prominent and overarching trends.

\begin{figure}
    \centering
    \includegraphics[width = 0.45\textwidth]{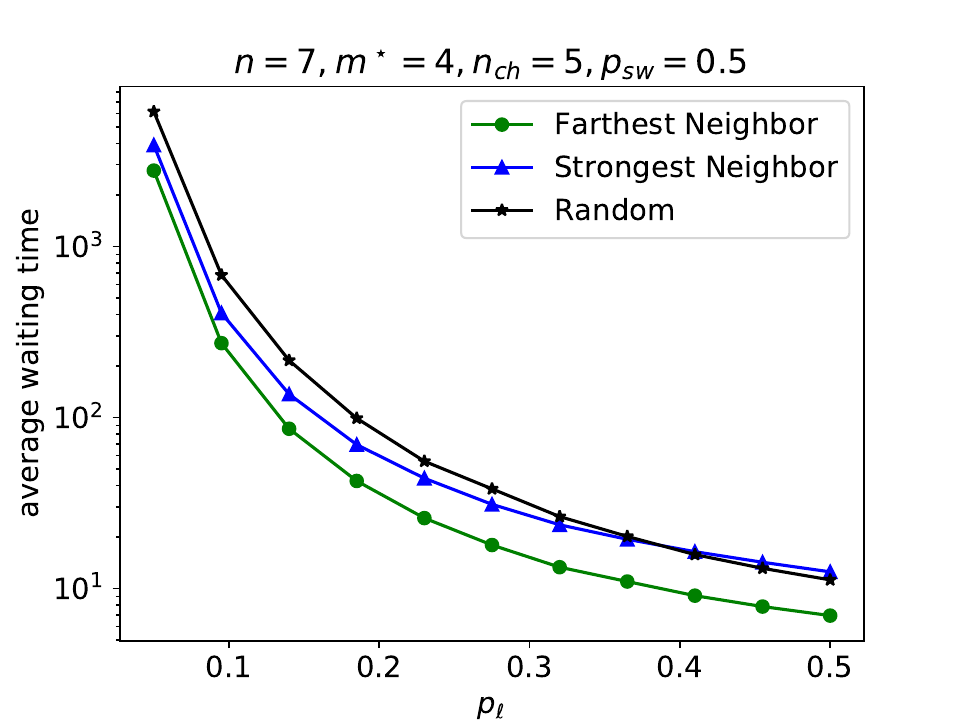}\quad
    \includegraphics[width = 0.45\textwidth]{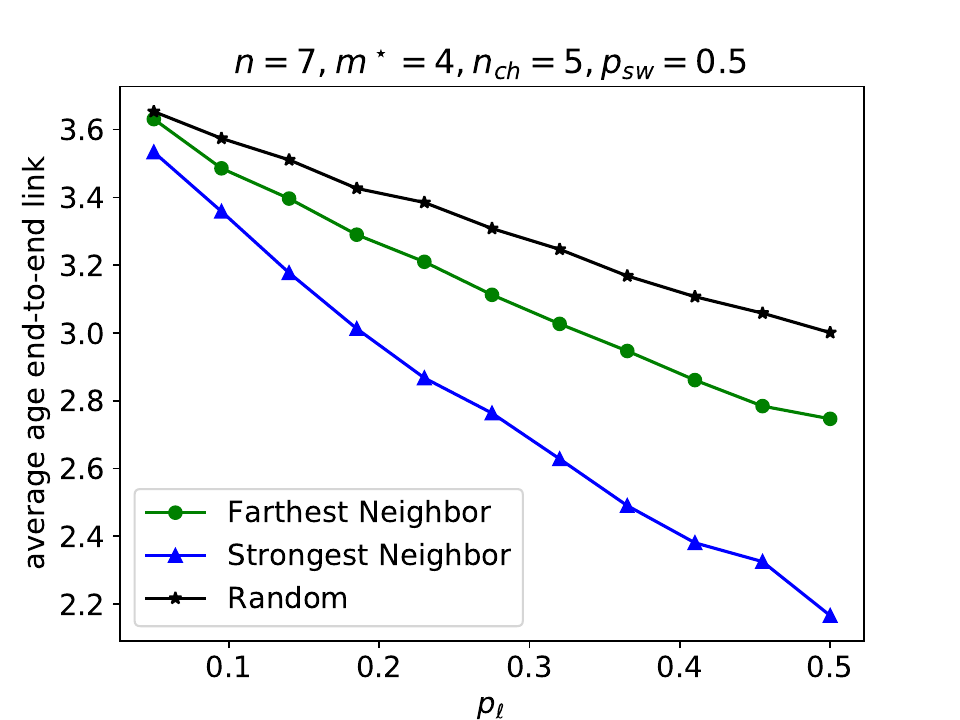}
    \caption{\textbf{Comparison among different multiplexing policies.} Average waiting time and average age of the youngest end-to-end link for a repeater chain with $n=7$ nodes, according to the FN, SN, and random \textsc{swap-asap} multiplexing policies. SN and FN both outperform the random policy, indicating the benefit of using knowledge of the network state. SN is the better choice for fidelity enhancement, while FN reduces the waiting time more significantly.
    }
    \label{fig:SN_FN_Random}
\end{figure}

The advantage of our SN and FN policies is not only based on the trivial advantage one would expect by shifting from a parallel policy to a multiplexing policy. Even amongst multiplexing policies, the SN and FN policies can be advantageous. Indeed, the fact that we assign ranks to different links and use this knowledge to perform entanglement swapping also provides considerable advantage. To show this advantage, we compare SN and FN multiplexing policies against the random \textsc{swap-asap} multiplexing policy defined above. As expected, both SN and FN policies perform better than the random policy, as we show in Fig.~\ref{fig:SN_FN_Random}. We also see that the FN policy provides lower average waiting times compared to the SN policy, and the SN policy provides a higher-fidelity end-to-end link compared to the FN policy, verifying our intuition.

\begin{figure}
    \centering
    \includegraphics[width = 0.65\textwidth]{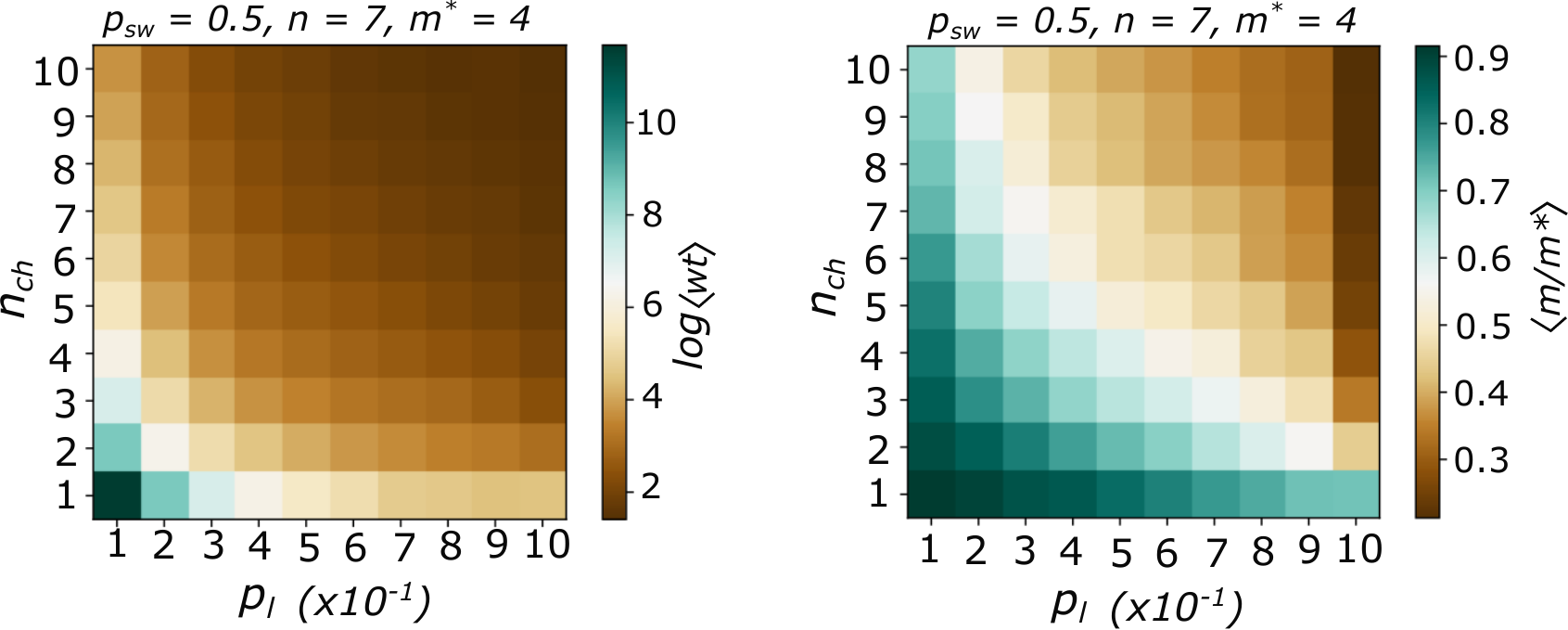}
    \caption{\textbf{Multiplexing can beat the detrimental effect of lower link success probability.} Heat map of average waiting time (left) (denoted $\left<wt\right>$) and ratio of the average age of youngest end-to-end link to the cutoff (right) as a function of elementary link success probability and number of channels. Contours of fixed values run along diagonals from the top-left to the bottom-right, showing that lower link probabilities can be compensated for by increasing the number of channels. Both the average waiting time and the average age of the youngest end-to-end link reduce as the number of channels increases, showing the usefulness of multiplexing. For the waiting time results we use the FN policy, and for the average age results we use the SN policy.}
\label{fig:FN_n_ch_p_l}
\end{figure}

Next, we look at how increasing the number $n_{ch}$ of channels affects the average waiting time of the repeater chain. Increasing $n_{ch}$ increases the effective $p_\ell$ of the chain according to $p_{\ell}\to 1-(1-p_{\ell})^{n_{ch}}$; therefore, in Fig.~\ref{fig:FN_n_ch_p_l}, approximately constant values of the waiting time run along the diagonal from the top-left to the bottom-right, indicating that as the elementary link probability goes down, in order to keep the average waiting time and end-to-end fidelity constant, we should increase the number of channels. In other words, for a given average waiting time or end-to-end fidelity requirement, low elementary link probabilities can be compensated for by increasing the number of multiplexing channels.

\begin{figure}
    \centering
    \includegraphics[width = 0.65\textwidth]{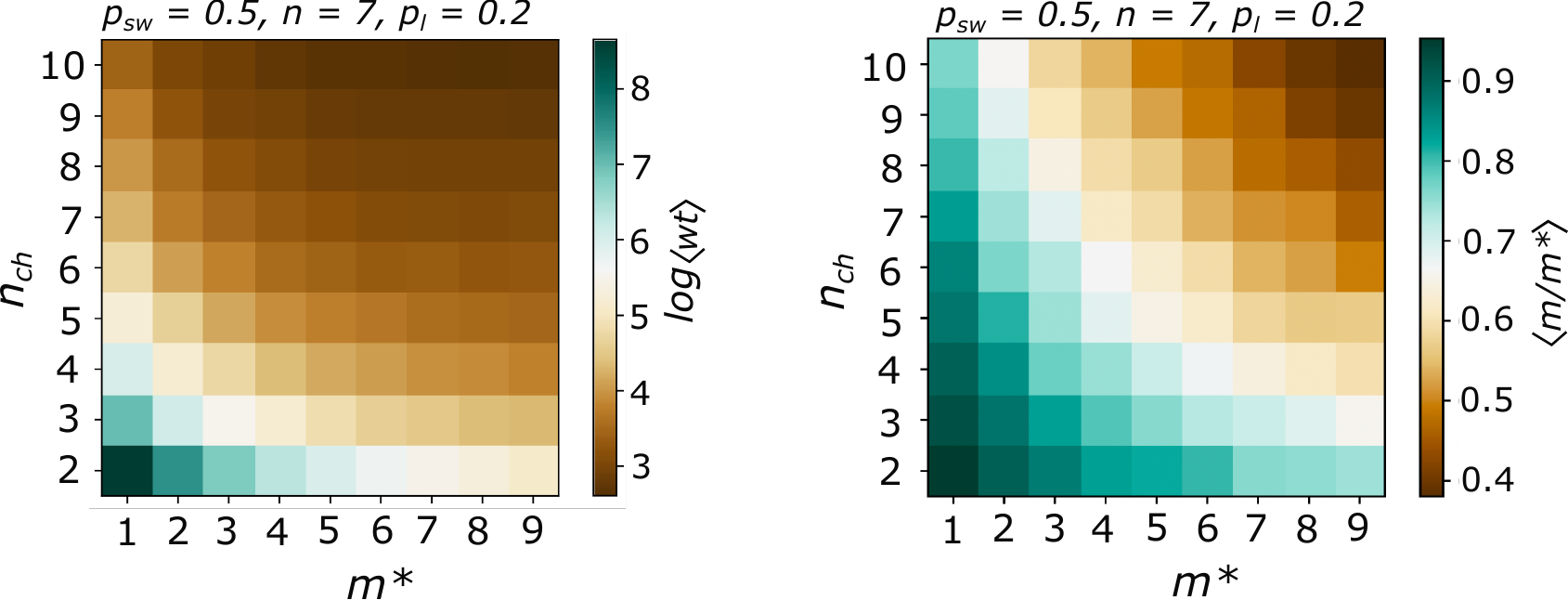}
    \caption{\textbf{Multiplexing can beat the detrimental effect of lower memory coherence time.}Heat map of average waiting time (left) (denoted by $\left<wt\right>$) and ratio of the average age of youngest end-to-end link to the cutoff (right) as a function of link cutoff age and number of channels. Contours of fixed values run along diagonals from the top-left to the bottom-right, showing that lower cutoffs, which translates to memories with smaller coherence times, can be compensated for by increasing the number of channels. For waiting time results we use the FN policy and for average age results we use SN policy.
    }
\label{fig:FN_n_ch_m}
\end{figure}

In Fig.~\ref{fig:FN_n_ch_m}, we fix the elementary link and entanglement swapping success probabilities while varying $n_{ch}$ and the coherence time $m^\star$. Here we see a similar trend, a low value of the memory coherence time can again be compensated for by increasing the number of channels for both figures of merit.

Of course, FN, SN, and random are not the only three possible options for multiplexing-based policies. For example, one could develop a ranking scheme based on a combination of age and distance, or one could perform a search over all possible pairings of links, looking for the optimal pairings in terms of some figure of merit (such as the sum of lengths of all virtual links formed). With respect to such optimal ranking schemes, it is not clear how close the SN and FN schemes are to being optimal; they are, however, simple and intuitive policy choices that require no optimization in terms of pairing of links for entanglement swapping, and hence are practical for near-term implementation. We defer the discussion of other policy alternatives and comments on the optimality of SN and FN \textsc{swap-asap} to Supplementary Note 4.

Let us now compare our policies with the doubling policy, described in the main text. 

\begin{figure}
    \centering
    \includegraphics[width = 0.45\textwidth]{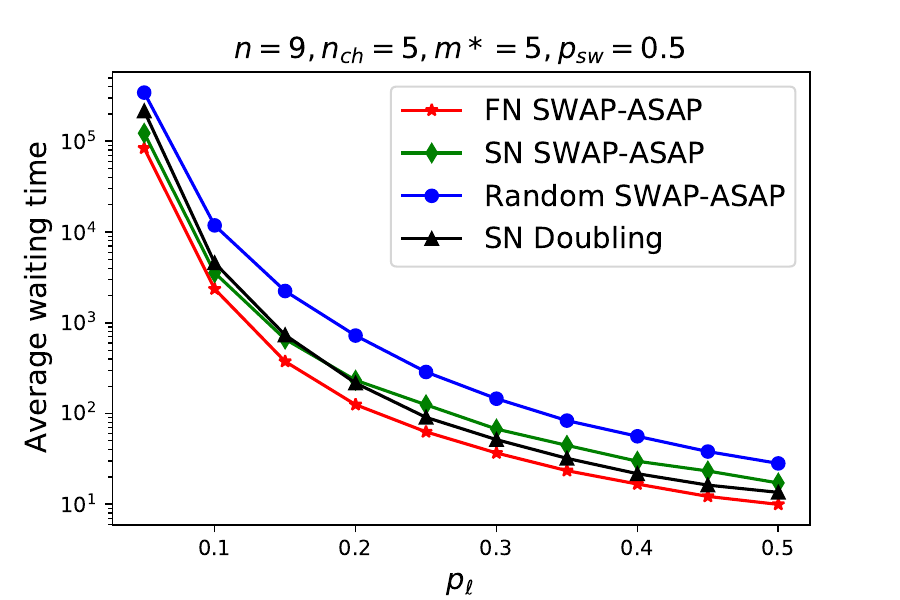}\quad
    \includegraphics[width = 0.45\textwidth]{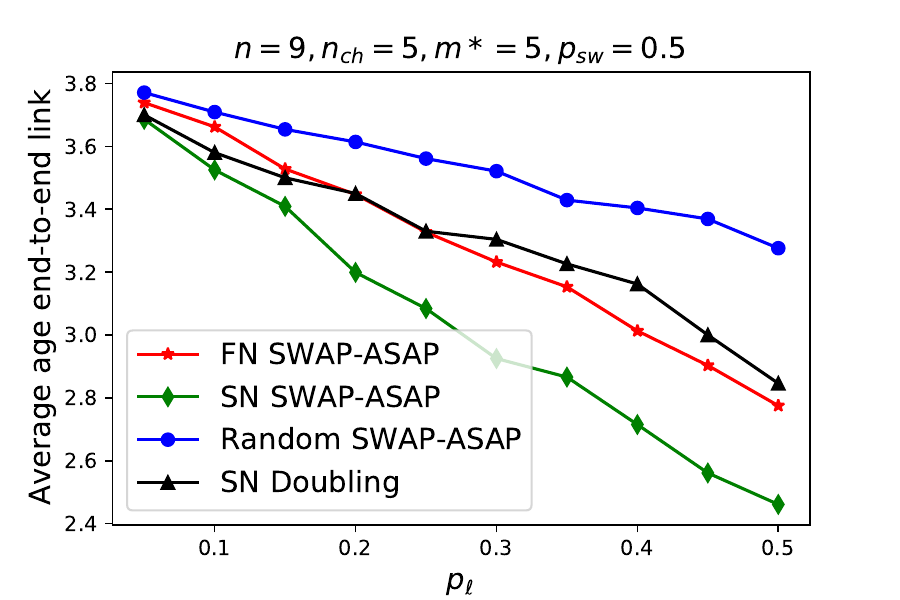}
    \caption{\textbf{Performance of multiplexing policies compared to the \textsc{doubling} policy.} Average waiting time (left) and average age of the youngest end-to-end link (right) for a repeater chain with $n=9$ nodes (eight elementary links), according to the FN, SN, and random \textsc{swap-asap} policies, compared to the SN \textsc{doubling} policy. For small values of $p_{\ell}$, both the FN and SN \textsc{swap-asap} policies outperform the doubling policy in terms of waiting time. The SN \textsc{swap-asap} policy outperforms doubling for the average age for all values of $p_{\ell}$ up to $p_{\ell}=0.5$.
    }
    \label{fig:doubling}
\end{figure}

Our FN and SN \textsc{swap-asap} policies can outperform a doubling based multiplexing policy. This is what we show in Fig.~\ref{fig:doubling}. In terms of the average waiting time, the FN \textsc{swap-asap} policy outperforms the SN \textsc{doubling} policy for all values of link success probability $p_\ell$ up to $p_{\ell}=0.5$, including very low values of $p_{\ell}$, which is often most relevant in practice. It is important to mention here that this result is not true for all parameter regimes; for high values of $p_\ell$ and $m^\star$, the doubling policy can actually outperform our \textsc{swap-asap} policies. Our results are consistent with the results in previous works. Specifically, our results generalize those in Ref.~\cite{SvL21} to the case of finite coherence times. 

At the same time, our results also show the importance of ranking the links in an appropriate way, reflected in the fact that the \emph{random} \textsc{swap-asap} policy and the SN \textsc{swap-asap} policies actually perform worse than the doubling policy in terms of average waiting time. We see very similar results in terms of the average age of the youngest end-to-end link. With respect to this figure of merit, the SN \textsc{swap-asap} policy outperforms all other policies discussed here, including the SN \textsc{doubling} policy.

\section*{Supplementary Note 2: Entanglement distillation with the BBPSSW protocol for more than two links}\label{sec:ent_distill_appendix}
    Recall that if $f_1$ and $f_2$ are the fidelities of the two links being distilled using the BBPSSW protocol, then the success probability $P_{\text{distill}}(f_1,f_2)$ of the protocol and the resulting fidelity $F_{\text{distill}}(f_1,f_2)$ are given by
    \begin{align}
        P_{\text{distill}}(f_1,f_2)&=\frac{8}{9}f_1f_2-\frac{2}{9}(f_1+f_2)+\frac{5}{9},\label{eq-distill_p}\\
        F_{\text{distill}}(f_1,f_2)&=\frac{1-(f_1+f_2)+10f_1f_2}{5-2(f_1+f_2)+8f_1f_2}.\label{eq-distill_f}
    \end{align}
    
    When there are multiple (more than two) links that can be distilled, and we consider the pairwise BBPSSW protocol, the question of what is the optimal \textit{distillation policy} naturally arises, i.e., what is the optimal pairing of links when performing entanglement distillation of more than two links. This question has been addressed in existing literature; see, e.g., Refs.~\cite{DBC99,VLMN09}. Here, we consider the following policies.
\begin{itemize}
    \item \textit{Banded policy:} Referred to as ``scheme A'' in Ref.~\cite{DBC99}, this involves distillation of entangled pairs that have the same age. Assuming that $n_{ch}$ is a power of two, and assuming that all initial $n_{ch}$ pairs have the same fidelity, they are all put into pairs and a first round of $\frac{n_{ch}}{2}$ distillation attempts is made. The resulting links are all of the same fidelity and are paired again such that, in the next round, $\frac{n_{ch}}{4}$ distillation attempts are made. This procedure continues for $\log_2(n_{ch})$ rounds, at the end of which we have one link remaining. Assuming all of the initial links have fidelity $f_0$, the fidelity after $r$ rounds is given by the recurrence relation $f_r=F_{\text{distill}}(f_{r-1},f_{r-1})$. It has been shown in Ref.~\cite{DBC99} that, as long as the initial links are all entangled ($f_0>\frac{1}{2}$), this policy can achieve arbitrarily high fidelities with an increasing value of $n_{ch}$.
    
    \item \textit{Pumping policy:} Referred to as ``scheme C'' in Ref.~\cite{DBC99}, this policy proceeds as follows. Two links of fidelity $f_0$ are distilled, then a fresh link of fidelity $f_0$ is used to distill the previously distilled link, and so on. This policy is tailored to a practical situation in which fresh links are used to distill the surviving links of previous time steps, and thus the links being distilled do not necessarily have the same fidelity, as in the banded policy. If the initial fidelity satisfies $f_0>\frac{1}{2}$, the fidelity $f_r$ after $r\in\{1,2,\dotsc\}$ rounds of the pumping policy is given by the recurrence relation $f_r=F_{\text{distill}}(f_{r-1},f_0)$. Below, in Theorem~\ref{thm-pumping_recurrence_soln}, we provide a closed-form, analytical solution to this recurrence relation.

\end{itemize}

\subsection*{The pumping policy}\label{sec:pumping_analytical}

Let us recall that the fidelity of the pumping policy after $r\in\{1,2,\dotsc\}$ rounds is defined by the recurrence relation $f_r=F_{\text{distill}}(f_{r-1},f_0)$, where $f_0$ is the initial fidelity. Using Eq.~\eqref{eq-distill_f}, we can write this recurrence relation as
\begin{equation}\label{eq-pumping_recurrence_relation}
    f_r=\frac{a_1f_{r-1}+a_2}{a_3f_{r-1}+a_4},\quad a_1\coloneqq 10f_0-1,\quad a_2\coloneqq 1-f_0,\quad a_3\coloneqq 8f_0-2,\quad a_4\coloneqq 5-2f_0.
\end{equation}
In this section, we provide a closed-form, analytical solution to this recurrence relation.

\begin{lemma}\label{lem-pumping_recurrence}
    Define the matrix $A\coloneqq\begin{pmatrix} a_1 & a_2 \\ a_3 & a_4 \end{pmatrix}$, where $a_1,a_2,a_3,a_4$ are defined in Eq.~\eqref{eq-pumping_recurrence_relation}. For $r\in\{1,2,\dotsc\}$, consider the vector
    \begin{equation}
        \vec{v}_r=\begin{pmatrix} v_{r,0} \\ v_{r,1} \end{pmatrix}\coloneqq A^r\vec{v}_0,\quad \vec{v}_0=\begin{pmatrix} f_0 \\ 1 \end{pmatrix}.
    \end{equation}
    It holds that $f_r=\frac{v_{r,0}}{v_{r,1}}$.
\end{lemma}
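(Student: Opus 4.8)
The plan is to recognize the recurrence relation in Eq.~\eqref{eq-pumping_recurrence_relation} as the iteration of a M\"obius (fractional-linear) transformation, whose natural home is the action of $2\times 2$ matrices on homogeneous (projective) coordinates. Concretely, the map $g\colon f\mapsto \frac{a_1 f + a_2}{a_3 f + a_4}$ is exactly the affine description of the linear map $A$ acting on a column vector $(x,y)^{\t}$, read off in the coordinate $x/y$: if $(x',y')^{\t}=A\,(x,y)^{\t}$ then $x'/y' = \frac{a_1(x/y)+a_2}{a_3(x/y)+a_4} = g(x/y)$. Since matrix multiplication composes these maps, applying $A^r$ to the starting vector $\vec{v}_0=(f_0,1)^{\t}$ should reproduce the $r$-fold iterate $g^{\circ r}(f_0)=f_r$ in the coordinate $v_{r,0}/v_{r,1}$. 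I would make this rigorous by induction on $r$.

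For the induction, the base case $r=0$ is immediate, since $\vec{v}_0=(f_0,1)^{\t}$ gives $v_{0,0}/v_{0,1}=f_0$, and this anchors the claim for all $r\geq 1$. For the inductive step, I would assume $f_{r-1}=v_{r-1,0}/v_{r-1,1}$ and use $\vec{v}_r = A\vec{v}_{r-1}$, so that $v_{r,0}=a_1 v_{r-1,0}+a_2 v_{r-1,1}$ and $v_{r,1}=a_3 v_{r-1,0}+a_4 v_{r-1,1}$. Dividing the numerator and denominator of $v_{r,0}/v_{r,1}$ by $v_{r-1,1}$ and substituting the inductive hypothesis yields
\begin{equation}
  \frac{v_{r,0}}{v_{r,1}} = \frac{a_1 (v_{r-1,0}/v_{r-1,1})+a_2}{a_3 (v_{r-1,0}/v_{r-1,1})+a_4} = \frac{a_1 f_{r-1}+a_2}{a_3 f_{r-1}+a_4}=f_r,
\end{equation}
which is precisely the recurrence in Eq.~\eqref{eq-pumping_recurrence_relation}. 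This closes the induction.

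The only genuine subtlety---and the one place I would be careful---is well-definedness of the ratios, i.e., that $v_{r,1}\neq 0$ at every step so that no division by zero occurs. This is not automatic for an arbitrary M\"obius iteration, but here it follows from the physically relevant assumption $f_0>\tfrac12$. Indeed, $a_3=8f_0-2\geq 2>0$ and $a_4=5-2f_0\geq 3>0$, so $a_3 f+a_4\geq 3>0$ for every $f\in[0,1]$; since $v_{r,1}=v_{r-1,1}\,(a_3 f_{r-1}+a_4)$ and $v_{0,1}=1>0$, a parallel induction (using that the fidelities $f_r$ stay in $[0,1]$) shows $v_{r,1}>0$ for all $r$. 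I expect no further obstacles: the content of the lemma is entirely the observation that composition of fractional-linear maps corresponds to matrix multiplication, and the remainder is a short verification.
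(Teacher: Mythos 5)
Your proof is correct and follows essentially the same route as the paper's: an induction on $r$ in which the ratio $v_{r,0}/v_{r,1}$ is shown to satisfy the recurrence in Eq.~\eqref{eq-pumping_recurrence_relation}, with the M\"obius-transformation framing being a conceptual gloss on the same algebra. Your explicit check that $v_{r,1}>0$ (via $a_3 f+a_4\geq 3$ for $f_0>\tfrac{1}{2}$ and $f\in[0,1]$) is a small but genuine improvement, since the paper's proof divides by $v_{r,1}$ without justifying that it is nonzero.
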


\begin{proof}
    We prove this by induction on $r$. For the base case of $r=1$, we have 
    \begin{equation}
        \begin{pmatrix} v_{1,0} \\ v_{1,1} \end{pmatrix}=\begin{pmatrix} a_1 & a_2 \\ a_3 & a_4 \end{pmatrix}\begin{pmatrix} f_0 \\ 1 \end{pmatrix}=\begin{pmatrix} a_1f_0+a_2 \\ a_3f_0+a_4 \end{pmatrix},
    \end{equation}
    which is consistent with Eq.~\eqref{eq-pumping_recurrence_relation}. Now, the induction step: assuming that the result holds for $r\in\{1,2,\dotsc\}$, using Eq.~\eqref{eq-pumping_recurrence_relation} we obtain
    \begin{align}
        f_{r+1}&=\frac{a_1f_r+a_2}{a_3f_r+a_4}\\
        &=\frac{a_1\frac{v_{r,0}}{v_{r,1}}+a_2}{a_3\frac{v_{r,0}}{v_{r,1}}+a_4}\\
        &=\frac{a_1v_{r,0}+a_2v_{r,1}}{a_3v_{r,0}+a_4v_{r,1}}.
    \end{align}
    From this, we see that
    \begin{align}
        \begin{pmatrix} a_1v_{r,0}+a_2v_{r,1} \\ a_3v_{r,0}+a_4v_{r,1} \end{pmatrix} &= \begin{pmatrix} a_1 & a_2 \\ a_3 & a_4 \end{pmatrix}\begin{pmatrix} v_{r,0} \\ v_{r,1} \end{pmatrix}\\
        &=A\cdot A^r\vec{v}_0\\
        &=A^{r+1}\vec{v}_0\\
        &=\vec{v}_{r+1},
     \end{align}
     which means that $f_{r+1}=\frac{v_{r+1,0}}{v_{r+1,1}}$. This completes the proof.
\end{proof}

\begin{theorem}\label{thm-pumping_recurrence_soln}
    The fidelity after $r\in\{1,2,\dotsc\}$ rounds of the pumping policy is given by $f_r=\frac{\alpha_r}{\beta_r}$,where
    \begin{align}
        \alpha_r&=-(1 - 4 f_0 + 6 f_0^2) \left(\left(2 + 4 f_0 - \sqrt{7 - 26 f_0 + 28 f_0^2}\right)^r - \left(2 + 4 f_0 + \sqrt{7 - 26 f_0 + 28 f_0^2}\right)^r\right)\nonumber\\
        &\qquad+ f_0 \sqrt{7 - 26 f_0 + 28 f_0^2} \left(\left(2 + 4 f_0 - \sqrt{7 - 26 f_0 + 28 f_0^2}\right)^r + \left(2 + 4 f_0 + \sqrt{7 - 26 f_0 + 28 f_0^2}\right)^r\right),\label{eq-pumping_fidelity_numerator}\\
        \beta_r&=-(3 - 8f_0+8f_0^2) \left(\left(2 + 4 f_0 - \sqrt{7 - 26 f_0 + 28f_0^2}\right)^r - \left(2 + 4 f_0 + \sqrt{7 - 26f_0 + 28 f_0^2}\right)^r\right)\nonumber\\
        &\qquad+ \sqrt{7 - 26 f_0 + 28 f_0^2} \left(\left(2 + 4 f_0 - \sqrt{7 - 26 f_0 + 28 f_0^2}\right)^r + \left(2 + 4 f_0 + \sqrt{7 - 26 f_0 + 28 f_0^2}\right)^r\right).\label{eq-pumping_fidelity_denominator}
    \end{align}
\end{theorem}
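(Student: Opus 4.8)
The plan is to leverage Lemma~\ref{lem-pumping_recurrence}, which already reduces the problem to computing the matrix power $A^r$ acting on $\vec{v}_0=(f_0,1)^\t$ and then forming the ratio $f_r=v_{r,0}/v_{r,1}$ of the two components. Since $A$ is a fixed $2\times 2$ matrix whose entries depend only on $f_0$, the natural route is diagonalization. First I would record the trace and determinant, $\Tr A = a_1+a_4 = 8f_0+4$ and $\det A = a_1 a_4 - a_2 a_3$, so that the eigenvalues are the roots of $\lambda^2-(8f_0+4)\lambda+\det A=0$. The discriminant $(\Tr A)^2-4\det A$ should simplify to $4(28f_0^2-26f_0+7)$, giving the eigenvalues $\lambda_\pm=(2+4f_0)\pm s$ with $s\coloneqq\sqrt{7-26f_0+28f_0^2}$; these are exactly the base quantities $2+4f_0\pm\sqrt{7-26f_0+28f_0^2}$ appearing in the theorem, which is a reassuring sanity check.

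Next I would compute eigenvectors. From $A\vec{u}=\lambda\vec{u}$ one may take $\vec{u}_\pm=(a_2,\lambda_\pm-a_1)^\t=(1-f_0,\,3-6f_0\pm s)^\t$, and then solve $\vec{v}_0=c_+\vec{u}_++c_-\vec{u}_-$ for the coefficients. The two resulting linear equations yield the combinations $c_++c_-=f_0/(1-f_0)$ and $c_+-c_-=(1-4f_0+6f_0^2)/\big(s(1-f_0)\big)$, where the numerator $1-4f_0+6f_0^2$ is precisely the polynomial appearing in $\alpha_r$. Writing $\vec{v}_r=c_+\lambda_+^r\vec{u}_++c_-\lambda_-^r\vec{u}_-$ and introducing the symmetric and antisymmetric sums $P_r\coloneqq\lambda_+^r+\lambda_-^r$ and $Q_r\coloneqq\lambda_+^r-\lambda_-^r$ (so that the theorem's bracketed expressions are $P_r$ and $-Q_r$), each component $v_{r,0},v_{r,1}$ becomes a combination of $P_r$ and $Q_r$ whose coefficients I would collect.

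The computation for $v_{r,0}$ is short and should give $2s\,v_{r,0}=f_0\,s\,P_r+(1-4f_0+6f_0^2)\,Q_r=\alpha_r$. The main obstacle is the corresponding simplification for $v_{r,1}$: after substituting $c_\pm$, the coefficient of $P_r$ must collapse to $1-f_0$ and the coefficient of $Q_r$ to $(3-8f_0+8f_0^2)(1-f_0)/s$. Establishing the latter identity is where the only real work lies, and it hinges on the relation $s^2=7-26f_0+28f_0^2$: one rewrites the $Q_r$-coefficient over the common denominator $s$, uses $s^2 f_0$ to eliminate the square root, and checks the polynomial identity $(3-6f_0)(1-4f_0+6f_0^2)+f_0 s^2=(3-8f_0+8f_0^2)(1-f_0)$. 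With this in hand, $2s\,v_{r,1}=s\,P_r+(3-8f_0+8f_0^2)\,Q_r=\beta_r$, and since the common factor $2s$ cancels in the ratio, $f_r=v_{r,0}/v_{r,1}=\alpha_r/\beta_r$, as claimed. Throughout one should keep the eigenvalues distinct, which holds whenever $s\neq 0$, i.e.\ in the physically relevant range of $f_0$.
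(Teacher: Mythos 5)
Your proposal is correct and follows essentially the same route as the paper's own proof: invoke Lemma~\ref{lem-pumping_recurrence}, diagonalize $A$ (obtaining the same eigenvalues $\lambda_\pm = 2+4f_0\pm\sqrt{7-26f_0+28f_0^2}$), expand $\vec{v}_0$ in the eigenbasis, and form the ratio of the components of $A^r\vec{v}_0$. The only differences are cosmetic—you normalize the eigenvectors as $(a_2,\lambda_\pm-a_1)^{\t}$ rather than with unit second component, and you make explicit (via the sums $P_r,Q_r$ and the polynomial identity $(3-6f_0)(1-4f_0+6f_0^2)+f_0 s^2=(3-8f_0+8f_0^2)(1-f_0)$, which indeed checks out) the algebra that the paper compresses into ``after substitution and much simplification.''
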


\begin{proof}
    We start with the eigenvalues and eigenvectors of the matrix $A=\begin{pmatrix} a_1 & a_2 \\ a_3 & a_4 \end{pmatrix}$. The eigenvalues $\lambda_{\pm}$ and corresponding eigenvectors $\vec{x}_{\pm}$ are readily found to be
    \begin{align}
        \lambda_{\pm}&=2+4f_0\pm\sqrt{7-26f_0+28f_0^2},\\
        \vec{x}_{\pm}&=\begin{pmatrix} \omega_{\pm} \\ 1 \end{pmatrix},\quad \omega_{\pm}=\frac{-3+6f_0\pm\sqrt{7-26f_0+28f_0^2}}{-2+8f_0}.\label{eq-pumping_fidelity_pf1}
    \end{align}
    In particular, we have that $A\vec{x}_{\pm}=\lambda_{\pm}\vec{x}_{\pm}$. Then, the initial vector $\vec{v}_0=\begin{pmatrix} f_0 \\ 1 \end{pmatrix}$ can be expressed in terms of the eigenvectors $\vec{x}_{\pm}$ as
    \begin{equation}
        \vec{v}_0=\begin{pmatrix} f_0 \\ 1 \end{pmatrix}=c_1\vec{x}_++c_2\vec{x}_-,\quad c_1=\frac{f_0-\omega_-}{\omega_++\omega_-},\quad c_2=1-c_1.
    \end{equation}
    We therefore find that
    \begin{align}
        \vec{v}_r&=A^r\vec{v}_0\\
        &=c_1A^r\vec{x}_++c_2A^r\vec{x}_-\\
        &=c_1\lambda_+^r\vec{x}_++c_2\lambda_-^r\vec{x}_-\\
        &=c_1\lambda_+^r\begin{pmatrix} \omega_+ \\ 1 \end{pmatrix}+c_2\lambda_-^r\begin{pmatrix}\omega_- \\ 1 \end{pmatrix},
    \end{align}
    which implies (via Lemma~\ref{lem-pumping_recurrence}) that
    \begin{equation}\label{eq-pumping_fidelity_alt}
        f_r=\frac{c_1\lambda_+^r\omega_++c_2\lambda_-^r\omega_-}{c_1\lambda_+^r+c_2\lambda_-^r}=\frac{\lambda_-^r\omega_-+c_1(\lambda_+^r\omega_+-\lambda_-^r\omega_-)}{\lambda_-^r+c_1(\lambda_+^r-\lambda_-^r)}.
    \end{equation}
    After substitution and much simplification, we obtain the desired expressions in Eq.~\eqref{eq-pumping_fidelity_numerator} and Eq.~\eqref{eq-pumping_fidelity_denominator}.
\end{proof}

\begin{corollary}
    For the sequence $\{f_r\}_r$ of fidelities defined by the pumping policy, it holds that
    \begin{equation}\label{eq-pumping_fidelity_limit_appendix}
        \lim_{r\to\infty}f_r=\frac{-3+6f_0+\sqrt{7-26f_0+28f_0^2}}{-2+8f_0}.
    \end{equation}
\end{corollary}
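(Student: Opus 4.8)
The plan is to read off the limit directly from the intermediate expression Eq.~\eqref{eq-pumping_fidelity_alt} for $f_r$, rather than from the fully expanded closed form in Theorem~\ref{thm-pumping_recurrence_soln}. Writing $D\coloneqq 7-26f_0+28f_0^2$, that equation presents $f_r$ as a ratio whose numerator and denominator are each combinations of $\lambda_+^r$ and $\lambda_-^r$, with $\lambda_\pm=2+4f_0\pm\sqrt{D}$ and eigenvector ratios $\omega_\pm$ from Eq.~\eqref{eq-pumping_fidelity_pf1}. If I can show that $\lambda_+$ strictly dominates $\lambda_-$ in magnitude and that the coefficient $c_1$ multiplying $\lambda_+^r$ is nonzero, then dividing the top and bottom by $c_1\lambda_+^r$ and sending $r\to\infty$ will leave exactly $\omega_+$, which is the claimed limiting value.

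First I would establish that the eigenvalues are real, distinct, and satisfy $\lambda_+>\lvert\lambda_-\rvert$. The quadratic $D=28f_0^2-26f_0+7$ has discriminant $26^2-4\cdot 28\cdot 7=-108<0$ and positive leading coefficient, so $D>0$ for every real $f_0$; hence $\sqrt{D}>0$ and $\lambda_+>\lambda_-$. Moreover $\lambda_++\lambda_-=4+8f_0>0$ forces $\lambda_+>-\lambda_-$, and combined with $\lambda_+>\lambda_-$ this yields $\lambda_+>\lvert\lambda_-\rvert$. Consequently $\lvert\lambda_-/\lambda_+\rvert<1$, so $(\lambda_-/\lambda_+)^r\to 0$ as $r\to\infty$.

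The main point to check is that $c_1\neq 0$, equivalently $f_0\neq\omega_-$; otherwise the initial fidelity would sit exactly on the repelling fixed point $\omega_-$ and the sequence would be constant there, never reaching $\omega_+$. I would verify this by contradiction. On the physical range $f_0\in(\tfrac12,1]$ the denominator $-2+8f_0>0$ is nonzero, so $\omega_-=\frac{-3+6f_0-\sqrt{D}}{-2+8f_0}$ is well defined; setting $f_0=\omega_-$ and clearing this denominator reduces the equation to $8f_0^2-8f_0+3=-\sqrt{D}$. The left-hand side is a quadratic with negative discriminant ($64-96=-32<0$) and positive leading coefficient, hence strictly positive, whereas the right-hand side is $\le 0$. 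This is impossible, so $c_1\neq 0$ throughout the physical range.

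With these two facts in hand, I would divide the numerator and denominator of Eq.~\eqref{eq-pumping_fidelity_alt} by $c_1\lambda_+^r$ to obtain $f_r=\frac{\omega_++(c_2/c_1)(\lambda_-/\lambda_+)^r\omega_-}{1+(c_2/c_1)(\lambda_-/\lambda_+)^r}$, and then take $r\to\infty$ using $(\lambda_-/\lambda_+)^r\to 0$, which gives $\lim_{r\to\infty}f_r=\omega_+$. Substituting the expression for $\omega_+$ from Eq.~\eqref{eq-pumping_fidelity_pf1} produces the stated value $\frac{-3+6f_0+\sqrt{7-26f_0+28f_0^2}}{-2+8f_0}$, matching Eq.~\eqref{eq-pumping_fidelity_limit_appendix}. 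The only genuinely delicate step is the non-vanishing of $c_1$; the dominance of $\lambda_+$ and the final limit are short routine arguments, so I expect the write-up to be brief.
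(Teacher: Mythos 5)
Your proof is correct and follows essentially the same route as the paper's: both read the limit off the ratio form of $f_r$ in Eq.~\eqref{eq-pumping_fidelity_alt}, show that the eigenvalue ratio $\lambda_-/\lambda_+$ has magnitude strictly less than one so that $(\lambda_-/\lambda_+)^r\to 0$, and conclude that $\lim_{r\to\infty}f_r=\omega_+$, which is the stated expression. Your explicit check that $c_1\neq 0$ (via the contradiction $8f_0^2-8f_0+3=-\sqrt{7-26f_0+28f_0^2}$ with a strictly positive left-hand side) is a worthwhile refinement, since the paper's proof tacitly assumes this when the factor $c_1$ cancels in the limit, but it does not alter the underlying argument.
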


\begin{proof}
    We can write the expression for $f_r$ in Eq.~\eqref{eq-pumping_fidelity_alt} as
    \begin{equation}
        f_r=\frac{\left(\frac{\lambda_-}{\lambda_+}\right)^r\omega_-+c_1\left(\omega_+-\left(\frac{\lambda_-}{\lambda_+}\right)^r\omega_-\right)}{\left(\frac{\lambda_-}{\lambda_+}\right)^r+c_1\left(1-\left(\frac{\lambda_-}{\lambda_+}\right)^r\right)}.
    \end{equation}
    Now, it is straightforward to see that $\lambda_{\pm}\geq 0$ for all $f_0\in[\frac{1}{2},1]$, and furthermore that $\lambda_+>\lambda_-$. Therefore, $\frac{\lambda_-}{\lambda_+}\in[0,1)$, which means that $\lim_{r\to\infty}\left(\frac{\lambda_-}{\lambda_+}\right)^r=0$. We therefore find that $\lim_{r\to\infty} f_r=\omega_+$, and because $\omega_+$ (defined in Eq.~\eqref{eq-pumping_fidelity_pf1}) is precisely the expression on the right-hand side of Eq.~\eqref{eq-pumping_fidelity_limit_appendix}, we have the desired result.
\end{proof}

We note that the limiting fidelity on the right-hand side of \eqref{eq-pumping_fidelity_limit_appendix} is strictly less than one whenever $f_0<1$. This proves that, in contrast to the banded policy, with the pumping policy it is not possible to distill links to arbitrarily high fidelity, even with an infinite number of rounds.

\subsection*{Banding vs. pumping}\label{sec:DS_vs_SD_example_1}

To gain an understanding of the difference between the banding and pumping policies of entanglement distillation, we consider here a three-node chain with $n_{ch}$ channels between the nodes. 
In this setting, by the \textsc{swap-distill} policy, we mean that all $n_{ch}$ links are active at the same time, entanglement swapping is performed on all of them and we assume that they all succeed, and then all of the end-to-end links are distilled into one end-to-end link. By the \textsc{distill-swap} policy, we mean that all $n_{ch}$ on both sides are distilled into one link, and then entanglement swapping is performed on them and assumed to be successful.

Let us first consider the banding approach to distillation, as defined above. Once all $n_{ch}$ links are put in pairs, a first round of $\frac{n_{ch}}{2}$ distillation attempts are made; the resulting links are all of the same fidelity, and they are paired again. In the next round, $\frac{n_{ch}}{4}$ distillation attempts are made, and the process continues until we have one link remaining. Thus a total number of $\log_2(n)$ rounds are needed (supposing that $n_{ch}$ is a power of two).
\begin{figure}
    \centering
    \includegraphics[width = 0.40\textwidth]{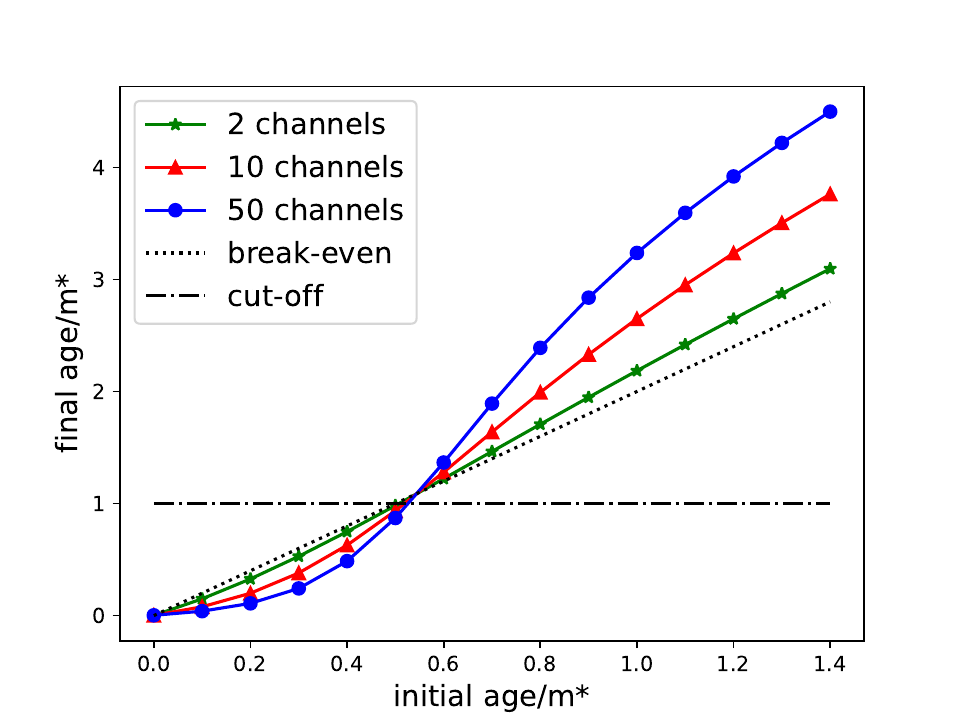}\quad
    \includegraphics[width = 0.40\textwidth]{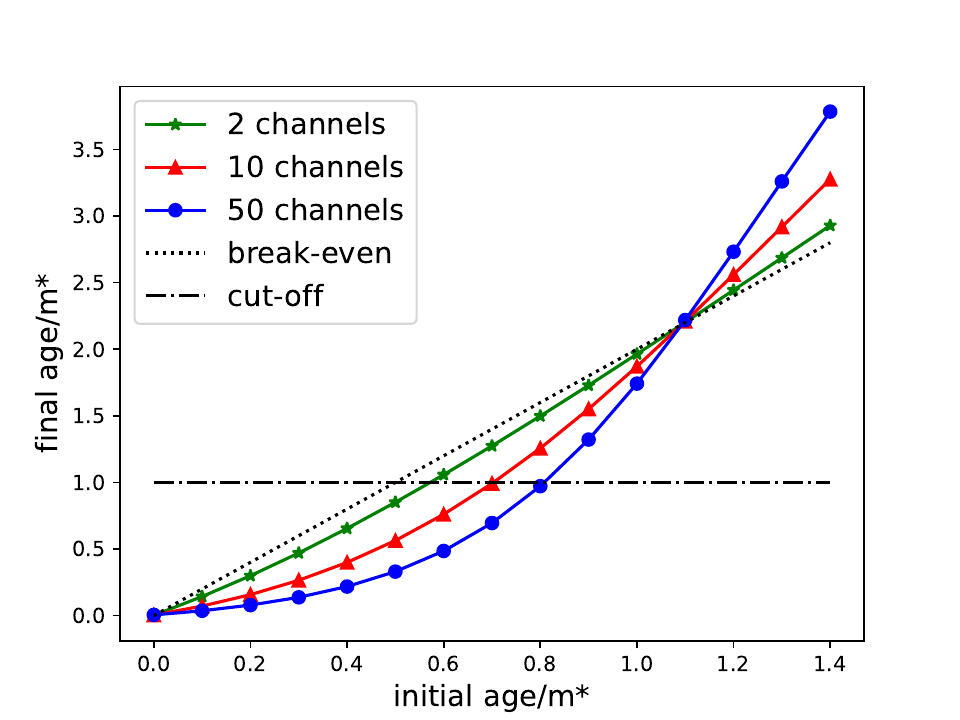}
    \caption{\textbf{Swap-then-distill versus distill-then-swap policy using banded approach to distillation.} Comparison of the \textsc{swap-distill} (left) and \textsc{distill-swap} (right) policies for a three-node chain with $n_{ch}$ links connecting the neighboring nodes, as described in \textit{Quasi-local multiplexing policies} of main text. Here we use the banded policy to perform distillation. 
    The \textsc{swap-distill} policy is not useful beyond initial age $m^\star/2$, whereas \textsc{distill-swap} can extend this threshold to $m^\star$.}
\label{fig:swdl_vs_dlsw}
\end{figure}

Results for the banding approach are shown in Fig.~\ref{fig:swdl_vs_dlsw}. We see that in the case of \textsc{swap-distill}, the cut-off of $m^\star$ is breached by the end-to-end link if the initial links are older than $\approx 0.5 m^\star$. This is so because distillation is not useful for links of age greater than the cut-off $m^\star$, and swapping adds the ages of links. In the case of \textsc{distill-swap}, the threshold is higher and equal to $m^\star$, because distillation occurs first. More importantly, we note that in both cases these thresholds are independent of the number of distillation channels used, indicated by the intersection of the different curves. On the other hand, for links of age below the threshold, using an increasing number of links, an arbitrarily high fidelity link can be obtained.

\begin{figure}
    \centering
    \includegraphics[width = 0.40\textwidth]{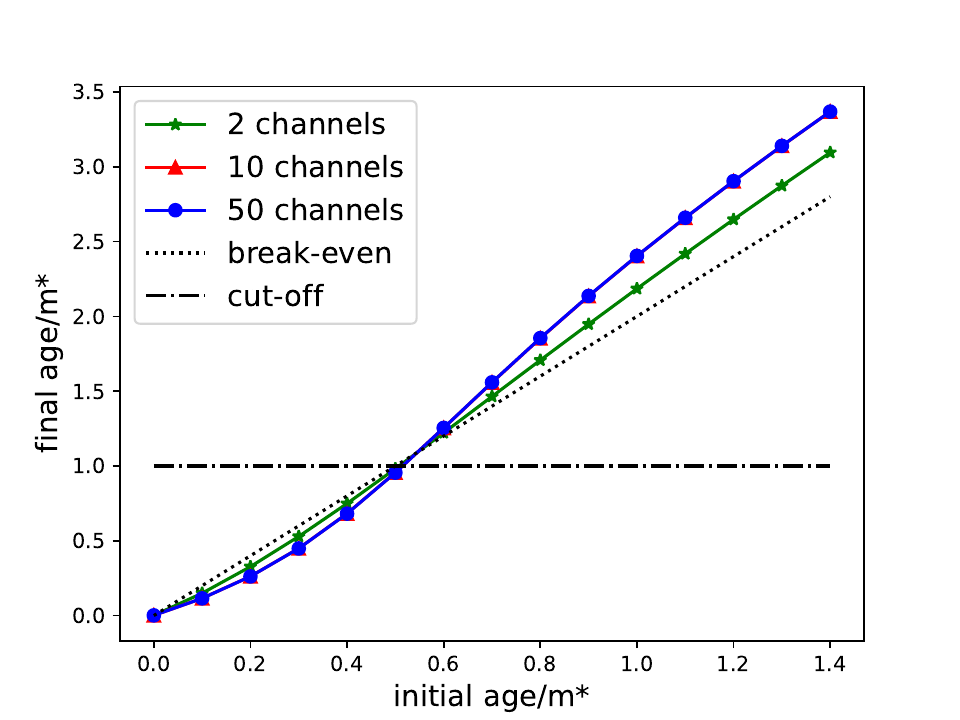}\quad
    \includegraphics[width = 0.40\textwidth]{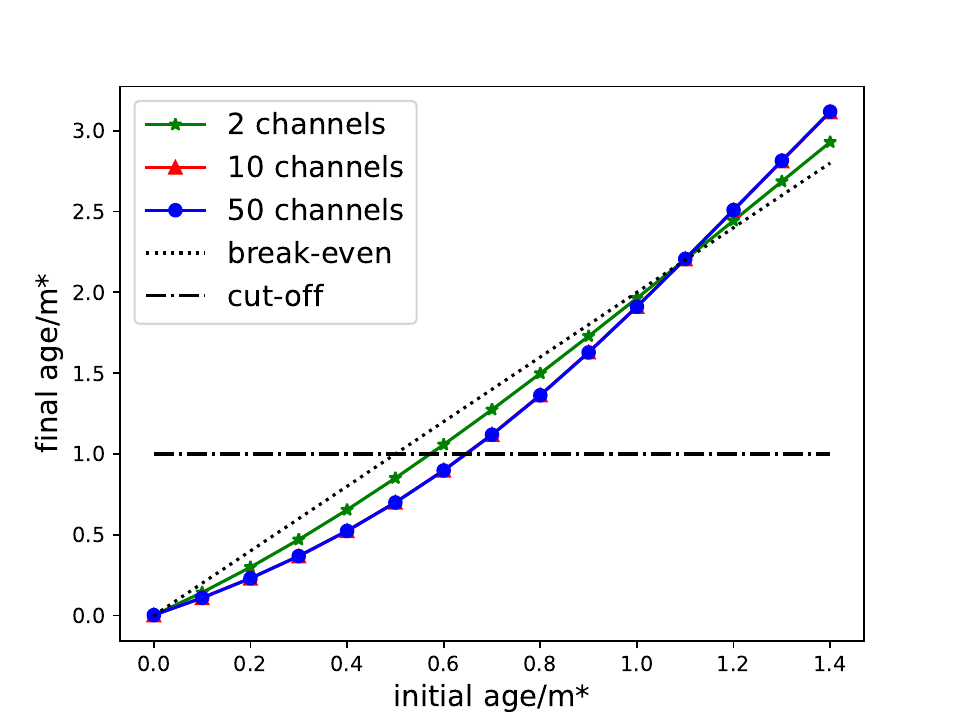}
    \caption{\textbf{Swap-then-distill versus distill-then-swap policy using pumping approach to distillation.}Comparison of the \textsc{swap-distill} (left) and \textsc{distill-swap} (right) policies for a three-node chain with $n_{ch}$ links connecting the neighboring nodes, as described above. Now, instead of the banding policy, we use the pumping policy for distillation. 
    As with banded purification, \textsc{swap-distill} is not useful beyond an initial age of $m^\star/2$, whereas \textsc{distill-swap} can extend this threshold, but now not quite up to $\approx m^\star$. This is a consequence of the fact that the pumping policy has a strict limit on the fidelity that can be achieved with an increasing number of channels, as shown in Eq.~\eqref{eq-pumping_fidelity_limit_appendix}.}
\label{fig:swdl_vs_dlsw_pumping}
\end{figure}

All of the analysis above was done considering an idealized situation in which links of the same age are available always for distillation. This is of course not necessarily true in practice. In practical settings, links will have to be used as and when they are produced. Therefore, it is important to look also at the previously mentioned pumping policy.

Results for the pumping policy are shown in Fig.~\ref{fig:swdl_vs_dlsw_pumping}. Although the trends remain very similar to Fig.~\ref{fig:swdl_vs_dlsw}, one important difference is noticeable immediately. When the pumping policy is used for distillation, the final age of the distilled link always saturates to a value approaching the limit in Eq.~\eqref{eq-pumping_fidelity_limit_appendix} as the number of channels increases. This can be seen by the convergence of the red (triangle) and blue (circle) curves for $n_{ch} = 10$ and $n_{ch} = 50$, respectively. This is in contrast to the observations in Fig.~\ref{fig:swdl_vs_dlsw}, in which we observe that increasing the number of channels progressively decreases the final age.

\section*{Supplementary Note 3: Distill-swap vs. swap-distill policy: analytical results}\label{sec:DS_vs_SD_example2}

Here we analytically compare the performance of the \textsc{distill-swap} and \textsc{swap-distill} policies in the case of three nodes. To simplify the analysis, we restrict ourselves to the cases with two and four channels, and consider all elementary links to be active simultaneously with the same fidelity. The general result for an arbitrary number of channels is difficult to calculate analytically given the large number of possible trajectories that can lead to the successful generation of at least one end-to-end link. Seeing in Supplementary Note 2 that the banded policy performs better than pumping, in this section we look exclusively at the banded policy. The performance of the policies is evaluated by the success probability of establishing at least one end-to-end entangled link and by the average fidelity of an end-to-end link.

\subsection*{Two-channel case}
Let $f_0$ be the initial fidelity of both the active links on each side given in terms of the initial age $m_0$ as $f_0=f(m_0)= \frac{1}{4}(1+3\e^{-m_0/m^\star})$.

Let us first consider the \textsc{distill-swap} policy. The probability of distillation being successful simultaneously on both sides followed by a successful entanglement swapping of the distilled links is given by
\begin{equation}
    p_{ds} = p_{sw}p_d^2 = p_{sw}((8f_0^2 - 4f_0 + 5)/9)^2,
    \label{eq-e2}
\end{equation}
where $p_d = (8f_0^2 - 4f_0 + 5)/9$ is the success probability of one distillation attempt, as given by Eq.~\eqref{eq-distill_p}. The fidelity and ages of the successfully distilled links on both sides is given by
\begin{align}
    f_d &= (1-2f_0+10f_0^2)/(5-4f_0+8f_0^2), \\
    m_d &= \lceil-m^\star\log((4f_d - 1)/3))\rceil. \label{eq-e3_4}
\end{align}
Finally, the age of virtual link produced by swapping is the sum of the ages of the elementary links consumed, i.e., $2m_d$. Therefore, the fidelity of the end-to-end link is given by:
\begin{equation}
    f_{ds} = \frac{1}{4}(1+3\e^{-2m_d/m^\star}).
    \label{eq-e5}
\end{equation}

Next, we consider the \textsc{swap-distill} policy. For this policy, two cases may arise depending on the number of successful entanglement swaps. Let $p_{sd}^{(1)}$, $f_{sd}^{(1)}$ and $p_{sd}^{(2)}$, $f_{sd}^{(2)}$ be the success probability and fidelity, respectively, of the two cases. The two cases are based on whether one or both of the entanglement swaps succeed. Then, the total probability of producing at least one end-to-end link is the sum of the two cases and is given by
\begin{equation}\label{eq-SD_vs_DS_two_channel_analytical_succ}
    p_{sd}=p_{sd}^{(1)}+p_{sd}^{(2)},
\end{equation}
and the weighted (expected) fidelity of the final distilled virtual link produced is given by
\begin{equation}\label{eq-SD_vs_DS_two_channel_analytical_fid}
    f_{sd}=\frac{p_{sd}^{(1)}f_{sd}^{(1)}+p_{sd}^{(2)}f_{sd}^{(2)}}{p_{sd}^{(1)}+p_{sd}^{(2)}}.
\end{equation}
Let us now determine analytical expressions for $p_{sd}^{(1)},f_{sd}^{(1)},p_{sd}^{(2)},f_{sd}^{(2)}$.
\begin{enumerate}
    \item Both entanglement swaps succeed. In this case, the probability that both entanglement swaps succeed, followed by a successful distillation attempt, is given by
    \begin{equation}
        p^{(2)}_{sd} = p_{sw}^2(8f_s^2 - 4f_s + 5)/9),
    \end{equation}
    where the ``$(2)$'' in the superscript of $p_{sd}^{(2)}$ indicates the number of successful swaps and $f_s$ is the fidelity of the virtual links produced after successful swapping, given by
    \begin{equation}\label{eq-f_s_appendix}
        f_s = \frac{1}{4}(1+3\e^{-\frac{2m_0}{m^\star}}).   
    \end{equation}
    The resulting fidelity, after distillation, is then 
    \begin{equation}
        f_{sd}^{(2)}=F_{\text{distill}}(f_s,f_s)=\frac{1-2f_s+10f_s^2}{5-4f_s+8f_s^2}.
    \end{equation}
    \item Only one entanglement swap is successful. This can happen in two ways, therefore the probability of successfully producing one end-to-end link in this case is simply given by
    \begin{equation}
        p_{sd}^{(1)} = 2p_{sw}(1-p_{sw}),
        \label{eq-e9}
    \end{equation}
    because no distillation can be performed in this case, which also means that
    \begin{equation}
        f_{sd}^{(1)}=f_s,
    \end{equation}
    with $f_s$ as defined in Eq.~\eqref{eq-f_s_appendix}.
\end{enumerate}

\subsection*{Four-channel case}

Now, for the four-channel case, let us start with the \textsc{distill-swap} policy. Recall that under the banding policy, at most three rounds of distillation can be performed with four links. Based on this, there are three trajectories that can lead to one end-to-end link, and the corresponding success probabilities and fidelities are derived below. We combine them to obtain the total probability and weighted (expected) fidelity of an end-to-end link, in a manner analogous to \eqref{eq-SD_vs_DS_two_channel_analytical_succ} and \eqref{eq-SD_vs_DS_two_channel_analytical_fid}. Throughout these derivations, we let $f_0$ be the initial fidelity of all links, we let $p_d\equiv P_{\text{distill}}(f_0,f_0)$ and $f_d\equiv P_{\text{distill}}(f_0,f_0)$ be the success probability and fidelity, respectively, after the first round of distillation. Similarly, we let $p_{d_2}\equiv P_{\text{distill}}(f_d,f_d)$ and $f_{d_2}\equiv F_{\text{distill}}(f_d,f_d)$ be the success probability and fidelity, respectively, after the second round of distillation. The corresponding ages of the links after each round are denoted by $m_d\equiv f^{-1}(f_d)$ and $m_{d_2}\equiv f^{-1}(f_{d_2})$, where we recall the function $f(m)=\frac{1}{4}(1+3\exp(-m/m^{\star}))$ and its inverse $f^{-1}(F)=\ceil{-m^{\star}\log((4F-1)/3)}$.
\begin{enumerate}
    \item Only one pair of links is distilled successfully on each side, and the distilled link is successfully swapped. There are four possible ways this can occur, which means that the success probability and resulting fidelity in this case are given by
    \begin{align}
        p^{(1,1)}_{ds} &= p_{sw}(4p_d^2(1-p_d)^2), \\
        f^{(1,1)}_{ds} &= \frac{1}{4}\Bigg(1+3\exp\bigg(-\frac{2m_d}{m^\star}\bigg)\Bigg), \label{eq-e14}
    \end{align}
    where the ``$(1,1)$'' in the superscript denotes that only one distillation attempt succeeded on each side.
    
    \item All three distillation attempts succeed on one side and only one distillation attempt succeeds on the other side, followed by a successful entanglement swapping. (Note that the case of only two distillation successes on either side does not lead to an active link, and thus entanglement swapping to generate an end-to-end link is not possible.) This can happen in four ways - two ways to choose which side all three distillation will succeed and when all attempts succeed on one side, either the first or the second attempt could fail on the other side. Denoting the corresponding success probability and fidelity by $p_{ds}^{(3,1)}$ and $f_{ds}^{(3,1)}$, respectively, we have
    \begin{align}
        p^{(3,1)}_{ds} &= p_{sw}(4p_d^3(1-p_d)p_{d_2}), \\
        f^{(3,1)}_{ds} &= \frac{1}{4}\Bigg(1+3\exp\bigg(-\frac{m_d + m_{d_2}}{m^\star}\bigg)\Bigg). \label{eq-e16}
    \end{align}

    \item All three distillation attempts succeed on both sides, followed by successful entanglement swapping. In this case, denoting the success probability and fidelity by $p_{ds}^{(3,3)}$ and $f_{ds}^{(3,3)}$, respectively, we have
        \begin{align}
            p^{(3,3)}_{ds} &= p_{sw}(p_d^2p_{d_2})^2, \\
            f^{(3,3)}_{ds} &= \frac{1}{4}\Bigg(1+3\exp\bigg(-\frac{2m_{d_2}}{m^\star}\bigg)\Bigg). \label{eq-e20}
        \end{align}
\end{enumerate}
Therefore, the total success probability of producing at least one end-to-end link is given by
\begin{equation}
    p_{ds}=p_{ds}^{(1,1)}+p_{ds}^{(3,1)}+p_{ds}^{(3,3)},
\end{equation}
and the expected fidelity is given by
\begin{equation}
    f_{ds}=\frac{f_{ds}^{(1,1)}p_{ds}^{(1,1)}+f_{ds}^{(3,1)}p_{ds}^{(3,1)}+f_{ds}^{(3,3)}p_{ds}^{(3,3)}}{p_{ds}^{(1,1)}+p_{ds}^{(3,1)}+p_{ds}^{(3,3)}}.
\end{equation}

Next, we move to the \textsc{swap-distill} policy. In this case, six trajectories can lead to a successful end-to-end link generation. They are listed below.
\begin{enumerate}
    \item Only one entanglement swapping attempt succeeds, and thus no distillation is possible. There are $\binom{4}{1}=4$ possible pairings for the entanglement swapping, which means that the success probability $p_{sd}^{(1)}$ and the fidelity $f_{sd}^{(1)}$ are given by 
    \begin{align}
        p^{(1)}_{sd} &= 4p_{sw}(1-p_{sw})^3, \\
        f^{(1)}_{sd} &= f_s \equiv f(2m_0) = \frac{1}{4}\Bigg(1+3\exp\bigg(-\frac{2m_0}{m^\star}\bigg)\Bigg). \label{eq-e18}
    \end{align}
    
    \item Two entanglement swapping attempts succeed, and then one distillation attempt is made and succeeds. There are now $\binom{4}{2}=6$ possible ways to obtain two successful entanglement swaps, which means the success probability $p_{sd}^{(2)}$ and fidelity $f_{sd}^{(2)}$ in the case are given by
    \begin{align}
        p^{(2)}_{sd} &= 6p_{sw}^2(1-p_{sw})^2p_s, \\
        f^{(2)}_{sd} &= F_{\text{distill}}(f_s,f_s)=(1-2f_s+10f_s^2)/(5-4f_s+8f_s^2), \label{eq-e19}
    \end{align}
    where $p_s \equiv P_{\text{succ}}(f_s,f_s)=(8f_s^2 - 4f_s + 5)/9$ is the success probability of distilling two virtual links of age $2m_0$ (the fidelity is $f_s=f(2m_0)$, as defined in Eq.~\eqref{eq-e18}).
    
    \item If three entanglement swaps succeed (can happen in $\binom{4}{3}=4$ ways), then two distillation attempts need to be made in order to get one end-to-end link.  Two sub-cases arise
        \begin{enumerate}
            \item The first distillation attempt may fail leaving us with just one end-to-end link. 
            \item Both distillation attempts might succeed.
        \end{enumerate}
        Thus, the success probability and fidelity are a sum (and weighted sum) of these two sub-cases, and are given by:
        \begin{align}
            p^{(3)}_{sd} &= 4p_{sw}^3(1-p_{sw})(1-p_s) + 4p_{sw}^3(1-p_{sw} p_s p_{s2}), \\
            f^{(3)}_{sd} &= \frac{1}{p^{(3)}_{sd}}\Bigg(4p_{sw}^3(1-p_{sw})(1-p_s) f_s + 4p_{sw}^3(1-p_{sw} p_s p_{s2})\frac{1-(f_s + f^{(2)}_{sd})+10f_s f^{(2)}_{sd}}{5-2(f_s + f^{(2)}_{sd})+8f_sf^{(2)}_{sd}}\Bigg)
        \end{align}
        where $p_{s2} = (8f_sf^{(2)}_{sd} - 2(f_s + f^{(2)}_{sd}) + 5)/9$ is the distillation success probability for two links, one with age $2m_0$ and fidelity $f_s$ and another which has been obtained by distilling two links of age $2m_0$ and thus having fidelity $f^{(2)}_{sd}$.
    
    \item Case 4: All 4 entanglement swapping attempts succeed. Again, two sub-cases arise---one where all three subsequent distillation attempts succeed, and second where one of the two first rounds of distillation attempts fails (can happen in two ways), and there is no scope to distill further. Thus, the success probability and fidelity is a sum (and weighted sum) of these two sub-cases, and is given by:
    \begin{align}
        p^{(4)}_{sd} &= p_{sw}^4 p_s^2 p_{s3} + 2p_{sw}^4p_s(1-p_s), \\
        f^{(4)}_{sd} &= \frac{1}{p^{(4)}_{sd}}\Bigg(p_{sw}^4 p_s^2 p_{s3} (1-2f^{(2)}_{sd}+10*(f^{(2)}_{sd})^2)/(5-4f^{(2)}_{sd}+8(f^{(2)}_{sd})^2 + 2p_{sw}^4p_s(1-p_s)f^{(2)}_sd\Bigg),
    \end{align}
    where $p_{s3} = ((8f^{(2)}_{sd})^2 - 4f^{(2)}_{sd} + 5)/9)$
\end{enumerate}
Therefore, the total success probability of getting a end-to-end entangled link and the final fidelity of the link when \textsc{swap-distill} policy is used is given by:
\begin{eqnarray}
    p_{sd} &=& 4p_{sw}(1-p_{sw})^3 + 6p_{sw}^2(1-p_{sw})^2p_s + 4p_{sw}^3(1-p_{sw})(1-p_s) + 4p_{sw}^3(1-p_{sw} p_s p_{s2}) + p_{sw}^4 p_s^2 p_{s3} + 2p_{sw}^4p_s(1-p_s), \nonumber\\
    f_{sd} &=& \frac{1}{p_{sd}}\Bigg(p_{sw}(1-p_{sw})^3f_s + 6p_{sw}^2(1-p_{sw})^2p_s f^{(2)}_{sd} + (4p_{sw}^3(1-p_{sw})(1-p_s) f_s \nonumber\\ &+& 4p_{sw}^3(1-p_{sw} p_s p_{s2})\frac{1-(f_s + f^{(2)}_{sd})+10f_s f^{(2)}_{sd}}{5-2(f_s + f^{(2)}_{sd})+8f_sf^{(2)}_{sd}} + p_{sw}^4 p_s^2 p_{s3} \frac{1-2f^{(2)}_{sd}+10(f^{(2)}_{sd})^2}{5-4f^{(2)}_{sd}+8(f^{(2)}_{sd})^2} + 2p_{sw}^4p_s(1-p_s)f^{(2)}_{sd}\Bigg)
\end{eqnarray}

\begin{figure}
    \centering
    \includegraphics[width = 0.40\textwidth]{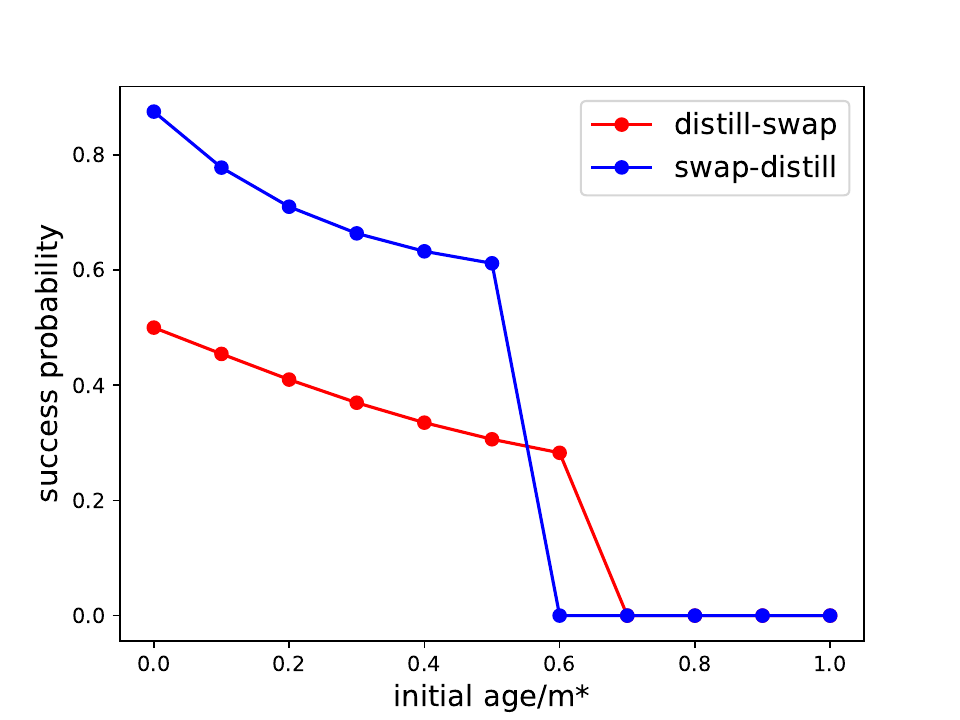}\quad
    \includegraphics[width = 0.40\textwidth]{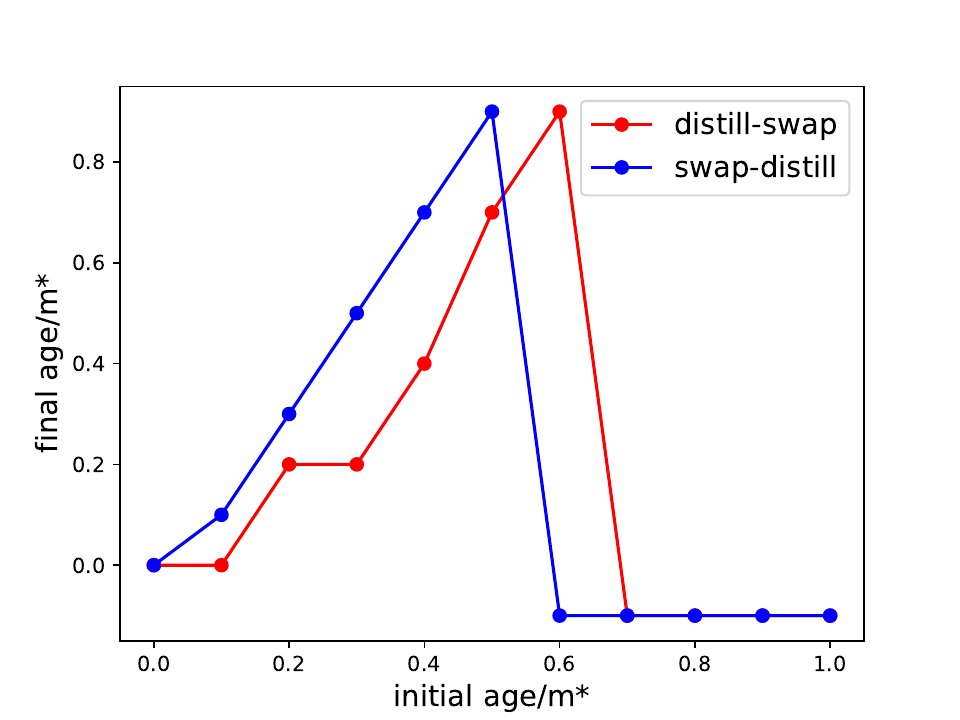}
    \caption{\textbf{Analytical results for \textsc{distill-swap} and \textsc{swap-distill} policies.} Comparison of the \textsc{distill-swap} and \textsc{swap-distill} policies for a three-node chain with four channels connecting the neighboring nodes, as described in Supplementary Note 3. We look at the probability of successfully producing at least one end-to-end entangled link (left) and the average age of that link (when successfully produced) (right). For distillation, we use the banded policy. We see again that the \textsc{swap-distill} policy is not useful beyond and initial age of $m^\star/2$, whereas \textsc{distill-swap} can extend this threshold up to $m^\star$ by using an increasing number of channels. $p_{sw} = 50\%$ for these plots.}
\label{fig:swdl_vs_dlsw_1link}
\end{figure}

In Fig.~\ref{fig:swdl_vs_dlsw_1link}, we show the results for four channels. It is clear from Fig.~\ref{fig:swdl_vs_dlsw_1link} that \textsc{swap-distill} has a higher probability of success compared to \textsc{distill-swap} when the initial age of the links is less than $m^\star/2$, beyond which it fails to succeed. Unlike distillation, success probability of entanglement swapping is independent of the quality of the used links. This is the primary reason for an asymmetry in the success rates of \textsc{swap-distill} and \textsc{distill-swap} policies. The probability of at least one entanglement swap being successful increases exponentially with the number of attempts available, whereas the probability of all distillations being successful falls with an increasing number of attempts, irrespective of the ages of the links involved. Thus, it is clear that if the aim is to generate at least one end-to-end link as quickly as possible, i.e., if the aim is to minimize waiting time, the \textsc{swap-distill} is the policy of choice. At the same time, \textsc{distill-swap} succeeds until a larger initial age threshold; thus, if the elementary links are of low fidelity to begin with, distilling first might be judicious. Also, in terms of fidelity enhancement (age reduction), \textsc{distill-swap} performs better. These conclusions are also consistent with the results in Fig.~\ref{fig:swdl_vs_dlsw} and \ref{fig:swdl_vs_dlsw_pumping}.

\section*{Supplementary Note 4: Other policies for multiplexing based on \textsc{swap-asap}} \label{sec:appendix6}

Here we explore multiplexing based \textsc{swap-asap} policies other than the two paradigmatic versions presented in the main text, namely FN \textsc{swap-asap}, which prioritizes forming the longest possible links, and SN \textsc{swap-asap}, which prioritizes forming links with high fidelity. A natural question to ask is how well do policies that are a mixture of these approaches perform. As we have seen that FN outperforms SN in terms of waiting time, but SN outperforms FN in terms of fidelity, the above question is also motivated by the aim to optimize some linear combination of waiting time and fidelity. SN and FN can be mixed in two ways, yielding two different families of policies:
\begin{enumerate}
    \item \emph{Mixed-weight} \textsc{swap-asap}: Linear combination of link length and link age can be used to assign the ranks of links for entanglement swapping. Let $R$ be the metric to decide the rank of a link. Then, a continuous parameter $a \in [0,1]$ could be used to define a family of policies as $R = a|i-j| + (1-a)m$, where $i,j$ are the end-nodes of a link and $m$ is its age. The limiting cases $a=0$ and $a=1$ correspond to the SN and FN \textsc{swap-asap} policies, respectively.
    
    \item \emph{Random-priority} \textsc{swap-asap}: Keeping the weights to be either the ages of links or their lengths, we could randomly change the priority from SN to FN between different time steps. This would also give a family of policies parameterized by a random number $r \in [0,1]$, which would determine the relative frequencies of choosing the SN and FN policies. Again, we choose $r$ such that $r=0$ and $r=1$ would denote the limiting cases, SN and FN, respectively.
\end{enumerate}
Having defined these two families of policies, it is interesting to look at their performance in terms of average waiting time and age of an end-to-end link, as a function of the parameter $a$. This is shown in Fig.~\ref{fig:mixed_policy}.

\begin{figure}
        \centering
        \includegraphics[width=0.4\textwidth]{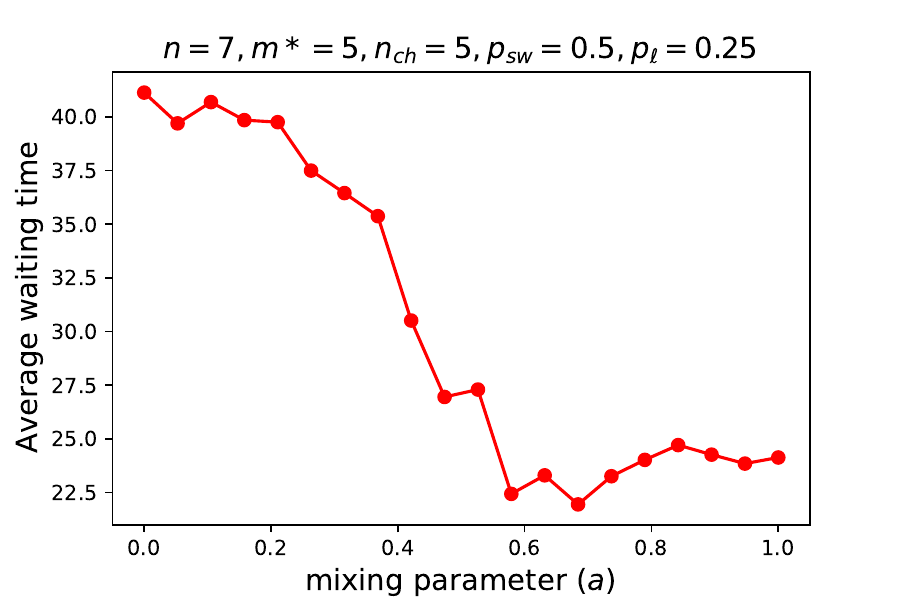}\quad
        \includegraphics[width=0.4\textwidth]{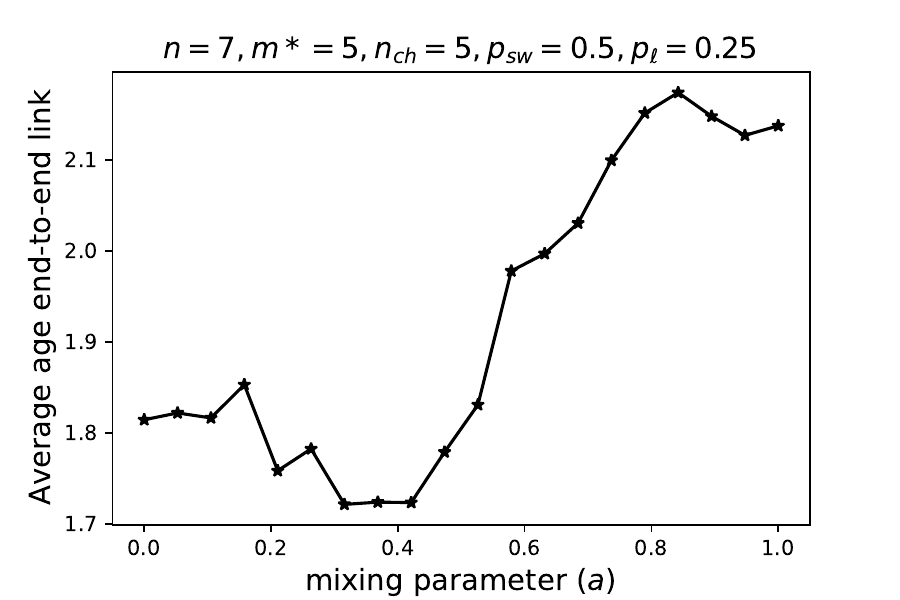}\\[1ex]
        \includegraphics[width=0.4\textwidth]{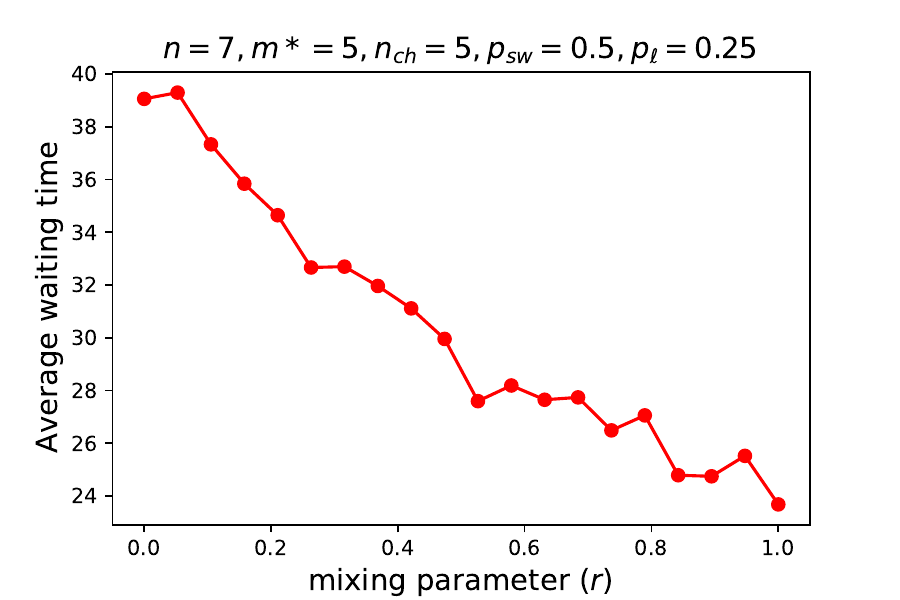}\quad
        \includegraphics[width=0.4\textwidth]{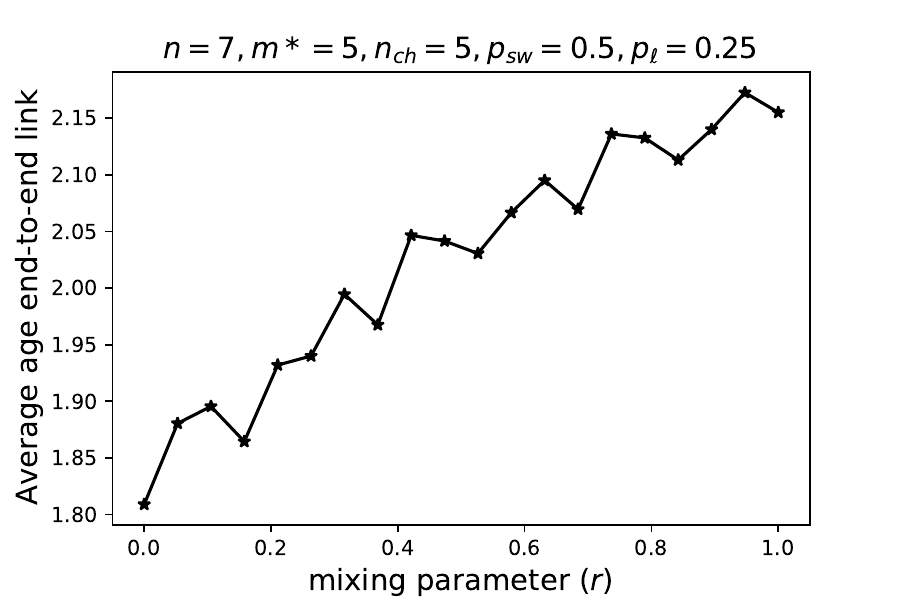}
        \caption{\textbf{Performance of other multiplexing policies.} (Top) Performance of mixed-weight \textsc{swap-asap} multiplexing policies in terms of average waiting time and age of an end-to-end link, as a function of the mixing parameter $a$. FN ($a=1$) is nearly optimal for waiting time reduction, but a nearly equal mixture of SN and FN ($a=0.4$) seems to minimize the average age of an end-to-end link. (Bottom) The figures of merit for the random-priority \textsc{swap-asap} policy show a much more monotonic trend. FN ($r=1$) is optimal for waiting time and SN ($r=0$) is optimal for average age.}
        \label{fig:mixed_policy}
\end{figure}

Here, we also want to consider the question of optimality of SN and FN \textsc{swap-asap} in terms of reducing the average age and waiting time for an end-to-end link. Of course, a full optimization could be done by explicitly defining the figure of merit as a cost function and then using tools such as reinforcement learning or linear programs to find the optimal entanglement distribution policy within our MDP framework. As was shown in previous works like Refs.~\cite{haldar2023fastreliable, SvL21}, the number of MDP states grows exponentially with the number of links and also with the cutoff. Thus, finding the optimal policy explicitly in the multiplexing case is quite challenging, and we leave that as an aim for future work. Here, we ask a limited question. Let us consider a policy that intends to match links for entanglement swapping in such a way that the total length of the virtual links (if entanglement swaps were to be successful) is maximum. We define this policy as \textit{FN optimal}, or FN-OPT policy. In other words, FN-OPT tries all pairings of links for entanglement swapping, calculates the sum of the lengths of the anticipated virtual links, and then chooses the set of pairings that maximizes this sum. Such a pairing could be different from the pairings provided by the usual FN policy, in which the longest link on each side is paired with one another, followed by the second longest, and so on. For example, if we have two links of ages $7$ and $3$ on each side on a node, then two entanglement swaps can be performed in total. Let us say that the cutoff is 10. The standard FN pairing would be $(7,7)$ and $(3,3)$. However, this would mean that the first entanglement swap will never be attempted, because we know that it will fail (as described earlier, the swapped link has age $m_1+m_2$, and we only perform swaps if $m_1+m_2<m^\star$). Therefore, the sum of lengths of anticipated virtual links in this case will be $3+3 = 6$. FN-OPT, on the other hand, would choose the pairings $(7,3)$ and $(7,3)$, because the sum of anticipated lengths would be $10 + 10 = 20$. The intuition for considering FN-OPT is the same as the standard FN policy, i.e., prioritizing forming long links reduces the average waiting time. In a similar way, an \textit{SN-OPT policy} can be defined, which would minimize the sum of the ages of the anticipated links. 

\begin{figure}
    \centering
    \includegraphics[width=0.4\textwidth]{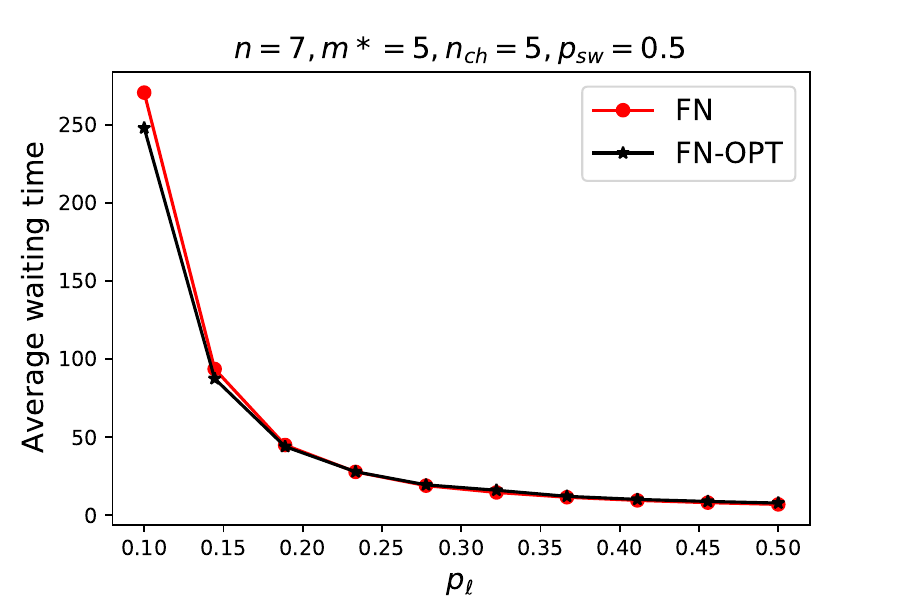}\quad
    \includegraphics[width=0.4\textwidth]{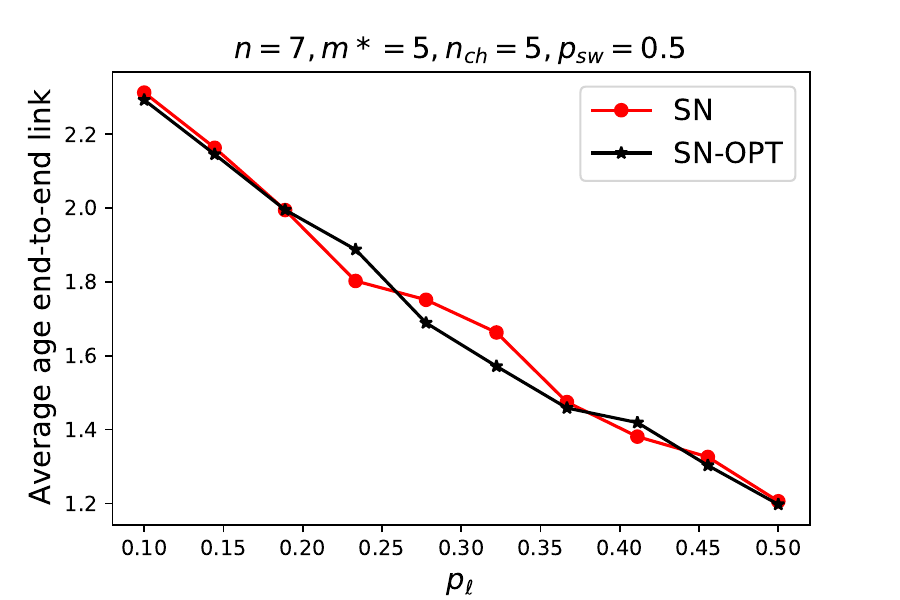}
    \caption{\textbf{Performance of our multiplexing policies compared to their optimal versions.} We compare the performance of FN to FN-OPT and SN to SN-OPT in terms of average waiting time and average age of an end-to-end link. The much simpler and standard FN and SN policies are nearly optimal in both cases.}
    \label{fig:optimal_policy}
\end{figure}

In Fig.~\ref{fig:optimal_policy}, we compare the performance of FN to FN-OPT and SN to SN-OPT. We can see that the much simpler and standard FN and SN approaches are nearly optimal, in the sense of optimality described above. It is also important to note that the optimal pairings under FN-OPT and SN-OPT will become increasingly complicated to establish with an increasing number of channels, whereas finding the pairings in SN and FN will still remain equally simple.

\end{document}